\newcommand\vldbdoi{XX.XX/XXX.XX}
\newcommand\vldbpages{XXX-XXX}
\newcommand\vldbvolume{17}
\newcommand\vldbissue{1}
\newcommand\vldbyear{2024}
\newcommand\vldbauthors{\authors}
\newcommand\vldbtitle{\shorttitle}
\newcommand\vldbpagestyle{plain}
\begin{document}

\title{First Tree-like Quantum Data Structure: Quantum B+ Tree}

\author{Hao Liu}
\affiliation{%
  \institution{The Hong Kong University of Science and Technology}
}
\email{hliubs@cse.ust.hk}

\author{Xiaotian You}
\affiliation{%
  \institution{The Hong Kong University of Science and Technology}
}
\email{xyouaa@ust.hk}

\author{Raymond Chi-Wing Wong}
\affiliation{%
  \institution{The Hong Kong University of Science and Technology}
}
\email{raywong@cse.ust.hk}

\begin{abstract}
  Quantum computing is a popular topic in computer science, which has recently attracted many studies in various areas such as machine learning, network and cryptography. However, the topic of quantum data structures seems long neglected. There is an open problem in the database area: Can we make an improvement on existing data structures by quantum techniques? Consider a dataset of key-record pairs. Given an interval as a query range, a B+ tree can report all the records with keys within this interval, which is called a \emph{range query}. A classical B+ tree answers a range query in $O(\log N +k)$ time, where $N$ is the total number of records and the output size $k$ is the number of records in the interval. It is asymptotically optimal in a classical computer but not efficient enough in a quantum computer, because it is expected that the execution time and the output size are linear in a quantum computer. 

  In this paper, we propose the quantum range query problem. Different from the classical range queries, a quantum range query returns the range query results in \emph{quantum bits}, which has broad potential applications due to the foreseeable future advance of quantum computers and quantum algorithms. To the best of our knowledge, we design the first tree-like quantum data structure called the \emph{quantum B+ tree}. Based on this data structure, we propose a hybrid quantum-classical algorithm to do the range search. It answers a static quantum range query in $O(\log_B N)$ time, which is asymptotically optimal in quantum computers. Since the execution time does not depend on the output size (i.e., $k$, which could be as large as $O(N)$), it is significantly faster than the classical data structure. Moreover, we extend our quantum B+ tree to answer the dynamic and $d$-dimensional quantum range queries efficiently in $O(\log^2_B N)$ and $O(\log^d_B N)$ time, respectively. Our experimental results show that our proposed quantum data structures achieve up to 1000x improvement in the number of memory accesses compared to their classical competitors.

  \if 0
  Motivated by the different scenarios in real-world applications, we study the quantum range query problems in three different cases. The first case is the \emph{static quantum range query}, where the dataset is immutable. The second case is the \emph{dynamic quantum range query}, where insertions and deletions are supported. The third case is the \emph{high-dimensional static quantum range query}, where the keys are high-dimensional points. Firstly, we propose the static quantum B+ tree that answers a static quantum range query in $O(\log_B N)$ time, which is asymptotically optimal in quantum computers. Since the execution time does not depend on output size (i.e., $k$) and $k=O(N)$, it is exponentially faster than the classical data structure. To achieve this significant improvement on range queries, we design a hybrid quantum-classical algorithm to do the range search on the static quantum B+ tree. Secondly, we extend it to a dynamic quantum B+ tree. The dynamic quantum B+ tree performs an insertion or a deletion in $O(\log_B N)$ time and answers a dynamic quantum range query in $O(\log^2_B N)$ time. Thirdly, based on the static quantum B+ tree, we propose the static quantum range tree which answers a $d$-dimensional static quantum range query in $O(\log^d_B N)$ time, which cannot be achieved by any classical data structure. In the experiment, we did simulations to show that to answer a range query, the three quantum data structures are up to $1000\times$ faster than their classical competitors. To the best of our knowledge, the quantum B+ tree is the first tree-like quantum data structure that achieves better complexity than classical data structures.
  \fi
\end{abstract}

\maketitle

\if 0
\pagestyle{\vldbpagestyle}
\begingroup\small\noindent\raggedright\textbf{PVLDB Reference Format:}\\
\vldbauthors. \vldbtitle. PVLDB, \vldbvolume(\vldbissue): \vldbpages, \vldbyear.\\
\href{https://doi.org/\vldbdoi}{doi:\vldbdoi}
\endgroup
\begingroup
\renewcommand\thefootnote{}\footnote{\noindent
This work is licensed under the Creative Commons BY-NC-ND 4.0 International License. Visit \url{https://creativecommons.org/licenses/by-nc-nd/4.0/} to view a copy of this license. For any use beyond those covered by this license, obtain permission by emailing \href{mailto:info@vldb.org}{info@vldb.org}. Copyright is held by the owner/author(s). Publication rights licensed to the VLDB Endowment. \\
\raggedright Proceedings of the VLDB Endowment, Vol. \vldbvolume, No. \vldbissue\ %
ISSN 2150-8097. \\
\href{https://doi.org/\vldbdoi}{doi:\vldbdoi} \\
}\addtocounter{footnote}{-1}\endgroup

\fi

\setlength{\textfloatsep}{1pt}
\setlength{\intextsep}{1pt}
\setlength{\floatsep}{1pt}
\setlength{\abovecaptionskip}{1pt}
\setlength{\belowcaptionskip}{1pt}
\setlength{\abovedisplayskip}{1pt}
\setlength{\belowdisplayskip}{1pt}

\section{Introduction}\label{intro}
Consider a dataset where each item in this dataset is in the form of a key-record pair. The range query problem, which is to report all the items with keys within a given query range, has been a longstanding problem extensively studied and applied in broad applications. For example, a teacher may need to list all the students who obtained 40-60 marks in an exam, a smartphone buyer may need to list all the smartphones that sit around \$200-\$400, and a traveler may want to list all the nearby spots. 
{\color{black}
Apart from directly returning all the results to the user, range queries have also been applied as a subroutine of many problems in various areas. One typical example is the recommendation systems~\cite{gupta2013location,covington2016deep}, where range queries are commonly performed to extract a relatively small set of candidates first from the entire dataset as per user's request and then a recommendation algorithm is executed to find the top recommendations from the candidates.}

{\color{black}
Consider a movie dataset $D$ containing a list of movies each described as a key-record pair where the key is the release date and the record is a feature vector which represents some attributes such as name, genre and cast. Assume that Alice wants to find an interesting movie of 1990s.
We first perform a range query of finding a candidate set $D'$ of all the movies with release date from 1990 to 1999 and then give recommendations based on $D'$.}

\if 0 
Let us create a scenario where a user named Alice wants to find an interesting movie. Each movie is described as a key-record pair where the key is the release date and the record is a feature vector which contains some attributes such as name, genre and cast. Alice may be interested in a movie of the 1990s. In the literature \cite{gupta2013location,covington2016deep}, we usually first find all the movies from 1990 to 1999 as the candidates and then give recommendations based on these candidates. The interval $[1990, 1999]$ given by Alice is the query range, where $1990$ is the lower bound and $1999$ is the upper bound. A range query returns all the movies with release dates within the query range.
\fi

{\color{black}
To answer range queries, many data structures \cite{bayer2002organization,comer1979ubiquitous,o1996log} have been proposed in classical computers to store the key-record pairs. One representative and widely-used data structure is the B+ tree~\cite{comer1979ubiquitous}.
It answers a range query in $O(\log N +k)$ time, where 
$N$ is the total number of key-record pairs and the output size $k$ is the number of records in the query range. Based on~\cite{yao1981should}, the B+ tree is asymptotically optimal for range queries in classical computers.
Note that for all range queries, the linear time of generating all the $k$ results is inevitable,
which is not efficient enough when range queries are used as subroutines of algorithms in quantum computers.}

Recently, quantum algorithms have attracted a lot of attention. Many quantum algorithms \cite{shor1994algorithms,grover1996fast,ruan2017quantum,adhikary2020supervised,li2021vsql,kerenidis2017quantum} have been proposed and are expected to show quadratic or even exponential speedup compared to classical algorithms, and thus many linear-complexity problems can be solved in sub-linear time in quantum computers. 
{\color{black}
For instance, a recent quantum recommendation system~\cite{kerenidis2017quantum} shows poly-logarithm time complexity to the input size.
However, when we consider the scenario where the recommendation is made among the $k$ candidates in a user-specified query range, 
the potential of quantum algorithms is limited since all the efficiency of the sublinear-time quantum algorithms is ruined by the linear time of generating all the $k$ candidates of traditional range queries.}


\if 0 
To answer range queries, many data structures \cite{bayer2002organization, comer1979ubiquitous, o1996log} have been proposed to store the key-record pairs. One representative data structure is the B+ tree. The B+ tree \cite{comer1979ubiquitous} is a widely-used existing data structure that answers a range query in $O(\log N +k)$ time, where 
$N$ is the total number of key-record pairs and the output size $k$ is the number of records in the query range. Based on~\cite{yao1981should}, the B+ tree is asymptotically optimal for range queries in classical computers. However, efficiency issues occur in quantum computers. Recently, quantum algorithms have attracted a lot of attention. Many quantum algorithms \cite{shor1994algorithms, grover1996fast} have been proposed and are expected to show quadratic or even exponential speedup compared to classical algorithms, and thus many linear-complexity problems can be solved in sub-linear time in quantum computers. 
Note that among existing methods, we need $O(\log N +k)$ time to do the range query, which is linear with the output size. This means that even if we can give recommendations in a sub-linear time with quantum algorithms \cite{ruan2017quantum, adhikary2020supervised, li2021vsql}, a linear time to generate the candidates is inevitable. Therefore, this linear-time range query will ruin all the advantages of the sublinear-time quantum algorithm. In fact, any classical range query algorithm has the same issue, which limits the potential of quantum algorithms in real-world applications.
\fi

Motivated by the critical limitations of the classical range query in quantum computers,
{\color{black}we propose the problem called the \emph{quantum range query}, which returns the answer in \emph{quantum bits}, since we notice that quantum algorithms normally have the input in the form of quantum bits.}
Specifically, the quantum range query $QUERY(x, y)$ does not return a list of key-record pairs, but returns the quantum bits in a \emph{superposition} of all the key-record pairs with keys within the query range $[x, y]$, where the superposition is the ability of a quantum system to be in multiple states simultaneously. For example, consider a movie dataset $\{(key_0, rec_0),\cdots,$ $(key_{N-1}, rec_{N-1})\}$, where $N$ is the number of movies and each movie is represented as a key-record pair as mentioned above. Assume that we can construct a data structure in the quantum computer to store all the key-record pairs. If Alice wants to obtain all the movies of the 1990s, a quantum range query $QUERY(1990, 1999)$ will search on the data structure and return the quantum bits in a superposition of all the movies $(key_i, rec_i)$ whose $key_i$ is within the interval $[1990, 1999]$. Motivated by the different scenarios in real-world applications, we study the quantum range query problems in three different cases. The first case is that the dataset is immutable, so the problem is called the \emph{static quantum range query}. The second case is that insertions and deletions are supported in the dataset, and the problem is called the \emph{dynamic quantum range query}. The third case is that the keys in key-record pairs are multi-dimensional points (e.g., the locations of restaurants), so the problem is called the \emph{multi-dimensional quantum range query}.

\if 0 
Motivated by the critical limitations of the classical range query in quantum computers, we want to combine the advantages of data structures and quantum computation. Therefore, we propose the problem called the \emph{quantum range query}, which returns the answer in \emph{quantum bits}. Specifically, the quantum range query $QUERY(x, y)$ does not return a list of key-record pairs, but returns the quantum bits in a \emph{superposition} of all the key-record pairs with keys within the query range $[x, y]$, where the superposition is the ability of a quantum system to be in multiple states simultaneously. For example, consider a movie dataset $\{(key_0, rec_0),\cdots,$ $(key_{N-1}, rec_{N-1})\}$, where $N$ is the number of movies and each movie is represented as a key-record pair as mentioned above. Assume that we can construct a data structure in the quantum computer to store all the key-record pairs. If Alice wants to obtain all the movies of the 1990s, a quantum range query $QUERY(1990, 1999)$ will search on the data structure and return the quantum bits in a superposition of all the movies $(key_i, rec_i)$ whose $key_i$ is within the interval $[1990, 1999]$. Motivated by the different scenarios in real-world applications, we study the quantum range query problems in three different cases. The first case is that the dataset is immutable, so the problem is called the \emph{static quantum range query}. The second case is that insertions and deletions are supported in the dataset, and the problem is called the \emph{dynamic quantum range query}. The third case is that the keys in key-record pairs are multi-dimensional points (e.g., the locations of restaurants), so the problem is called the \emph{multi-dimensional quantum range query}.
\fi

{\color{black}
The quantum range queries have many potential applications.
(1) In relational database queries, it is common to have a numerical query range as a query condition (e.g., finding the movie with the highest ranking for release date between 1990 and 1999). 
To execute such query, a range query 
is commonly considered for query optimization. Specifically, a range query of retrieving all movies released between 1990 and 1999 is first performed to narrow down the search space to a small subset, and then finding the highest ranking movie could be executed efficiently on the subset. In the literature, some quantum algorithms have been proposed to support efficient database queries such as finding the database record that has the highest value on a certain field. When enabling query optimization, if we use a quantum range query to obtain the narrowed search space, the result of the quantum range query (in quantum bits) could be applied in those quantum algorithms for fast database queries.}
(2) In a quantum recommendation system as we mentioned previously,
a range query is first used as a filter to find a small set of candidates from the entire dataset,
and then a quantum recommendation algorithm~\cite{kerenidis2017quantum}
is used to make the top recommendation from the candidates.
A quantum range query, instead, returns a superposition of the candidates
which can be \emph{directly} applied to these \emph{existing} quantum recommendation algorithms.
(3) Data binning~\cite{dougherty1995supervised} is a popular approach
to improve the accuracy of various machine learning algorithms~\cite{berg2021deep, xue2017efficient},
which applies range queries to group similar items within a query range (i.e., a data bin) together.
Since it is foreseeable that similar quantum machine learning algorithms~\cite{adhikary2020supervised}
could also benefit from data binning,
the quantum range queries give the quantum binning results where the items in each bin are output
in the form of a superposition that can be \emph{readily} used in quantum machine learning algorithms.
{\color{black}
(4) Campos et.al. \cite{campos2010simple} studied how to use KNN to do movie recommendations. They found that it can improve the accuracy to only consider the rating datasets from the last month before the recommendation time in different years.  That means they need to select the records produced in the same month but in different years. Given a B+ tree containing all the movies sorted by the dates in the year, the calculation costs $O(\log N + N/12)$ time since the range query returns $N/12$ records on average. Assume that we can use a quantum range query to select the $N/12$ records in $O(\log N)$ time. Note that there is an existing $O(d^3)$ quantum KNN algorithm \cite{ruan2017quantum} where $d$ is the dimension of the feature vector. Therefore, the whole process only costs $O(\log N + d^3)$. In most cases, $d$ is much smaller than $N$. Assuming $d=O(\log N)$, we can find that the quantum time-periodic-biased KNN is exponentially faster than its classical competitor.}

Besides the above applications, we expect that it will be common for quantum algorithms
to have the input and output in the form of superpositions.
{\color{black}
For example, the HHL algorithm \cite{harrow2009quantum}
returns the answer to a linear system of equations in a superposition,
and it becomes a subroutine of the quantum SVM \cite{rebentrost2014quantum}.}
Therefore, we believe that in the future,
more quantum algorithms will benefit from the quantum range queries
which return the query results in superpositions.
In the database area, we expect the emergence of more quantum database search algorithms
(e.g., the quantum top-$k$ query), which could leverage the quantum data structures for query optimization
(e.g., applying the quantum range query to narrow down the searching space).

The quantum range query problem has two distinctive characteristics.
The first characteristic is that it allows the utilization of data structures,
which corresponds to building an index to accelerate the database queries
that has been a common approach in the database area.
Existing quantum algorithms~\cite{grover1996fast, boyer1998tight, grover2005partial, durr1996quantum}
do not consider a data structure and only focus on searching with unstructured data,
and thus they cannot solve the quantum range query problem efficiently.
Among them, the best quantum range query algorithm adapted from~\cite{boyer1998tight}
returns the result of $k$ items in $O(\sqrt{Nk})$ time,
which is inefficient when the dataset size $N$ grows large.

\if 0
The quantum range query problem has two distinctive characteristics. The first distinctive characteristic is that it allows the utilization of data structures, since existing quantum search algorithms do not use data structures. There are many quantum search algorithms \cite{grover1996fast, boyer1998tight, grover2005partial, durr1996quantum}, but all of them focused on the unstructured search, where the dataset is an unsorted list and the key-record pairs are not stored in a data structure. According to the study in \cite{boyer1998tight}, we can obtain a quantum range query algorithm which returns $k$ key-record pairs in $O(\sqrt{Nk})$ time. Although the cost of a range query grows sublinearly with $k$, this quantum range query algorithm cannot outperform its classical competitors such as the B+ tree. The main reason is that the existing quantum search algorithms do not take the advantage of data structures and they only discuss search problems in an unsorted list. However, the unstructured search hardly occurs in real-world applications. In the quantum range query problem, we allow to use a data structure to keep the dataset sorted.
\fi

The second characteristic is the use of quantum computation to handle superpositions efficiently.
As mentioned previously, the $O(\log N + k)$ time complexity for a range query in classical computer
is already asymptotically optimal,
where the $O(k)$ cost of listing out all the $k$ results has no chance to be improved.
In the quantum range query problem, however, since we aim to obtain the $k$ results in the form of a superposition,
it is possible to eliminate this $O(k)$ cost and thus achieve a better time complexity
with the techniques in quantum computation.

\if 0
The second distinctive characteristic is the use of quantum computation. Motivated by the quantum algorithms in other research fields, we have a question: Can we use the quantum techniques on the data structures to improve the complexity? The existing classical data structures \cite{bayer2002organization, comer1979ubiquitous, o1996log} are asymptotically optimal in classical computers, but there is a lot of room for improvement in quantum computers mentioned above. Since the number of quantum states increases exponentially with the increasing number of quantum bits, the quantum computer is supposed to help a lot of tasks. Quantum search should also be considered on data structures, but this topic has not been studied before.
\fi

Motivated by this, we propose the \emph{quantum B+ tree},
which is the first tree-like quantum data structure to the best of our knowledge.
Since the B+ tree~\cite{comer1979ubiquitous} is one of the most fundamental and widely-used data structures,
we believe that it is suitable to start a new world of quantum data structures with the B+ tree.
We design our quantum B+ tree with two components, the classical component and the quantum component.
The classical component follows the classical B+ tree,
which allows us to leverage its effective balanced tree structure.
The quantum component stores a concise ``replication'' of the hierarchical relationships in the B+ tree
in the quantum memory, which could load the relationships in the form of superpositions
efficiently in quantum computers due to quantum parallelism.\looseness=-1

Empowered by the two-component design of our quantum B+ tree,
we propose a hybrid quantum-classical algorithm called the \emph{Global-Classical Local-Quantum} (GCLQ) search
to solve the quantum range query problem.
It involves two main steps.
The first step is called the \emph{global classical search},
which finds a very small number (i.e., at most two) of candidate nodes from the classical B+ tree.
It is guaranteed that all the relevant results are covered in the candidate nodes
and account for a significant amount under the candidate nodes
(i.e., at least $\frac{1}{8B}$ of the items under candidate nodes are the relevant results).
Meanwhile, the global classical search is very efficient owing to the effective structure of B+ tree.
The second step is called the \emph{local quantum search},
which returns a superposition of all the exact query results from the candidate nodes
with efficient quantum parallelism techniques in the quantum memory.
As a result, the time complexity of our proposed GCLQ search is $O(\log_B N)$,
which is asymptotically optimal in quantum computers.
This improves the optimal classical result by reducing the $O(k)$ cost and
is much more efficient than the $O(\sqrt{Nk})$ time complexity of the existing quantum algorithms
without using any data structure.

We also propose two extensions of our quantum B+ tree
to solve the dynamic quantum range query and the multi-dimensional quantum range query.
Our two-component design with a classical B+ tree structure retained as a ``prototype''
allows us to flexibly extend our quantum B+ tree to the B+ tree variants in classical computers.
As such, we extend our quantum B+ tree to the \emph{dynamic quantum B+ tree}
by adapting the logarithmic method~\cite{bentley1980decomposable},
which supports inserting a new item into the tree and deleting an existing item from the tree.
All the insertions and deletions are also replicated to the quantum component of the tree,
which is efficient due to the conciseness of the quantum component.
We show that each insertion and deletion can be done in $O(\log_B N)$ time,
and the dynamic range query can be done in $O(\log^2_B N)$ with the dynamic quantum B+ tree.
To handle the multi-dimensional quantum range query,
we also propose the \emph{quantum range tree}
based on the classical range tree~\cite{bentley1978decomposable}
(which handles a classical $d$-dimensional range query in $O(\log^d_B N + k)$ time).
This is similar to the mechanism of constructing the quantum B+ tree from a classical B+ tree.
The multi-dimensional quantum range query can be answered in $O(\log^d_B N)$ time,
which also improves the classical $O(\log^d_B N + k)$ time complexity by the $O(k)$ cost.

{\color{black}
Therefore, we first propose the \emph{static quantum B+ tree}, which is the first tree-like quantum data structure to our best knowledge. We choose the B+ tree \cite{comer1979ubiquitous} as the first data structure to study in quantum computers. Since the B+ tree is the most fundamental and widely-used data structure, it is considered the most suitable one to open a new world of quantum data structures. To take the full advantages of quantum parallelism, we design a hybrid quantum-classical algorithm to answer the static quantum range query. The hybrid design is very popular, especially in quantum machine learning \cite{kim2001batch}. According to \cite{abohashima2020classification}, many studies such as \cite{adhikary2020supervised, chakraborty2020hybrid, benedetti2019generative,schuld2019quantum,schuld2020circuit, ruan2017quantum, mitarai2018quantum} used hybrid quantum-classical algorithms to do machine learning, since this design can reduce the circuit depth (which signifies the total number of instructions in the quantum algorithm) so that high performance can be obtained. Specifically, the same B+ tree is stored both in a quantum computer and a classical computer. The classical component of the tree-building algorithm will bulk load the data and construct the B+ tree in a manner like \cite{kim2001batch}. We formally define the concept of a quantum memory in Section \ref{sec:problem}, which is used to store mappings between bit-strings in a quantum computer. Each modification of the B+ tree in the classical computer will reflect in the quantum memory correspondingly. For example, if we add an edge from Node 2 to Node 3 in the B+ tree in the classical computer, we also add a mapping from 2 to 3 in the quantum memory. In the quantum computer, all the mappings from the nodes in the upper level to nodes in the lower level are maintained in the quantum memory. Our proposed quantum range query algorithm has the following two main steps.}
\if 0
\begin{enumerate}
	\item \emph{Global classical search}: The major goal is to find the candidate nodes in the B+ tree in the classical computer, where all the key-record pairs within the query range are stored under the candidate nodes, and at least $\frac{1}{8B}$ of the key-record pairs stored under these nodes are within the query range. It searches the B+ tree stored in the classical computer level by level, and returns a list of candidate nodes. Then, the candidate nodes will be passed to the local quantum search. This step will speedup the local quantum search, since to do the local quantum search from the candidate nodes has a lower circuit depth compared to search from the root node.
	\item \emph{Local quantum search}: The major goal is to obtain a superposition of all the key-record pairs within the query range under the candidate nodes. Starting from the candidate nodes, it searches the B+ tree stored in the quantum computer to access all the leaves under these tree nodes, and returns a superposition of all the key-record pairs under the candidate nodes. Since not all the key-record pairs in the superposition are within the query range, it then does a post-selection to obtain the superposition of the key-record pairs within the query range, and returns the superposition as the answer. This step makes the quantum range query more efficient than any classical range query due to quantum parallelism.
\end{enumerate}
We proved that a static quantum B+ tree answers a static quantum range query in $O(\log_B N)$ time and it is asymptotically optimal in quantum computers. The static quantum B+ tree achieves an exponential speedup compared to a classical B+ tree. We also discuss the dynamic quantum range query. The classical B+ tree accepts a key-record pair stored in the classical memory as the input of a insertion/deletion. By performing operations in the classical memory, this key-record pair is inserted/deleted. The same mechanism can be applied to a quantum B+ tree, but it is performed on the quantum memory (instead of the classical memory). A \emph{dynamic quantum B+ tree} accepts a key-record pair stored in the quantum memory, and performs operations in the quantum memory to insert/delete this key-record pair. During the insertion/deletion operation, the steps performed on the dynamic quantum B+ tree are conceptually the same as the steps performed on the classical B+ tree. Since a dynamic quantum B+ tree can perform insertions, deletions and range queries, it has all functionalities of a classical B+ tree, where the main difference is that the dynamic quantum B+ tree performs operations in the quantum memory and the classical B+ tree performs operations in the classical memory. For any key-record pair $(key, rec)$, we use $INSERT(key, rec)$ or $DELETE(key, rec)$ to insert or delete it in $O(\log_B N)$ time, and the complexity of a dynamic quantum range query is $O(\log^2_B N)$. Furthermore, we extend the quantum B+ tree to a $d$-dimensional static \emph{quantum range tree}, which answers a $d$-dimensional static quantum range query in $O(\log^d_B N)$ time.
\fi

\if 0
To the best of our knowledge, we are the first to study the quantum range query problem, and we are the first to propose a tree-like quantum data structure. We consider that the first tree-like quantum data structure will open a new world of quantum data structures. Unlike the classical range query, a quantum range query returns answers in quantum bits. We consider the quantum output very useful, and discuss the real-world utilization in Section \ref{application}.
\fi

In summary, our contributions are shown as follows.
\begin{itemize}
    \item We are the first to study the quantum range query problems.
    \item We are the first to propose a tree-like quantum data structure, which is the quantum B+ tree.
    \item We design a hybrid quantum-classical
    algorithm that can answer a quantum range query on a quantum B+ tree in $O(\log_B N)$ time, which does not depend on the output size and is asymptotically optimal in quantum computers.
    \item We further extend the quantum B+ tree to the dynamic quantum B+ tree that supports insertions and deletions in $O(\log_B N)$ time, and the complexity of the quantum range query on the dynamic quantum B+ tree is $O(\log^2_B N)$.
    \item We also extend the quantum B+ tree to the quantum range tree, which answers a $d$-dimensional quantum range query in $O(\log^d_B N)$ time. 
    \item We conducted experiments to confirm the exponential speedup of the quantum range queries on real-world datasets. We considered both the time-based range query and the location-based range query, which are widely used in real-world applications. In our experiments, our quantum data structure is up to $1000\times$ faster than the classical data structure.
\end{itemize}

The rest of the paper is organized as follows.
In Section~\ref{sec:pre}, we first introduce some basic knowledge used in this paper about quantum algorithms.
We formally define the quantum range query problems in Section \ref{sec:problem}.
In Section~\ref{sec:alg}, we show the design of the quantum B+ tree and introduce our algorithm to answer the quantum range query.
In Section~\ref{sec:dyn_md}, we extend the quantum B+ tree to the dynamic and multi-dimensional versions.
Section \ref{exp} presents our experimental studies. 
In Sections~\ref{sec:related} and~\ref{sec:con}, we introduce the related work and conclude our paper, respectively.

\section{Preliminaries}\label{sec:pre}

A \emph{bit} is a basic unit when we store data in the memory or on disks.
In classical computers, a bit has two \emph{states} (i.e., 0 and 1).
A bit in quantum computers is known as a \emph{quantum bit},
which is called a \emph{qubit} in this paper for simplicity.
Similar to a bit in classical computers, a qubit also has states.
Following~\cite{dirac1939new, bayer2002organization},
we introduce the ``Dirac'' notation (i.e., ``$\ket \cdot$'') to describe the states of a qubit.
Specifically, we write an integer in $\ket \cdot$ (e.g., $\ket 0$) to describe a \emph{basis state},
and we write a symbol in $\ket \cdot$ (e.g., $\ket q$) to describe a \emph{mixed state}
(more formally known as a \emph{superposition}). 
Note that the symbol in a superposition of a qubit is used to denote this qubit.
That is, we say that the qubit $q$ has the superposition $\ket q$.\looseness=-1

We have two basis states of a qubit, $\ket 0$ and $\ket 1$,
which correspond to state 0 and state 1 of a classical bit, respectively.
Different from a classical bit, a qubit $q$ could have a state ``between'' $\ket 0$ and $\ket 1$,
which is described by a superposition $\ket q$.
Specifically, superposition $\ket q$ is represented as a linear combination of the two basis states:
$\ket q = \alpha \ket 0 + \beta \ket 1$,
where $\alpha$ and $\beta$ are two complex numbers called the \emph{amplitudes}
and we have $|\alpha|^2 + |\beta|^2 = 1$ 
($|\alpha|^2$ denotes the absolute square of a complex number $\alpha$).
In quantum computers, instead of directly obtaining the value of a bit,
we \emph{measure} a qubit $q$, which will ``collapse'' its superposition $\ket q$
and obtain the result state 0 with probability $|\alpha|^2$ or state 1 with probability $|\beta|^2$
(note that the sum of the probabilities is equal to 1).
For instance, let $q$ be a qubit with superposition $\ket q = (0.36+0.48i) \ket 0 + (0.48-0.64i) \ket 1$.
If we measure $q$, we can obtain 0 with probability $|0.36+0.48i|^2=0.36$ or 1 with probability $|0.48-0.64i|^2=0.64$.

Similar to a register in a classical computer,
we define a \emph{quantum register} to be a collection of qubits.
Consider a quantum register with two qubits.
The basis states of the two qubits will contain all the combinations of the basis states of single qubits,
i.e., $\ket{0} \ket{0}$, $\ket{0} \ket{1}$, $\ket{1} \ket{0}$ and $\ket{1} \ket{1}$.
If we measure one of the qubits, this measurement may change the state of the other qubit.
Such phenomenon is widely known as \emph{quantum entanglement} \cite{schrodinger1935discussion}.

Consider the following example.
For simple illustration, all the amplitudes in this example 
only have real parts (of complex numbers).
Let $\psi$ be a quantum register with two qubits $q_0$ and $q_1$,
where the measurement of $q_0$ will change the state of $q_1$.
Specifically, $\ket{q_1} = \frac{1}{\sqrt{2}}\ket 0 + \frac{1}{\sqrt{2}}\ket 1$,
which is measured to be 0 or 1 with probability both equal to $\frac{1}{2}$.
If 0 is obtained, the state of $q_0$ will be changed to $\ket{q_0}=0.6 \ket 0 + 0.8 \ket 1$,
and otherwise, it will be changed to $\ket{q_0}=\frac{1}{\sqrt{2}}\ket 0+\frac{1}{\sqrt{2}}\ket 1$.
As such, the state of $q_0q_1$ (i.e., $\psi$) can be represented as
\begin{align*}
\ket{\psi} \! = \! \ket{q_1} \ket{q_0} \! &= \! \frac{1}{\sqrt{2}}\ket 0(0.6 \ket 0 + 0.8 \ket 1)+\frac{1}{\sqrt{2}}\ket 1(\frac{1}{\sqrt{2}}\ket 0+\frac{1}{\sqrt{2}}\ket 1)\\
&= \! \frac{0.6}{\sqrt{2}}\ket{0}\ket{0} + \frac{0.8}{\sqrt{2}}\ket{0}\ket{1} + \frac{1}{2}\ket{1}\ket{0} + \frac{1}{2}\ket{1}\ket{1}.
\end{align*}
It indicates that the two \emph{entangled} qubits have 4 basis states $\ket{0} \ket{0}$, $\ket{0} \ket{1}$, $\ket{1} \ket{0}$ and $\ket{1} \ket{1}$
with amplitudes $\frac{0.6}{\sqrt{2}}$, $\frac{0.8}{\sqrt{2}}$, $\frac{1}{2}$ and $\frac{1}{2}$, respectively.
It can be verified that the sum of the absolute squares of all the 4 amplitudes is still equal to 1.
We can extend this system to a quantum register with $n$ qubits
which have $2^n$ basis states and the corresponding $2^n$ amplitudes.

The states of qubits can be transformed by \emph{quantum gates} in a \emph{quantum circuit},
which work similarly as the gates and circuits in classical computers.
We can also measure a qubit in a quantum circuit.
In the following, we introduce three common types of quantum gates that will be used in this paper,
namely a \emph{Hadamard gate}, an \emph{$X$-gate} and a \emph{controlled-$X$ gate}.
A \emph{Hadamard gate} \cite{hadamard1893resolution, nielsen2001quantum}
transforms $\ket 0$ into $\frac{1}{\sqrt{2}}\ket 0 + \frac{1}{\sqrt{2}}\ket 1$
and transforms $\ket 1$ into $\frac{1}{\sqrt{2}}\ket 0 - \frac{1}{\sqrt{2}}\ket 1$.
An \emph{$X$-gate}~\cite{nielsen2001quantum} swaps the amplitudes of $\ket 0$ and $\ket 1$.
A \emph{controlled-$X$} gate has two parts, the \emph{control} qubit, says $q_1$,
and the \emph{target} qubit, says $q_0$.
A controlled-$X$ gate applies an $X$-gate on the target qubit $q_0$ if the control qubit is $\ket 1$,
and otherwise, it keeps the target qubit unchanged.

For example, assume an input $\ket{q}=\frac{1}{\sqrt{2}}\ket{0}+\frac{1}{\sqrt{2}}\ket{1}$.
After being transformed by a Hadamard gate, we have
$\ket{q}=\frac{1}{\sqrt{2}}(\frac{1}{\sqrt{2}}\ket 0 + \frac{1}{\sqrt{2}}\ket 1)+\frac{1}{\sqrt{2}}(\frac{1}{\sqrt{2}}\ket 0 - \frac{1}{\sqrt{2}}\ket 1)=\ket{0}$.
Clearly, if we now measure $q$, we will always obtain 0.
We give more examples about the $X$-gate and the controlled-$X$ gate shortly.

A quantum circuit could involve the combination of multiple input qubits and multiple gates.
Since the state transformations and measurements for each qubit follow the timeline of a wire,
we could describe the state changes in an ordered list of events.
For example, we are given a quantum circuit with two input qubits $q_0$ and $q_1$ that is described as follows.
\begin{enumerate}
    \item Apply an $X$-gate on $q_1$.
    \item Apply a controlled-$X$ gate where the control qubit is $q_1$ and the target qubit is $q_0$.
\end{enumerate}
Consider the following three examples of input:
\begin{enumerate}
    \item $\ket{q_1}\ket{q_0}=\ket{0}\ket{0}$. After the $X$-gate on $q_1$, $\ket{q_1}\ket{q_0}=\ket{1}\ket{0}$.
    Since $\ket{q_1}=\ket{1}$, we apply an $X$-gate on $q_0$, and thus the result is $\ket{1}\ket{1}$.
    \item $\ket{q_1}\ket{q_0}=\ket{1}\ket{1}$. After the $X$-gate on $q_1$, $\ket{q_1}\ket{q_0}=\ket{0}\ket{1}$.
    Since $\ket{q_1}=\ket{0}$, we do nothing, and thus the result is $\ket{0}\ket{1}$.
    \item $\ket{q_1}\ket{q_0}=\frac{1}{\sqrt{2}}\ket{0}\ket{0}+\frac{1}{\sqrt{2}}\ket{1}\ket{1}$.
    After the $X$-gate on $q_1$, $\ket{q_1}\ket{q_0}\allowbreak=\frac{1}{\sqrt{2}}\ket{1}\ket{0}+\frac{1}{\sqrt{2}}\ket{0}\ket{1}$.
    The result is $\frac{1}{\sqrt{2}}\ket{1}\ket{1}+\frac{1}{\sqrt{2}}\ket{0}\ket{1}$.
\end{enumerate}

\if 0
Similar to classical computers, we also have wires and gates in quantum computers.
A wire starting at the left hand side and ending at the right hand size denotes a timeline,
representing the process from an earlier moment to a later moment.
A gates in quantum computers is called a \emph{quantum gate},
which is used to transform the state of a qubit.
For example, a \emph{Hadamard gate} \cite{hadamard1893resolution, nielsen2001quantum}
is a useful quantum gate, which transforms $\ket 0$ into $\frac{1}{\sqrt{2}}\ket 0 + \frac{1}{\sqrt{2}}\ket 1$
and transforms $\ket 1$ into $\frac{1}{\sqrt{2}}\ket 0 - \frac{1}{\sqrt{2}}\ket 1$.
Particularly, the above transformation can be represented as
$\ket 0 \rightarrow \frac{1}{\sqrt{2}}\ket 0 + \frac{1}{\sqrt{2}}\ket 1; \ket 1 \rightarrow \frac{1}{\sqrt{2}}\ket 0 - \frac{1}{\sqrt{2}}\ket 1$.
For example, assume an input $\ket{q}=\frac{1}{\sqrt{2}}\ket{0}+\frac{1}{\sqrt{2}}\ket{1}$.
After being transformed by a Hadamard gate, we have
$\ket{q}=\frac{1}{\sqrt{2}}(\frac{1}{\sqrt{2}}\ket 0 + \frac{1}{\sqrt{2}}\ket 1)+\frac{1}{\sqrt{2}}(\frac{1}{\sqrt{2}}\ket 0 - \frac{1}{\sqrt{2}}\ket 1)=\ket{0}$.
Clearly, if we now measure $q$, we will always obtain 0.

The following diagram illustrates an example of a quantum circuit,
which consists of a qubit $q$, a timeline (denoted by a single-line wire)
representing the quantum gates and operations applied on qubit $q$ in order,
a classical register $c$ (denoted by a double-line wire),
a Hadamard gate (denoted by a box containing ``$\mathrm{H}$'')
and a measurement (denoted by an sign like a real meter).
\begin{center}
	\scalebox{0.9}{
\Qcircuit @C=1.0em @R=1.0em @!R { 
	 	\nghost{{q} :  } & \lstick{{q} :  } & \gate{\mathrm{H}} & \meter & \qw & \qw\\
	 	\nghost{{c} :  } & \lstick{{c} :  } & \lstick{/_{_{1}}} \cw & \dstick{_{_{\hspace{0.0em}0}}} \cw \ar @{<=} [-1,0] & \cw & \cw\\
}
\hspace{5mm}
}
\vspace{1mm}
\end{center}
In this circuit, $q$ is first transformed by a Hadamard gate and then measured by a measurement,
where the measurement result is stored in the bit of position ``0''
(shown by the small number beneath the down arrow connecting the meter sign)
of the register $c$ whose number of bits is 1
(shown by the small number beside the $/$ sign).

Moreover, the quantum gates can also be applied on multiple qubits.
The following quantum circuit shows an example.
\begin{center}
	\scalebox{0.8}{
\Qcircuit @C=1.0em @R=1.0em @!R { 
	 	\nghost{{q}_{1} :  } & \lstick{{q}_{1} :  } & \gate{\mathrm{X}} & \ctrl{1} & \meter & \qw & \qw & \qw\\
	 	\nghost{{q}_{0} :  } & \lstick{{q}_{0} :  } & \qw & \gate{\mathrm{X^c}} & \qw & \meter & \qw & \qw\\
	 	\nghost{\mathrm{{c} :  }} & \lstick{\mathrm{{c} :  }} & \lstick{/_{_{2}}} \cw & \cw & \dstick{_{_{\hspace{0.0em}1}}} \cw \ar @{<=} [-2,0] & \dstick{_{_{\hspace{0.0em}0}}} \cw \ar @{<=} [-1,0] & \cw & \cw
}
\hspace{5mm}
}
\vspace{1mm}
\end{center}
In this example, the input are two qubits $q_0$ and $q_1$.
We apply an \emph{$X$-gate} \cite{nielsen2001quantum} (also known as a \emph{NOT-gate}) on $q_1$
(denoted by the box containing ``$\mathrm{X}$''),
and a \emph{controlled-$X$} gate on both $q_0$ and $q_1$ (denoted by the box containing ``$\mathrm{X^c}$'').
The $X$-gate swaps the amplitudes of $\ket 0$ and $\ket 1$,
which can be represented as $\ket 0 \rightarrow \ket 1; \ket 1 \rightarrow \ket 0$.
The controlled-$X$ gate has two parts,
the \emph{control} qubit (i.e., $q_1$) as shown in the wire from the top and
the \emph{target} qubit (i.e., $q_0$) as shown in the wire from the left.
The controlled-$X$ gate applies an $X$-gate on the target qubit $q_0$ if the control qubit is $\ket 1$,
and otherwise, it keeps the target qubit unchanged.
It can be represented as
$\ket 0\ket{q_0}\rightarrow \ket{q_0}; \ket 1\ket{q_0}\rightarrow X \ket{q_0}$, 
where ``$X \ket{q_0}$'' denotes applying an $X$-gate on $q_0$.
Consider the following three examples of input:
\begin{enumerate}
    \item $\ket{q_1q_0}=\ket{00}$. After the $X$-gate on $q_1$, $\ket{q_1q_0}=\ket{10}$.
    Since $\ket{q_1}=\ket{1}$, we apply an $X$-gate on $q_0$, and thus the result is $\ket{11}$.
    \item $\ket{q_1q_0}=\ket{11}$. After the $X$-gate on $q_1$, $\ket{q_1q_0}=\ket{01}$.
    Since $\ket{q_1}=\ket{0}$, we do nothing, and thus the result is $\ket{01}$.
    \item $\ket{q_1q_0}=\frac{1}{\sqrt{2}}\ket{00}+\frac{1}{\sqrt{2}}\ket{11}$.
    After the $X$-gate on $q_1$, $\ket{q_1q_0}=\frac{1}{\sqrt{2}}\ket{10}+\frac{1}{\sqrt{2}}\ket{01}$.
    The result is $\frac{1}{\sqrt{2}}\ket{11}+\frac{1}{\sqrt{2}}\ket{01}.$
\end{enumerate}
\fi

Following prior studies~\cite{grover1996fast, shor1994algorithms, wiebe2015quantum, zhang2018quantum},
we call such a quantum transformation consisting of a series of quantum gates in a quantum circuit as a \emph{quantum oracle}.
A quantum oracle can be regarded as a black box of a quantum circuit
where we only focus on the quantum operation it can perform but skip the detailed circuit.
Since different gate sets (which correspond to the instruction sets in classical computers)
may cause different time complexities and the general quantum computer is still at a very early stage,
we normally use the query complexity to study quantum algorithms,
which is to measure the number of queries to the quantum oracles.
In the quantum algorithm area, many studies~\cite{li2021sublinear, kapralov2020fast,
montanaro2017quantum, naya2020optimal, hosoyamada2018quantum, li2019sublinear, kieferova2021quantum}
assume that a quantum oracle costs $O(1)$ time, and analyze the time complexity based on this assumption.
We thus follow this common assumption in this paper.

\section{Problem Definition}\label{sec:problem}

In this section, we formally define the quantum range query problems
in the \emph{static}, \emph{dynamic} and \emph{multi-dimensional} cases, respectively.

We first consider the \emph{static} quantum range query problem
(for simplicity, it is simply called the quantum range query problem).
We are given an \emph{immutable} dataset $D$ of $N$ items.
Each item in $D$ is represented as a key-record pair $(key_i, rec_i)$ for $i \in [0, N - 1]$.
Note that the subscription $i$ in $(key_i, rec_i)$ represents the \emph{index} of this pair,
and the indices start from 0.
Each key $key_i$ is an integer, 
and each record $rec_i$ is a bit-string. 
Note that in our problem, the keys are integers,
but they can be easily extended to float numbers.
A range query of lower bound $x$ and upper bound $y$ is denoted as $QUERY(x, y)$.
In classical range query problem, $QUERY(x, y)$ returns a list of key-record pairs
whose keys fall in range $[x, y]$.
Let $k$ be the number of items that $QUERY(x, y)$ returns,
and let $l_0, l_1, \ldots, l_{k-1}$ be the indices of the returned items.
The returned list of $QUERY(x, y)$ is thus represented as
$\{(key_{l_0}, rec_{l_0}),\cdots, (key_{l_{k-1}}, rec_{l_{k-1}})\}$.
In the quantum range query problem,
we aim to return a superposition of all the key-record pairs in the list.
We formally define the quantum range query problem as follows.
\begin{definition}[Quantum Range Query]\label{defstatic}
    Given an immutable dataset $D=\{(key_0, rec_0), \ldots, (key_{N-1},$ $rec_{N-1})\}$
    and two integers $x$ and $y$ where $x \leq y$,
    a quantum range query $QUERY(x, y)$ is to return the following superposition
    $$\frac{1}{\sqrt{k}}\sum_{i=0}^{k-1}\ket{key_{l_i}}\ket{rec_{l_i}},$$
    such that for each $i \in [0, k-1]$, $x\leq key_{l_i} \leq y$.
\end{definition}

Note that in the quantum range query, we return a superposition
which is the linear combination of all the desired key-record pairs.
It can be observed that for each $i \in [0, k-1]$,
the probability of obtaining $\ket{key_{l_i}}\ket{rec_{l_i}}$ is equal to $\frac{1}{k}$.

Next, we also define the \emph{dynamic} quantum range query.
In the dynamic case, instead of having an immutable dataset,
we consider a \emph{dynamic} dataset $D$ that supports the \emph{insertion} and \emph{deletion} operations.
Specifically, an insertion operation inserts a new item $(key, rec)$ into $D$
and returns the new dataset $D \cup (key, rec)$.
A deletion operation delete an existing item $(key, rec)$ from $D$
and returns the new dataset $D \setminus (key, rec)$.
The dynamic quantum range query is formally defined as follows.

\begin{definition}[Dynamic Quantum Range Query]\label{defdynamic}
    Given a dataset $D=\{(key_0, rec_0), \ldots, (key_{N-1},$ $rec_{N-1})\}$
    that supports the insertion and deletion operations
    and two integers $x$ and $y$ where $x \leq y$,
    a dynamic quantum range query $QUERY(x, y)$ is to return the following superposition
    $$\frac{1}{\sqrt{k}}\sum_{i=0}^{k-1}\ket{key_{l_i}}\ket{rec_{l_i}},$$
    such that for each $i \in [0, k-1]$, $x\leq key_{l_i} \leq y$.
\end{definition}

Note that the only difference between the dynamic and the static quantum range query
is whether the given dataset is dynamic. 

Finally, we define the \emph{multi-dimensional} (static) quantum range query.
We are given a dataset $D$ of $N$ key-record pairs $(key_i, rec_i)$ (for $i \in [0, N-1]$),
where each key $key_i$ is in the form of a $d$-dimensional vector of integers,
i.e., $key_i = (key_{i, 1}, \ldots, key_{i, d})$.
A multi-dimensional quantum range query $QUERY(x, y)$
also takes lower bound $x$ and upper bound $y$ as input,
where $x$ and $y$ are also in the form of $d$-dimensional vectors of integers,
i.e., $x = (x_1, \ldots, x_d)$ and $y = (y_1, \ldots, y_d)$.
The multi-dimensional quantum range query is formalized as follows.

\begin{definition}[Multi-dimensional Quantum Range Query]\label{defmd}
    Given an immutable dataset $D=\{(key_0, rec_0), \ldots, (key_{N-1},$ $rec_{N-1})\}$
    where each $key_i$ is in the form of a $d$-dimensional vector
    (i.e., $key_i = (key_{i, 1}, \ldots, key_{i, d})$)
    and two $d$-dimensional vectors $x = (x_1, \ldots, x_d)$
    and $y = (y_1, \ldots, y_d)$ where $x_j \leq y_j$ for each $j \in [1, d]$,
    a multi-dimensional quantum range query is to return the following superposition
    $$\frac{1}{\sqrt{k}}\sum_{i=0}^{k-1}\ket{key_{l_i}}\ket{rec_{l_i}},$$
    where for each $i\in [0, k-1]$ and $j\in [1, d]$, $x_j\leq key_{l_i, j} \leq y_j$.
\end{definition}


\if 0
{\color{red}
For all the above quantum range query problems,
the major difference from their classical versions is the form of the returned results.
The quantum range queries aim to return a superposition of all the desired results,
which is new in the research area to the best of our knowledge.
To clarify our motivation of studying the quantum range queries,
in the following subsections, we first introduce the concept of
the \underline{Q}uantum \underline{R}andom \underline{A}ccess \underline{M}emory (QRAM)
which is used to store the dataset and query results,
and then elaborate how our quantum range queries returning superpositions
could be potentially applied on future real-world applications.
}

To further discuss the problems, we introduce the quantum random access memory (QRAM) in Section \ref{QRAM}, since the QRAM should be used to store the dataset in quantum computers. In Section \ref{application}, we introduce several applications of the results of quantum range queries. Instead of returning a list of all the answers, a quantum range query returns quantum bits. The result is a superposition of all the key-record pairs with keys within the query range. Unlike the classical answer, if we measure the quantum bits, we can only randomly get one of the key-record pairs, and then the superposition will be collapsed. Although quantum range query cannot return a classical list of all the answers, we still consider it very meaningful, since we do not regard the result as a final output available to a user but as an input to other quantum algorithms. Therefore, we list several applications to show how to use the quantum range query results in future real-world applications.

\subsection{Quantum Random Access Memory}\label{QRAM}

Following \cite{kerenidis2017quantum, saeedi2019quantum}, in this paper, we assume a classical-write quantum-read QRAM. It takes $O(1)$ time to store a classical record. Also, it takes $O(1)$ time to accept a superposition of addresses and return a superposition of the corresponding records. The main difference from a classical memory is that we assume that it takes $O(1)$ time to read multiple records. Therefore, we have the following Definition \ref{def:QRAM}.

\begin{definition}[Quantum Random Access Memory (QRAM)]\label{def:QRAM}
A QRAM $\mathcal{Q}$ is an ideal model which performs store and load operations in $O(1)$ time. 
\begin{itemize}
	\item A store operation: $\mathcal{Q}[address]=value$, where $value$ is a bit-string;
	\item A load operation: $$\mathcal{Q}\frac{1}{\sqrt{k}}\sum_{i=0}^{k-1}\ket{address_i}\ket{x}=\frac{1}{\sqrt{k}}\sum_{i=0}^{k-1}\ket{address_i}\ket{x\oplus value_i},$$ where $k$ is the number of required values, $x$ is the initial value of the destination register to store the results of the load operation, and $\oplus$ denotes the XOR operation.
\end{itemize}
\end{definition}

Assume there is a QRAM $\mathcal{Q}$, then $\mathcal{Q}$ needs to support two operations. The first operation is to store classical data. An operation $\mathcal{Q}[address]=value$ will store a classical value at the specified address in $O(1)$ time, where $address$ and $value$ are two bit-strings. In this operation, we can regard $\mathcal{Q}$ as an array $A$ in a classical computer, where $A(i)$ is the value stored in the $i$-th entry and $i$ is the address. For simplicity, we also use $\mathcal{Q}[address]$ to denote the value stored at the address. The second operation is to load quantum data. In this operation, $\mathcal{Q}$ works as a quantum mapping from the addresses to the values: $\mathcal{Q}\ket{address}\ket{x}=\ket{address}\ket{x\oplus value}$, where $\oplus$ denotes the XOR operation and $\ket x$ is the initial state of the returned qubits. Furthermore, the loading operation $$\mathcal{Q}\frac{1}{\sqrt{k}}\sum_{i=0}^{k-1}\ket{address_i}\ket{0}=\frac{1}{\sqrt{k}}\sum_{i=0}^{k-1}\ket{address_i}\ket{A(address_i)}$$ costs $O(1)$ time, where $k$ is the number of required values. The load and store operation cost $O(1)$ time is a basic assumption in the database area when we calculate the complexity. In the quantum algorithm area, this quantum mapping can be regarded as a quantum oracle, and many studies such as \cite{li2021sublinear, kapralov2020fast, montanaro2017quantum, naya2020optimal, hosoyamada2018quantum, li2019sublinear, kieferova2021quantum} also assume the quantum oracle costs $O(1)$ time.

The concept of a QRAM has been widely adopted in many quantum algorithms \cite{kerenidis2019q, rebentrost2014quantum, wiebe2012quantum, kapoor2016quantum, wiebe2015quantum}. Since many quantum algorithms were proposed to have a sub-linear time complexity compared to their linear classical competitors, if a linear time is needed to load the data, they may lose the advantage. Note that even with the assumption of QRAMs, the quantum range query is still non-trivial. The reason is that the QRAM only maps the addresses to values, but to fetch all the addresses we need is a non-trivial task. Considering there is no existing quantum data structure, if we use quantum algorithms with a QRAM and without any data structure, we can use quantum counting \cite{brassard1998quantum} to estimate the number of answers $k$ in $O(\sqrt{N})$ time, then use the unstructured quantum search \cite{boyer1998tight} to find the answers in $O(\sqrt{Nk})$ time. The search can even be slower than a classical data structure search. 

\subsection{Application of Quantum Range Query}\label{application}

In this section, we introduce how to use the result of a quantum range query in future real-world applications. We discuss several real-world applications where other quantum algorithms can be benefitted from the quantum range query.

The first application is for label binning in quantum machine learning. Assume there is a classification task with a training dataset $\{(x_1, y_1),\cdots,$ $(x_N, y_N)\}$ where for each $i\in[1,N]$, $x_i$ is the feature vector and $y_N$ is a continues label. There are many studies showing that packing labels into bins can improve the accuracy of the label prediction. In \cite{dougherty1995supervised}, Dougherty et.al. introduced equal-width binning, which is to bin the labels into $k$ equal-sized bins, where $k$ is a user parameter. They conducted experiments on 16 datasets to show that the label binning can improve the accuracy of Naive Bayes. In \cite{xue2017efficient}, Xue et.al. used equal-size binning to bin the labels into $k$ bins. They conducted experiments to show that label binning can improve the accuracy of SVM and the best performance is obtained when the bin size is set to $O(N^{1/3})$. They pointed out that $k$ will affect the quality of the result. In \cite{berg2021deep}, Berg et.al. used a neural network to do image classification. They did three tasks: Age estimation, head pose estimation, and historical image dating. In the experiment, two tasks were improved by label binning. They also claimed that $k$ has an impact on the prediction performance, and it is difficult to select $k$ for a given problem without an extensive parameter search. 

The above studies are based on classical machine learning. We can learn that many machine learning algorithms can be improved by label binning in some scenarios, so it is reasonable to consider it may also improve the accuracy of quantum machine learning algorithms. For example, to train a quantum model, there is an existing method called single-shot training \cite{adhikary2020supervised}. To train a classical model, we input $(x_i, y_i)$ one by one and $N$ iterations are needed in an epoch. In single-shot training, we generate a superposition of all the $x_i$ with the same label $y_i$ and input the data in one iteration, so an epoch contains only $k$ iterations, where $k$ is the number of bins. To obtain such a superposition, we can make a quantum range query of the bin boundaries to generate the superposition, which should be exponentially faster than existing methods like \cite{long2001efficient} to encode the data in a bin into quantum bits one by one.
%

The second application is for a range filter of a recommendation system. For example, a recommendation system \cite{gupta2013location, covington2016deep} usually has two steps. The first step is to select a small part of all the items as the candidates. The second step is to use a prediction model (e.g., a neural network) to predict the scores of the items for a user. For example, assume that we want the system to recommend us a movie from 1990 to 1999. Now we have a model \cite{biancalana2011context} to predict the score of the movie for a user. The first step is to select all the $k$ movies from 1990 to 1999 from the database with $N$ movies to obtain $\{movie_1, movie_2,\cdots,movie_k\}$, which costs $O(\log N + k)$ time . The second step is that for each $movie_i$, we use the model to predict the score $s(movie_i)$ for a recommendation.  Assume the model prediction costs $T$ time for a movie, then it totally costs $O(kT)$ time. Therefore, the classical recommendation system costs $O(\log N + k + kT)$ time. Now consider a quantum system for the movie recommendation. Assume that all the movies have been stored in a quantum data structure, and we can obtain the $k$ movies in quantum bits by a quantum range query in $O(\log_B N)$ time: $$QUERY(1990, 1999)=\frac{1}{\sqrt{k}}\sum_{i=0}^{k-1}\ket{i}\ket{movie_i}.$$ Then, with a quantum model, we can predict the score of all the movies in $T$ time: $$\frac{1}{\sqrt{k}}\sum_{i=0}^{k-1}\ket{i}\ket{movie_i}\ket{0}\rightarrow \frac{1}{\sqrt{k}}\sum_{i=0}^{k-1}\ket{i}\ket{movie_i}\ket{s(movie_i)}.$$ Therefore, the quantum recommendation system only needs $O($ $\log_B N +T)$ time. A multi-dimensional example is the location-based restaurant recommendation \cite{gupta2013location}. We can use a high-dimensional quantum range query to obtain a speedup.

Another example is the time-periodic-biased KNN. Campos et.al. \cite{campos2010simple} studied how to use KNN to do movie recommendations. They found that it can improve the accuracy to only consider the rating datasets from the last month before the recommendation time in different years.  That means they need to select the records produced in the same month but in different years. Given a B+ tree containing all the movies sorted by the dates in the year, the calculation costs $O(\log_B N + N/12)$ time since the range query returns $N/12$ records on average. Assume that we can use a quantum range query to select the $N/12$ records in $O(\log_B N)$ time. Note that there is an existing $O(d^3)$ quantum KNN algorithm \cite{ruan2017quantum} where $d$ is the dimension of the feature vector. Therefore, the whole process only costs $O(\log_B N + d^3)$. In most cases, $d$ is much smaller than $N$. Assuming $d=O(\log N)$, we can find that the quantum time-periodic-biased KNN is exponentially faster than its classical competitor.
 
Besides the above applications, we believe that more and more quantum algorithms will be benefitted in the future. Since quantum algorithms always receive quantum data as the input, we think it is very common that quantum algorithms return answers in a superposition. For example, the HHL algorithm \cite{harrow2009quantum} returns the answer to a linear system of equations in a superposition, and it becomes a subroutine of the quantum SVM \cite{rebentrost2014quantum}. Specifically, in the database area, we consider the future quantum top-$k$ algorithms and query optimization will be benifitted by quantum data structures, for which a quantum B+ tree can preprocess data to narrow the searching space.

\fi
 
\section{Quantum B+ Tree}\label{sec:alg}

In this section, we propose the quantum B+ tree to solve the quantum range query problems.
To the best of our knowledge, no prior studies explored using a data structure
in quantum computers.
Compared with performing a quantum algorithm on \emph{unstructured} dataset,
using quantum data structures could answer our quantum range query problems more efficiently.
We focus on the static quantum B+ tree in this section,
and discuss the dynamic and multi-dimensional variants later in Section~\ref{sec:dyn_md}.

Before introducing the details of the quantum B+ tree,
we first introduce the concept of the
\emph{\underline{Q}uantum \underline{R}andom \underline{A}ccess \underline{M}emory} (QRAM)
which is used to store the quantum data structure
and give query results in superpositions.
Like the classical RAM, we have the QRAM in quantum computers to read and write quantum states.
Following~\cite{kerenidis2017quantum, saeedi2019quantum},
we assume the existence of a \emph{classical-write quantum-read QRAM}.
{\color{black}It can store a classical value into a given address,
and it can accept a superposition of \emph{multiple} addresses
and returns a superposition of the corresponding values due to quantum parallelism.}
Formally, we define the QRAM as follows.

\begin{definition}[Quantum Random Access Memory (QRAM)]\label{def:QRAM}
A QRAM $\mathcal{Q}$ is an ideal model which {\color{black}can perform the following
store and load operations.}
\begin{itemize}
    \item A store operation denoted by $\mathcal{Q}[addr]=val$
    stores value $val$ into address $addr$,
    where $addr$ and $val$ are two bit-strings.
    \item A load operation denoted by
    \begin{equation}
        \mathcal{Q}\frac{1}{\sqrt{k}}\sum_{i=0}^{k-1}\ket{addr_i}\ket{a}=
        \frac{1}{\sqrt{k}}\sum_{i=0}^{k-1}\ket{addr_i}\ket{a \oplus val_i}
    \end{equation}
    takes a superposition of a number $k$ of addresses (i.e., $addr_i$ for $i \in [0, k-1]$)
    as input and loads a superposition of the values of these addresses
    into the destination quantum register of initial state $a$
    where $\oplus$ denotes the XOR operation.
\end{itemize}
\end{definition}

In our definition of QRAM, the concept of the store operation is similar to a classical RAM
where we store $val$ into a memory block at address $addr$.
For simplicity, we also use $\mathcal{Q}[addr]$ to denote the value stored at address $addr$.
In the load operation, the input of multiple addresses and the output of multiple values
are handled in superpositions, which is the major difference from a classical RAM.
The result of a load operation is written to a quantum register using the XOR operation
(i.e., $\oplus$).
For instance, if we set the initial state of the destination quantum register to be $\ket{0}$,
then the exact values will be loaded as follows.
\begin{equation}
    \mathcal{Q}\frac{1}{\sqrt{k}}\sum_{i=0}^{k-1}\ket{addr_i}\ket{0}=
    \frac{1}{\sqrt{k}}\sum_{i=0}^{k-1}\ket{addr_i}\ket{\mathcal{Q}[addr_i]}
\end{equation}
{\color{black}Although the QRAM is still discussed at the theoretical stage
and no existing physical implementation is yet available,
researchers believe that with the future advances of quantum computers,
the QRAM can be implemented to support the above store and load operation very efficiently
in the form of quantum oracles~\cite{li2021sublinear, kapralov2020fast,
montanaro2017quantum, naya2020optimal, hosoyamada2018quantum, li2019sublinear, kieferova2021quantum}.}

{\color{black}Note that even with the assumption of QRAMs, the quantum range query is still non-trivial.}
With the concept of QRAM, we can form the following method
of answering the quantum range query problems in two steps
using the existing quantum algorithms on the \emph{unstructured} dataset.
In the first step of pre-processing (for loading the dataset),
for each key-record pair $(key_i, rec_i)$,
we store it into a QRAM $\mathcal{Q}$ using the store operation $\mathcal{Q}[key_i]=rec_i$.
Clearly, the total time complexity is $O(N)$.
In the second step of data retrieval,
we first find all the addresses in $\mathcal{Q}$ that are within the query range,
which costs $O(\sqrt{Nk})$ time using the state-of-the-art
quantum algorithm~\cite{brassard1998quantum, boyer1998tight},
and then use the load operation to find the resulting superposition of the corresponding values.
Totally, the data retrieval step costs $O(\sqrt{Nk})$.
When $N$ grows large, this is even worse than the classical range query algorithms
(e.g., using the B+ tree) with time complexity logarithm to $N$.

\if 0
To solve the quantum range query problems more efficiently with QRAM,
we propose our quantum data structures.
In the following subsections, we first introduce the static quantum B+ tree
to handle the static quantum range query problem in Section~\ref{subsec:static}.
We also propose an efficient algorithm to answer the quantum range query with the quantum B+ tree.
Then, in Section~\ref{subsec:dynamic},
we extend the static quantum B+ tree to form the dynamic quantum B+ tree
that allows insertions and deletions,
and propose the variant of quantum range query algorithm on the dynamic quantum B+ tree.
Finally, we propose the static quantum range tree to accommodate
the multi-dimensional static quantum range query in Section~\ref{subsec:md}.
\fi

\if 0
The concept of a QRAM has been widely adopted in many quantum algorithms \cite{kerenidis2019q, rebentrost2014quantum, wiebe2012quantum, kapoor2016quantum, wiebe2015quantum}. Since many quantum algorithms were proposed to have a sub-linear time complexity compared to their linear classical competitors, if a linear time is needed to load the data, they may lose the advantage. Note that even with the assumption of QRAMs, the quantum range query is still non-trivial. The reason is that the QRAM only maps the addresses to values, but to fetch all the addresses we need is a non-trivial task. Considering there is no existing quantum data structure, if we use quantum algorithms with a QRAM and without any data structure, we can use quantum counting \cite{brassard1998quantum} to estimate the number of answers $k$ in $O(\sqrt{N})$ time, then use the unstructured quantum search \cite{boyer1998tight} to find the answers in $O(\sqrt{Nk})$ time. The search can even be slower than a classical data structure search. 
\fi

To solve the quantum range query problems more efficiently with QRAM,
we propose our quantum data structure called the quantum B+ tree.
The major design of our quantum B+ tree is derived from
the basic idea of the classical B+ tree.
It is well known that the classical B+ tree utilizes a balanced tree structure,
which performs a range query in the dataset $D$ of size $N$ efficiently
with time complexity $O(\log N + k)$, where $k$ is the return size.
In classical computers, this time complexity is shown to be asymptotically optimal.
For the quantum range queries in quantum computers,
we are also interested in the best possible time complexity.
To explore that, we introduce another problem called the \emph{membership} problem,
which, given a query key, decides whether this key exists in the dataset $D$.
Existing studies have shown that the time complexity lower bound
of solving the membership problem in quantum computers is
$\Omega(\log N)$~\cite{ambainis1999better, sen2001lower}.
Based on that, we present the following lemma to show the time complexity lower bound
of solving our quantum range queries.
For the sake of space, we give a proof sketch for each lemma or theorem in this paper, and the full proof can be found in our technical report~\cite{technical_report}.

\begin{lemma}\label{lem:lower_bound_time_complexity}
The time complexity to answer a quantum range query is $\Omega(\log N)$. 
\end{lemma}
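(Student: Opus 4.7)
The plan is to prove the lower bound via a reduction from the quantum membership problem, whose $\Omega(\log N)$ lower bound is cited from~\cite{ambainis1999better, sen2001lower}. The idea is that if we had a sub-logarithmic algorithm for the quantum range query, we could use it to solve membership in sub-logarithmic time, contradicting the known bound.

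First I would take an arbitrary membership instance: a dataset of $N$ distinct keys together with a target key $q$, where the task is to decide whether $q$ appears in the dataset. I would convert it to a quantum range query instance by treating these keys as the keys of the key-record pairs in a dataset $D$, choosing the records arbitrarily (for instance, setting each $rec_i = key_i$). I would then issue the single point query $QUERY(q,q)$, which asks for all pairs whose key equals $q$. By Definition~\ref{defstatic}, if $q$ is present (so $k \geq 1$), the call returns the superposition $\frac{1}{\sqrt{k}}\sum_{i=0}^{k-1}\ket{key_{l_i}}\ket{rec_{l_i}}$ in which every basis state has key register equal to $q$.

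The next step is to observe that any algorithm conforming to Definition~\ref{defstatic} must behave detectably differently when $k \geq 1$ versus when $k = 0$: in the former case, measuring the key register is guaranteed to yield $q$, whereas in the latter case no valid superposition exists and the algorithm must output a distinguishable ``empty'' signal (for example, via an auxiliary success flag, or by failing to produce a normalized output). Hence a single invocation of the quantum range query algorithm, followed by one measurement of the returned key register and a comparison with $q$, decides membership. If the range query runs in time $T(N)$, then membership is decided in $O(T(N))$ time; combined with the $\Omega(\log N)$ membership lower bound this forces $T(N) = \Omega(\log N)$.

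The main obstacle, and the part I would take most care to formalize, is the $k = 0$ case, since Definition~\ref{defstatic} only describes the output when at least one matching pair exists. I would handle this by arguing that any correct algorithmic realization must at minimum signal an empty result, since otherwise the superposition produced is ill-defined and cannot be consumed by the downstream quantum subroutines that motivate the problem in the first place. Once this convention is fixed, the reduction is immediate and the $\Omega(\log N)$ bound transfers directly to the quantum range query.
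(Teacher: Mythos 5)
Your proposal is correct and takes essentially the same route as the paper: both reduce the quantum membership problem to the point query $QUERY(x,x)$ and invoke the known $\Omega(\log N)$ quantum lower bound for membership from~\cite{sen2001lower}. Your additional care in handling the $k=0$ case is a reasonable elaboration of a detail the paper leaves implicit, but it does not change the argument.
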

\begin{proofsketch}
Consider a quantum range query $QUERY(x, x)$.
It answers whether the key $x$ exists in the dataset,
which is a membership problem of time complexity $\Omega(\log N)$ by~\cite{sen2001lower}.
Therefore, the time complexity of answering a quantum range query is also $\Omega(\log N)$.
\end{proofsketch}

According to the above lemma, we cannot achieve a better time complexity than $O(\log N)$
for the quantum range query.
The goal of our static quantum B+ tree is thus to
obtain this asymptotically optimal time complexity $O(\log N)$.
Since the $O(k)$ time cost of listing all the $k$ returned results
cannot be reduced in classical range queries,
we consider eliminating this cost in our quantum B+ tree,
which is feasible in quantum computers that handle the superpositions
due to quantum parallelism.

Based on the above analysis, in the following,
we first introduce the details of how we design our quantum B+ tree in Section~\ref{subsec:design_quantum_b_plus},
and then, we propose our quantum range search algorithm called the \emph{global-classical local-quantum search} in Section~\ref{subsec:static_range_query_alg}.

\begin{figure*}[tbp]
  \centering
  \subfigure{
  \includegraphics[width=0.9\textwidth]{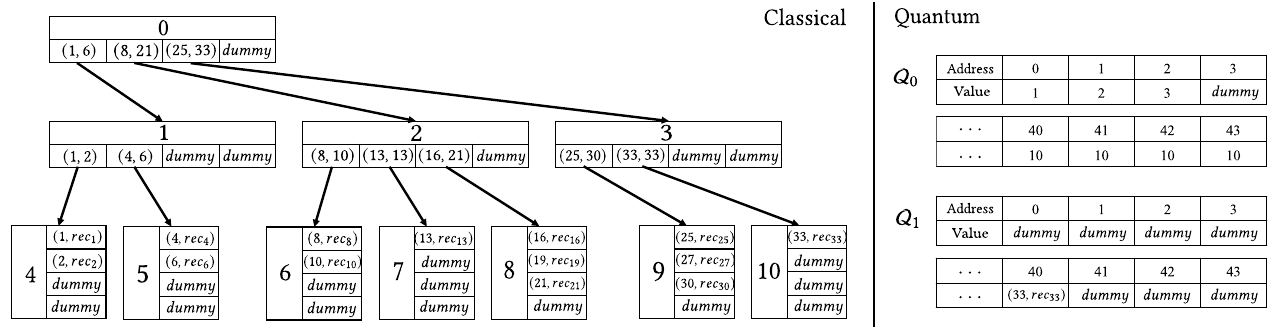}}
  \caption{An Example of a Quantum B+ Tree where $N=14$ and $B=4$}
  \label{fig:staticexp}
  \vspace*{-0.2cm}
\end{figure*}

\subsection{Design of Quantum B+ Tree}\label{subsec:design_quantum_b_plus}
{\color{black}
The design of our quantum B+ tree follows the core idea which focuses on representing the structure of the classical B+ tree in quantum computers. Specifically, the hierarchical relationships and node data are stored into the quantum memory (i.e., QRAM). This is to take advantage of the balanced tree structure of the classical B+ tree,
which has a tree \emph{height} of $O(\log N)$ and thus retains the $O(\log N)$ search efficiency. Also, it can utilize the QRAM to process the tree structure and return the range search results in superpositions efficiently by quantum parallelism.}

\if 0 
The design of our quantum B+ tree includes the following two core ideas.
Firstly, a quantum B+ tree consists of the classical component and the quantum component.
The classical component is the same as the classical B+ tree,
and the quantum component stores a concise replication of the hierarchical relationships
in the classical component.
This is to take advantage of the balanced tree structure of the classical B+ tree,
which has a tree \emph{height} of $O(\log N)$ and thus retains the $O(\log N)$ search efficiency.
Secondly, the quantum component utilizes the QRAM
to process the hierarchical relationships and return the range search results
in superpositions in $O(1)$ time efficiently.
\fi 

In the following, we first introduce some basic concepts of the classical B+ tree,
{\color{black}which is the prototype of our quantum B+ tree.}
\if 0 
which constitutes the classical component in our quantum B+ tree.
\fi 
We consider a \emph{weight-balanced B+ tree}~\cite{arge1996optimal}.
It is a tree data structure that consists of nodes in two types,
the internal nodes and the leaves.
Each node has a unique ID in the form of an integer, which is used to represent this node.
A user parameter $B$ called the branching factor is set to be
an integer that is a power of 2 and is at least 4.
A leaf contains $B$ key-record pairs sorted by the keys in ascending order.
An internal node has $B$ children, each of which is a leaf or an internal node.
Each node is also associated with a routing key, says $(L, U)$,
which represents a range with lower bound $L$ and upper bound $U$,
and all the key-record pairs under the internal node with routing key $(L, U)$
have keys within this range.
The children under an internal node have non-overlapping routing keys
and are sorted by the lower bound of their routing keys.
A node or a key-record pair under a leaf could be empty,
which is represented as a special mark called $dummy$.
We assume that $dummy$ appears only in the last positions under a node.

The left part in Figure~\ref{fig:staticexp} illustrates an example of
the classical component of a classical B+ tree 
with $N = 14$ key-record pairs and the branching factor $B = 4$.
It has four internal nodes of ID from 0 to 3 and seven leaves of ID from 4 to 10.
For instance, the node of ID 1 (simply called node 1) has routing key (1, 6),
indicating that all the data items with keys in range $[1, 6]$
are under node 1.
Also, node 1 has four children, two of which are non-dummy leaves
and the other two are dummy nodes.
For the leave shown as node 4 with routing key (1, 2),
two data items (with keys equal to 1 and 2) and two dummy items are assigned.

We define the \emph{level} of a node to be its distance to the root,
where the distance between two nodes is defined to be
the minimum number of edges connecting these two nodes.
In a B+ tree, all the leaves have the same level.
We also define the \emph{height} of a node to be its distance to a leaf,
and the height of the B+ tree is defined to be the height of the root.
We define the \emph{weight} of a node to be the number of
all non-dummy key-record pairs under this node.
A non-root node of height $h$ is said to be \emph{balanced}
if its weight is between $\frac{1}{4}B^{h+1}$ and $B^{h+1}$.
Further, a non-root node of height $h$ is said to be \emph{perfectly balanced}
if its weight is between $\frac{1}{2}B^{h+1}$ and $B^{h+1}$.\looseness=-1

In our running example shown in the left part of Figure~\ref{fig:staticexp},
there are 4 non-dummy key-record pairs under node 1 of height $h$ equal to 1,
and thus the weight of node 1 is equal to 4.
This indicates that node 1 is balanced (since 4 is no less than $\frac{1}{4}B^{h+1} = 4$)
but is not perfectly balanced (since 4 is less than $\frac{1}{2}B^{h+1} = 8$).

A B+ tree is said to be a weight-balanced B+ tree
if all its non-root nodes are balanced and its root has at least two non-dummy children.
It is easy to verify that the height of a weight-balanced B+ tree
with $N$ key-records pairs is $O(\log_B N)$.
Moreover, a weight-balanced B+ tree is said to be perfectly balanced
if all its non-dummy nodes are perfectly balanced.
It can be verified that the B+ tree shown in the left part of Figure~\ref{fig:staticexp}
is a weight-balanced B+ tree but is not perfectly balanced.
\if 0 
After introducing the classical component of our quantum B+ tree,
we now discuss how we construct the quantum component by a concise replication
of the hierarchical relationships in the classical B+ tree.
\fi 

{\color{black}
After introducing the structure of a classical B+ tree, we now discuss how we represent the hierarchical relationships and node data of a B+ tree in QRAM. Since each node contains $B$ elements (including dummy), each of which has a child node with a routine key or a key-record pair, we could fit a classical B+ tree into a QRAM storing $M * B$ key-value pairs, where $M$ denotes the total number of nodes in the classical B+ tree, a key represents the ID of a tree node and a value stores the related information of this node. This can be done by a traversal of all the tree nodes in any order (in this paper, we choose the breadth-first order for simplicity).

We use two QRAMs, namely the hierarchy QRAM $\mathcal{Q}_0$ and the data QRAM $\mathcal{Q}_1$, to store the hierarchical relationships (i.e., the mapping from each node to its children) and the data (i.e., the routine key of a node or the key-record pair of a leave), respectively.}
Specifically, we perform $M * B$ store operations on $\mathcal{Q}_0$
where $M$ denotes the total number of nodes in the classical B+ tree as follows.
Let $C_{i, j}$ (for $i \in [0, M-1]$ and $j \in [0, B-1]$)
denote the node ID of the $j$-th child of node $i$ in the B+ tree.
Specially, if node $i$ is a leave with no children, we set $C_{i, j} = i$.
Also, if the $j$-th child of node $i$ is $dummy$, we assign $dummy$ to $C_{i, j}$.
Then, for each node $i \in [0, M-1]$ and each $j \in [0, B-1]$,
we perform a store operation to store value $C_{i, j}$ at address $i * B + j$,
i.e., $\mathcal{Q}_0[i * B + j] = C_{i, j}$.

\if 0 
The quantum component is stored in two QRAMs, namely the node QRAM $\mathcal{Q}_0$
and the leaf QRAM $\mathcal{Q}_1$.
The node QRAM $\mathcal{Q}_0$ stores the mapping from each node
to its children.
Specifically, we perform $M * B$ store operations on $\mathcal{Q}_0$
where $M$ denotes the total number of nodes in the classical B+ tree as follows.
Let $C_{i, j}$ (for $i \in [0, M-1]$ and $j \in [0, B-1]$)
denote the node ID of the $j$-th child of node $i$ in the B+ tree.
Specially, if node $i$ is a leave with no children, we set $C_{i, j} = i$.
Also, if the $j$-th child of node $i$ is $dummy$, we assign $dummy$ to $C_{i, j}$.
Then, for each node $i \in [0, M-1]$ and each $j \in [0, B-1]$,
we perform a store operation to store value $C_{i, j}$ at address $i * B + j$,
i.e., $\mathcal{Q}_0[i * B + j] = C_{i, j}$.
\fi 

In the right part of Figure~\ref{fig:staticexp},
we show an example of the quantum B+ tree. 
In the hierarchy QRAM $\mathcal{Q}_0$,
totally 44 ($= 11 * 4$) addresses are ``occupied''.
Among them, for instance, node 0 involves storing the IDs of its three non-dummy children
(i.e., 1--3) and one $dummy$ into addresses 0--3, respectively,
and node 10 involves storing value 10 for all its corresponding addresses 40--43
since it is a leaf.

{\color{black}
The data QRAM $\mathcal{Q}_1$ stores the routine key of each node or the key-record pair of each leave. 
Similar to $\mathcal{Q}_0$, we also perform $M * B$ store operations on $\mathcal{Q}_1$. Consider node $i$ in the B+ tree. If $i$ is an internal node, then for its $j$-th child, we assign the routine key (i.e., a 2-tuple of lower bound and upper bound of this routine key) to a variable, says $P_{i, j}$, which is $dummy$ if the $j$-th child is dummy. If $i$ is a leaf, we set $P_{i, j}$ to be the $j$-th key-record pair of node $i$ or $dummy$ if the $j$-th key-record pair is dummy.}
As such, for each node $i \in [0, M-1]$ and each $j \in [0, B-1]$,
we perform a store operation to store pair $P_{i, j}$ at address $i * B + j$,
i.e., $\mathcal{Q}_1[i * B + j] = P_{i, j}$.

\if 0 
The leaf QRAM $\mathcal{Q}_1$ stores the mapping from each leave to its key-record pairs.
Similar to $\mathcal{Q}_0$, we also perform $M * B$ store operations on $\mathcal{Q}_1$.
Let $P_{i, j}$ (for $i \in [0, M-1]$ and $j \in [0, B-1]$)
denote the $j$-th key-record pair of node $i$ in the B+ tree.
If node $i$ is an internal node or if the $j$-th key-record pair of node $i$ is $dummy$,
we set $P_{i, j} = dummy$.
As such, for each node $i \in [0, M-1]$ and each $j \in [0, B-1]$,
we perform a store operation to store pair $P_{i, j}$ at address $i * B + j$,
i.e., $\mathcal{Q}_1[i * B + j] = P_{i, j}$.
\fi 

Continuing our running example in Figure~\ref{fig:staticexp},
in the data QRAM $\mathcal{Q}_1$, we also store values into 44 addresses similarly,
where for node 0, since it is an internal node, all values stored are $dummy$ (at addresses 0--3),
and for node 10 with two key-record pairs and two dummies,
the values at address 40--43 are stored correspondingly.

{\color{black}Note that both QRAMs have data across the same number (i.e., $M * B$) of addresses, and at the same address, the value stored are related to the same node or the same key-record pair in a leaf. Thus, it is easy to combine the two QRAMs into one QRAM by using a larger quantum register storing the values of both QRAMs at the same address for implementation. In this paper, for clear structuring and illustration, we use two QRAMs to separate them.}

For both $\mathcal{Q}_0$ and $\mathcal{Q}_1$, we could easily retrieve multiple values
for the relationships in the tree with one load operation on QRAM.
Specifically, in the node QRAM $\mathcal{Q}_0$,
we perform the following load operation for node $i$ ($i \in [0, M-1]$)
\begin{equation}
\begin{split}
&\;\;\;\;\;\mathcal{Q}_0 \frac{1}{\sqrt{B}} \sum_{j=0}^{B-1}\ket{i*B+j}\ket{0}
= \frac{1}{\sqrt{B}}\sum_{j=0}^{B-1}\ket{i*B+j}\ket{C_{i, j}} \\
&= \frac{1}{\sqrt{B}}\sum_{j=0}^{f_i-1}\ket{i*B+j}\ket{C_{i, j}}+\frac{1}{\sqrt{B}}\sum_{j=f_i}^{B-1}\ket{i*B+j}\ket{dummy}
\end{split}
\end{equation}
where $f_i$ denotes the number of non-dummy children among all the children of node $i$.
This operation loads the IDs of all the children of node $i$ into a superposition
in $O(1)$ time.
Similarly, on $\mathcal{Q}_1$, we can load the key-record pairs under a leaf (with node ID $i$) with the following load operation
\begin{equation}
\begin{split}
&\;\;\;\;\;\mathcal{Q}_1 \frac{1}{\sqrt{B}}\sum_{j=0}^{B-1}\ket{i*B+j}\ket{0}
= \frac{1}{\sqrt{B}}\sum_{j=0}^{B-1}\ket{i*B+j}\ket{P_{i, j}} \\
&= \frac{1}{\sqrt{B}}\sum_{j=0}^{f_i-1}\ket{i*B+j}\ket{P_{i, j}}+\frac{1}{\sqrt{B}}\sum_{j=f_i}^{B-1}\ket{i*B+j}\ket{dummy} \text{.}
\end{split}
\end{equation}

\if 0
The quantum part is stored in $2$ QRAMs $\mathcal{Q}_0$ and $\mathcal{Q}_1$. The first QRAM $\mathcal{Q}_0$ is to store the mapping from the internal nodes to their children.  Consider an internal node $u$ in the classical part has $B$ children $c_0,\cdots, c_{B-1}$ which contains $f$ non-dummy children $u_0,\cdots, u_{f-1}$ followed by $B-f$ $dummies$. In the quantum part, $c_0, c_1, \cdots, c_{B-1}$ are stored in $\mathcal{Q}_0[u*B+0], \mathcal{Q}_0[u*B+1],\cdots, \mathcal{Q}_0[u*B+B-1]$ such that 
\begin{align*}
 &\mathcal{Q}_0\frac{1}{\sqrt{B}}\sum_{i=0}^{B-1}\ket{u}\ket{i}\ket{0}
 =\frac{1}{\sqrt{B}}\sum_{i=0}^{B-1}\ket{u*B+i}\ket{c_i}\\
 =&\frac{1}{\sqrt{B}}\sum_{i=0}^{f-1}\ket{u*B+i}\ket{u_i}+\frac{1}{\sqrt{B}}\sum_{i=f}^{B-1}\ket{u*B+i}\ket{dummy}\\
 =&\frac{1}{\sqrt{B}}\sum_{i=0}^{f-1}\ket{u}\ket{i}\ket{u_i}+\frac{1}{\sqrt{B}}\sum_{i=f}^{B-1}\ket{u}\ket{i}\ket{dummy}.
\end{align*}
Then, if we only consider the first $n_i$ qubits and the last $n_i$ qubits, we obtain a quantum transformation 
$\ket{u}\ket{0}\rightarrow \frac{1}{\sqrt{B}}\sum_{i=0}^{f-1}\ket{u}\ket{u_i}+\frac{\sqrt{B-f}}{\sqrt{B}}\ket{u}\ket{dummy}.$
By the above quantum transformation, we can read out the IDs of all the children of the node $u$ by only one memory access. Consider a leaf node $u$ with no children. We define that for each $i\in[0,B-1]$, $\mathcal{Q}_0[u*B+i]=u$. The second QRAM $\mathcal{Q}_1$ is to store the mapping from the leaves to the key-record pairs. Consider a leaf node $u$ in the classical part has $B$ key-record pairs which contains $f$ non-dummy key-record pairs $(key_0, rec_0),\cdots, (key_{f-1},$ $rec_{f-1})$ followed by $B-f$ $dummies$. Similarly, in the quantum part, $(key_0, rec_0),\cdots, (key_{f-1}, rec_{f-1})$ and $B-f$ $dummies$ are stored in $\mathcal{Q}_1[u*B+0], \mathcal{Q}_1[u*B+1],\cdots, \mathcal{Q}_1[u*B+B-1]$ such that
\begin{align*}
 &\mathcal{Q}_1\frac{1}{\sqrt{B}}\sum_{i=0}^{B-1}\ket{u}\ket{i}\ket{0}\\
 =&\frac{1}{\sqrt{B}}\sum_{i=0}^{f-1}\ket{u}\ket{i}\ket{key_i}\ket{rec_i}+\frac{1}{\sqrt{B}}\sum_{i=f}^{B-1}\ket{u}\ket{i}\ket{dummy}.
\end{align*}
Specially, for a dummy node, we define that for each $i\in [0, B-1]$, we have $\mathcal{Q}_0[dummy*B+i]=dummy$ and $\mathcal{Q}_1[dummy*B+i]=dummy$. Figure \ref{fig:staticexp} shows an example of a quantum B+ tree stored in a quantum computer. The left side of the vertical line is the classical part. There is a classical weight-balanced B+ tree with $N=14$ and $B=4$. Each box is a node in the tree where the big number is the node ID. An internal node has $4$ children and it has a pair of bounds $(l_i, r_i)$ for each non-dummy child. A leaf node has $4$ key-record pairs. The right side of the vertical line is the quantum part. $\mathcal{Q}_0$ is to store the edges. For example, node $1$, node $2$, node $3$ and $dummy$ are $4$ children of node $0$, so we have $\mathcal{Q}_0[0*4+0]=1$, $\mathcal{Q}_0[0*4+1]=2$, $\mathcal{Q}_0[0*4+2]=3$ and $\mathcal{Q}_0[0*4+3]=dummy$. Since node $10$ is a leaf node, we have $\mathcal{Q}_0[10*4+i]=10$ for $i\in [0, 3]$, and $\mathcal{Q}_1[10*4+0]=(33,rec_{33})$.
\fi

\if 0
To analyze the cost of building a quantum B+ tree, we have the following Theorem \ref{theo1}.

\begin{theorem}\label{theo1}
Given a list of $N$ key-record pairs sorted by the keys, a perfectly balanced quantum B+ tree can be built in $O(N)$ time.
\end{theorem}
\begin{proofsketch}
    The number of store operations needed to build the quantum component
    is equal to the total number of nodes in a perfectly balanced quantum B+ tree,
    which is $O(N)$.
\end{proofsketch}
\if 0
\begin{proof}
Let $h=\lfloor \log_B N \rfloor$. First, we evenly distribute the $N$ key-record pairs into $\lceil N/B^h \rceil \cdot B^{h-1}$ leaves. Then, we build the tree from the bottom to the top, starting from the leaves. Since there are $O(N)$ leaves, there are $O(N)$ internal nodes, so the building costs $O(N)$ time. When building the B+ tree, we need to update the QRAM to make the mapping correspond to it. Since each QRAM operation costs $O(1)$ time, the QRAM update costs $O(N)$ time in total.

Next, we prove the B+ tree is perfectly balanced by induction. We only discuss the case that $N$ is not a power of $B$. Otherwise, the B+ tree is full, so it is obviously perfectly balanced. For a leaf node $u$, we can calculate the weight $w(u)$:
\begin{align*}
 w(u)&\geq\left\lfloor\frac{N}{\lceil N/B^h \rceil \cdot B^{h-1}}\right\rfloor
\geq \frac{N}{ (N/B^h+1)  \cdot B^{h-1}}-1\\
&=\frac{N}{N/B+ B^{h-1}}-1
>\frac{N}{N/B+ B^{\log_B N-1}}-1\\
&=\frac{N}{N/B+ N/B}-1
=B/2-1.
\end{align*}
Then, we know that a leaf node has a weight between $B/2$ and $B$ such that it is perfectly balanced. Consider a non-root node $u$ of height $h>0$, it must has $B$ perfectly balanced non-dummy children such that $w(u)\geq B\cdot B^h/2=B^{h+1}/2.$ Therefore, the B+ tree is perfectly balanced.
\vspace{-1mm}
\end{proof}
\fi
\fi

\subsection{Global-Local Quantum Range Search}\label{subsec:static_range_query_alg}

{\color{black}In this section, we propose an algorithm called the \emph{global-local quantum range search} to solve the quantum range query with our quantum B+ tree.}

\if 0 
In this section, we propose a hybrid quantum-classical algorithm called
the \emph{global-classical local-quantum search}
to solve the quantum range query with our quantum B+ tree.
\fi 

Recall that our quantum range query aims to return a superposition
of all the key-record pairs such that each key in the result is within a query range $[x, y]$.
To better present our algorithm, we first propose a technique called \emph{post-selection}
which can answer the quantum range query with our dataset stored in a QRAM even in an \emph{unstructured} way
(i.e., without using the quantum B+ tree).

The major idea of post-selection is based on the
efficient processing of superpositions of QRAM and the quantum oracle.
Specifically, we first apply the store operations to store all the key-record pairs 
in the dataset $D$ in a QRAM, says $\mathcal{Q}'$,
which is a \emph{pre-processing} step. 
That is, for each $i \in [0, N-1]$, we do a store operation $\mathcal{Q}'[i] = (key_i, rec_i)$.
We can thus obtain a superposition of all the key-record pairs as 
$\frac{1}{\sqrt{N}}\sum_{i=0}^{N-1}\ket{key_i}\ket{rec_i}$
by a load operation on $\mathcal{Q}'$ in $O(1)$ time.
After obtaining this superposition,
we design a quantum oracle as follows to \emph{post-select} the superposition
of the desired items within the query range $[x, y]$.

Let $in$ denote the set of indices of all the key-record pairs in the dataset
such that the keys are within $[x, y]$,
and let $out$ denote the set of all the remaining indices.
First, we place an additional qubit with initial state $\ket{0}$
in the end for the purpose of
controlling and measuring.
Thus, we obtain $\frac{1}{\sqrt{N}}\sum_{i=0}^{N-1}\ket{key_i}\ket{rec_i}\ket{0}$.
Next, we perform a quantum oracle which transforms the above superposition
based on a controlled-X gate controlled by whether $key_i$ is within $[x, y]$.
Specifically, if $x \leq key_i \leq y$, the last qubit will be transformed to $\ket{1}$,
and otherwise, no transformation is performed.
After the transformation, we obtain
$\frac{1}{\sqrt{N}}\sum_{i \in in}\ket{key_i}\ket{rec_i}\ket{1}
\allowbreak+\allowbreak\frac{1}{\sqrt{N}}\sum_{i \in out}\ket{key_i}\ket{rec_i}\ket{0}$.
Note that $k$ (resp. $N-k$) denotes the size of $in$ (resp. $out$).
The above state can be rewritten as
$\frac{\sqrt{k}}{\sqrt{N}} \cdot \frac{1}{\sqrt{k}} \sum_{i \in in}\ket{key_i}\ket{rec_i}\ket{1}
+\frac{\sqrt{N-k}}{\sqrt{N}} \cdot \frac{1}{\sqrt{N-k}} \sum_{i \in out}\ket{key_i}\ket{rec_i}\ket{0}$,
where the part $\frac{1}{\sqrt{k}} \sum_{i \in in}\ket{key_i}\ket{rec_i}$
is exactly the quantum range query result we desire.
Therefore, finally, we measure the last qubit,
which will obtain the desired superposition of all the key-record pairs in set $in$
(i.e., $\frac{1}{\sqrt{k}} \sum_{i \in in}\ket{key_i}\ket{rec_i}$)
with probability equal to $\frac{k}{N}$ (i.e., the last qubit is measured to be 1).
Since our desired result could not be deterministically obtained,
we repeat the final measurement step 
until the last qubit is measured to be 1.
It is easy to verify that $N/k$ repeats are needed in expectation.\looseness=-1

\if 0
The following lemma shows that we need to repeat at least $N/k$ times
to obtain our desired result in expectation.
\begin{lemma}\label{lem:expected_post_selection}
    Let $N$ denote the ...
	On average, the post-selection needs to be iterated $N/k$ times to obtain the answer to the range query, where $N$ is the total number of values and $k$ is the number of values in the query range.
\end{lemma}
\fi

Overall, to answer the quantum range query,
the expected time complexity of the online processing of post-selection is $O(N/k)$,
since the final step is repeated $N/k$ times in expectation,
and all the other steps cost $O(1)$ based on our assumptions in quantum computers.
In range queries, the number of results $k$ is normally much smaller than the dataset size $N$,
and thus the $O(N/k)$ time cost could be too large.
Therefore, the post-selection technique as a standalone approach is inefficient,
since the dataset is organized in an unstructured manner in QRAM.

However, if $k$ and $N$ are approximately equal,
the cost of $O(N/k)$ could be small (e.g., close to an $O(1)$ cost) for a post-selection.
Motivated by this, in the following, we propose our efficient quantum range query algorithm
based on our quantum B+ tree, which leverages post-selection as a sub-step.

Our quantum range query algorithm is called
the \emph{\underline{G}lobal-\underline{C}lassical \underline{L}ocal-\underline{Q}uantum search} (GCLQ search).
It has two major steps.
The first step is the \emph{global classical} search,
which aims to find the candidate B+ tree nodes \emph{precisely}
from the classical B+ tree such that they contain all the query results
but the number of irrelevant results is as small as possible.
The second step is the \emph{local quantum} search,
which searches from the QRAMs in the quantum B+ tree
and returns the result in superposition efficiently
with a post-selection step.
In the following two subsection, we introduce the two steps, respectively.

\subsubsection{Global Classical Search}\label{subsubsec:global_classical}
We first introduce the global classical search step.
In this step, we traverse the classical B+ nodes from the root
to obtain the ``precise'' candidate nodes that are without too many irrelevant results.
Although the classical B+ is already capable of returning all the exact results,
we do not expect returning many detailed-level candidate nodes in this step
(which could cause the undesired cost of listing all the candidate nodes).
Instead, we postpone the processing of detailed-level nodes for fast local search
in the quantum component,
and we only return at most \emph{two} candidate nodes which are as ``precise'' as possible.

Specifically, given a B+ tree node with routing key $(L, U)$ and the query range $[x, y]$,
we classify this node into three categories,
namely an \emph{outside} node if range $[L, U]$ and range $[x, y]$ have no overlap,
a \emph{partial} node if $[L, U]$ is partially inside $[x, y]$,
and an \emph{inside} node if $[L, U]$ is completely inside $[x, y]$.
A partial node $i$ is said to be \emph{precise} if $i$ is a leaf or
there exists an inside node among the children of node $i$.
It can be verified that if a partial node is precise and the B+ tree is weight-balanced,
then a sufficiently significant portion of key-record pairs
under this node are included in the query range $[x, y]$
(which is formalized in a lemma shortly).
\if 0
The following lemma shows that if a partial node is precise,
then a sufficiently significant portion of key-record pairs
under this node are included in the query range $[x, y]$.
\begin{lemma}
    Given a query range $[x, y]$ and a B+ tree ...
\end{lemma}
\fi

Now, we present the detail of our global classical search,
which aims to return precise partial nodes with level as small as possible
(to avoid processing detailed-level nodes).
For simplicity, we assume that the root node is a partial node
(otherwise it falls to special cases where the results involve no item or all items).
We create an empty list $\mathcal{L}$ of tree nodes
and add the root to the list initially.
Then, we loop the following steps until $\mathcal{L}$ is empty.
(1) We check whether there exists a precise node in $\mathcal{L}$.
(2) If the answer of Step (1) is yes, we immediately add all nodes
in $\mathcal{L}$ into the returned candidate set and terminate.
Otherwise, we replace each node in $\mathcal{L}$ with all its non-outside children
(since outside nodes cannot contain the desired results).

For example, consider a query $QUERY(5, 11)$ given to the quantum B+ tree in Figure~\ref{fig:staticexp}.
We first check the root (i.e., node 0) with three children.
We find that two of them (i.e., node 1 and 2) are partial nodes
and one of them (i.e., node 3) is an outside node.
Thus, the root is not precise (i.e., there does not exist a precise node in $\mathcal{L}$),
and we replace the root with node 1 and node 2 in $\mathcal{L}$.
Then, we check whether node 1 or node 2 is precise.
We find that node 2 is precise since it contains an inside child (i.e., node 6).
Therefore, we obtain the returned candidate set containing node 1 and node 2.

The following lemma shows the effectiveness 
of the global classical search.
\begin{lemma}\label{lem:global_classical}
    Given a query range $[x, y]$ and a B+ tree that is weight-balanced,
    the returned candidate set of the global classical search
    contains at most two nodes.
    Moreover, let $\mathcal{R}$ be the set of all key-record pairs
    under the above returned candidate nodes,
    and let $\mathcal{R}^*$ be the set of all key-record pairs
    such that the keys are within the query range $[x, y]$.
    Then, $\mathcal{R}^* \subset \mathcal{R}$ and
    $\vert \mathcal{R}^* \vert \geq \frac{1}{8B} \vert \mathcal{R} \vert$.
\end{lemma}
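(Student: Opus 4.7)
The plan is to prove the three claims of the lemma---at most two returned nodes, $\mathcal{R}^* \subseteq \mathcal{R}$, and the ratio $|\mathcal{R}^*| \geq \frac{1}{8B}|\mathcal{R}|$---by maintaining an invariant on the working list $\mathcal{L}$ of the global classical search and then invoking the weight-balance properties of the classical B+ tree.

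First, I would show by induction on iterations that at every moment the nodes currently in $\mathcal{L}$ (i) are all partial, (ii) lie at a common height $h$, and (iii) number at most two. The base case $\mathcal{L}=\{\text{root}\}$ is immediate. In the inductive step, since no node in $\mathcal{L}$ is yet precise, every $p\in\mathcal{L}$ has no inside child. The key geometric observation is that $p$'s children partition $p$'s routing key into disjoint sorted subranges, and since $[x,y]$ is a contiguous interval, the non-outside children of $p$ consist of at most one child whose routing key contains $x$ and at most one whose routing key contains $y$; any child strictly between these two would be an inside child, contradicting the non-precision of $p$. A short case analysis on whether $|\mathcal{L}|=1$ (the single partial node contains both endpoints in its interior) or $|\mathcal{L}|=2$ (the two partial nodes lie at the left and right boundary of the overlap with $[x,y]$) then confirms that the replacement still produces at most two partial nodes at the next height. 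Termination at the first appearance of a precise node---leaves are always precise, so the recursion must terminate---gives claim (i). The inclusion $\mathcal{R}^* \subseteq \mathcal{R}$ follows from the same induction: replacing a node by its non-outside children only discards outside children, whose routing keys are disjoint from $[x,y]$ and therefore contain no element of $\mathcal{R}^*$.

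For the ratio bound, let $h$ be the common height of the returned nodes and let $p^*$ be a precise one among them. I would upper bound $|\mathcal{R}|$ by noting that each returned node has weight at most $B^{h+1}$ (for non-root nodes this is directly weight-balance; for the root, which can only be returned when $|\mathcal{L}|=1$, the same bound follows from its having at most $B$ children each of weight at most $B^h$), yielding $|\mathcal{R}| \leq 2B^{h+1}$. To lower bound $|\mathcal{R}^*|$, if $p^*$ is a leaf then the partial property gives $|\mathcal{R}^*| \geq 1$ and the ratio is at least $\frac{1}{2B}$; otherwise the inside child of $p^*$ guaranteed by precision is a non-root balanced node at height $h-1$, so its weight is at least $\frac{1}{4}B^h$ and all of its items lie in $[x,y]$, giving $|\mathcal{R}^*|\geq \frac{1}{4}B^h$ and hence the desired $\frac{1}{8B}$. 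The main obstacle is the invariant step: carefully ruling out a third partial child after replacement relies on the contiguity of the query range together with the disjoint sorted routing keys of sibling nodes in a B+ tree. Once that invariant is secured, the two parts of the lemma follow from direct applications of the weight-balance definitions.
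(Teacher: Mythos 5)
Your proposal is correct and follows essentially the same route as the paper's proof: at most two candidates per level via the disjointness of sibling routing keys and the contiguity of $[x,y]$, the inclusion $\mathcal{R}^*\subseteq\mathcal{R}$ because only outside nodes are discarded, and the $\frac{1}{8B}$ ratio from the precise node's inside child of weight at least $\frac{1}{4}B^{h}$ against a total of at most $2B^{h+1}$ items under the (at most two) returned nodes. Your induction invariant and the explicit weight bound for the root simply make rigorous the steps the paper labels ``easy to verify.''
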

\begin{proofsketch}
    Since the routing keys of all nodes in the same level are disjoint,
    we cannot have more than two partial nodes in the same level.
    It is also easy to verify that the returned nodes are from the same level,
    Thus, the candidate set contains at most two nodes.
    Since we only filter out the outside node in this algorithm,
    we have $\mathcal{R}^* \subset \mathcal{R}$.
    Finally, since the precise partial node either is a leaf
    (which contains at least $1/B$ items in the query range),
    or contains an inside child
    (which contains at least $\frac{1}{4B}$ items in the query range due to the balanced nodes of the B+ tree),
    and it can be verify that at least one returned node is precise,
    we have $\vert \mathcal{R}^* \vert \geq \frac{1}{8B} \vert \mathcal{R} \vert$.
\end{proofsketch}

\subsubsection{Local Quantum Search}\label{subsubsec:local}
Now, we introduce the local quantum search.
Since the local quantum search starts from at most two candidate nodes, consider answering $QUERY(x, y)$ starting from a node $u$ and another node $v$ as the candidate nodes in level $j$ and the height of the tree is $h$. Step 1 is to initialize the first $n_i$ quantum qubits to be $\frac{1}{\sqrt{2}}(\ket{u}+\ket{v})$,
where $n_i$ denotes the number of bits to store each node index $i$.
Then, in Step 2, we add $n_b = \log_2 B$ auxiliary qubits $\ket{0}$ to the last and also apply a Hadamard gate on each auxiliary qubit. We obtain $$\frac{1}{\sqrt{2}}(\ket{u}+\ket{v})\mathcal{H}\ket{0}\mathcal{H}\ket{0}\cdots \mathcal{H}\ket{0}=\frac{1}{\sqrt{2B}}(\ket{u}+\ket{v})\sum_{i=0}^{B-1}\ket{i}.$$ This step is to enumerate all the edges from the candidate nodes.
Then, in Step 3, we add $n_i$ qubits to the end and apply $\mathcal{Q}_0$ to obtain all the children of the candidate nodes, so we obtain $$\frac{1}{\sqrt{2B}}\ket{u}\sum_{i=0}^{B-1}\ket{i}\ket{c_i}+\frac{1}{\sqrt{2B}}\ket{v}\sum_{i=0}^{B-1}\ket{i}\ket{c_{i+B}},$$ where $c_0, \cdots, c_{B-1}$ are the $B$ children of $u$ and $c_B, \cdots, c_{2B-1}$ are the $B$ children of $v$. If we only look at the last $n$ qubits, we obtain $\frac{1}{\sqrt{2B}}\sum_{i=0}^{2B-1}\ket{c_i},$ which is the $2B$ children of $u$ and $v$. We repeat Step 2 and Step 3 for $h-j$ times so that we obtain all the $2B^{h-j}$ leaves below $u$ and $v$. Then, we do the same thing as Step 2 to enumerate all the $2B^{h-j+1}$ key-record pairs in the $2B^{h-j}$ leaves. In the last step, we apply $\mathcal{Q}_1$ to obtain all the key-record pairs below $u$ and $v$ and then do a post-selection search. Denote the $2B^{h-j+1}$ key-record pairs as $(key_0, rec_0), \cdots, (key_{2B^{h-j+1}-1}, rec_{2B^{h-j+1}-1})$. Then the quantum state becomes $\frac{1}{\sqrt{2B^{h-j+1}}}\sum_{i=0}^{2B^{h-j+1}-1}\ket{key_i}\ket{rec_i}.$
We also use $\ket{in}$ to denote the $k$ key-record pairs in the query range and use $\ket{out}$ to denote the other dummy key-record pairs and non-dummy key-record pairs which are not in the query range. We obtain $\frac{\sqrt{k}}{\sqrt{2B^{h-j+1}}}\ket{in}+\frac{\sqrt{2B^{h-j+1}-k}}{\sqrt{2B^{h-j+1}}}\ket{out}.$ If we do a post-selection, we can obtain $\ket{in}$ with probability $\frac{k}{2B^{h-j+1}}$.
\if 0
To analyze the complexity of a local quantum search, we have the following Lemma \ref{lemma3}.

\begin{lemma}\label{lemma3}
\vspace{-1mm}
	On average, the local quantum search needs $O(\log_B N)$ time.
	\vspace{-1mm}
\end{lemma}
\begin{proof}
By Lemma \ref{lemma1}, we repeat all the steps for $\frac{2B^{h-j+1}}{k}$ times on average. By the condition to trigger a local quantum search mentioned in Section \ref{global}, all the non-dummy key-record pairs below one of the children of $u$ and $v$ are all in the answer, therefore $k\geq \frac{1}{4}B^{h-j}$ by the definition of our quantum B+ tree. So, we need to repeat all the steps for at most $8B$ times, which is a constant time. In each iteration, we do Step 2 and Step 3 for at most $O(\log_B N)$ times, so the local quantum search needs $O(\log_B N)$ time.
\vspace{-1mm}
\end{proof}
\fi

For example, consider a query $QUERY(5, 11)$ on the quantum B+ tree in Figure \ref{fig:staticexp}. As mentioned in Section \ref{subsubsec:global_classical}, the candidate nodes are node 1 and node 2. First, we initialize $\ket{\psi}=\frac{1}{\sqrt{2}}(\ket{1}+\ket{2})$.
After applying Hadamard gates and $\mathcal{Q}_0$, we obtain $\ket{\psi}=\frac{1}{\sqrt{8}}(\ket{4}+\ket{5}+\ket{6}+\ket{7}+\ket{8})+\frac{\sqrt{3}}{\sqrt{8}}\ket{dummy}$, which consists of all the children of node 1 and node 2. Then, after applying Hadamard gates and $\mathcal{Q}_1$, we obtain all the key-record pairs below node 1 and node 2, which is $\ket{\psi}=\frac{1}{\sqrt{32}}(\ket{1}\ket{rec_1}+\ket{2}\ket{rec_2}+\ket{4}\ket{rec_4}+\ket{6}\ket{rec_6}+\ket{8}\ket{rec_8}+\ket{10}\ket{rec_{10}}+\ket{13}\ket{rec_{13}}+\ket{16}\ket{rec_{16}}+\ket{19}\ket{rec_{19}}+\ket{21}\ket{rec_{21}})+\frac{\sqrt{22}}{\sqrt{32}}\ket{dummy}$. Finally, by a post-selection, we can obtain $\ket{\psi}=\frac{1}{\sqrt{3}}(\ket{6}\ket{rec_6}+\ket{8}\ket{rec_8}+\ket{10}\ket{rec_{10}})$ with probability $\frac{3}{32}$.\looseness=-1

Finally, we show the time complexity of our GCLQ algorithm in the following theorem.
\begin{theorem}\label{the:time_complexity_gclq}
    On average, the quantum range query algorithm returns the answer in $O(\log_B N)$ time.
\end{theorem}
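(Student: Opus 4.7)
The plan is to bound the time complexity of the two stages of GCLQ separately and show that each stage contributes $O(\log_B N)$ on average.

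For the global classical search, I would argue as follows. Since the B+ tree is weight-balanced, its height is $h = O(\log_B N)$. The search descends one level per iteration, starting from the root. By Lemma~\ref{lem:global_classical} (specifically the disjointness argument used in its proof), at every level the working list $\mathcal{L}$ contains at most two partial nodes, and expanding each node to its non-outside children and classifying them costs $O(B) = O(1)$ time once $B$ is treated as a fixed branching parameter. Therefore the total cost of this stage is $O(h) = O(\log_B N)$.

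For the local quantum search, the analysis decomposes into (i) the cost of a single attempt and (ii) the expected number of attempts before the post-selection succeeds. For (i), starting from candidate nodes at level $j$, the algorithm performs $h-j$ rounds, each applying $O(\log B)$ Hadamard gates plus one load operation on $\mathcal{Q}_0$, followed by one final round of Hadamards plus a load on $\mathcal{Q}_1$ and a single-qubit measurement; each round costs $O(1)$ under the QRAM/oracle assumption, giving $O(h - j) = O(\log_B N)$ per attempt. For (ii), one attempt produces the superposition over the (at most) $2B^{h-j+1}$ key-record slots under the candidate nodes, and post-selection succeeds with probability $k/(2B^{h-j+1})$, so the expected number of attempts is $2B^{h-j+1}/k$. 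Here I invoke Lemma~\ref{lem:global_classical}: at least one returned candidate is precise, so $k \geq \tfrac{1}{8B}\,|\mathcal{R}| \geq \tfrac{1}{8B}\cdot 2B^{h-j+1} \cdot \tfrac{1}{2}$ once the dummy slots are accounted for, yielding a constant expected number of attempts (at most $16B = O(1)$). Multiplying (i) and (ii) gives expected time $O(\log_B N)$ for the local quantum search.

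Summing the two stages yields the claimed $O(\log_B N)$ expected complexity. The main obstacle in this plan is the clean linkage between the combinatorial guarantee of Lemma~\ref{lem:global_classical} (``at least a $1/(8B)$-fraction of items under the candidates are in the range'') and the probability of a single post-selection attempt succeeding, because the denominator in that probability counts dummy slots used to pad the superposition, not only non-dummy key-record pairs. I would handle this by carefully comparing $|\mathcal{R}|$ (non-dummy items under the candidates) with $2B^{h-j+1}$ (total slots after expansion): since each non-dummy node has weight at least $\tfrac{1}{4}B^{h'+1}$ at its own height $h'$, the ratio of non-dummy to total slots is at least a constant, so the $1/(8B)$ fraction of relevant non-dummy items translates into a $\Omega(1/B)$ success probability per attempt, which is exactly what is needed to keep the expected number of post-selection rounds constant.
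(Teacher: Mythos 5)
Your proposal is correct and follows the same decomposition as the paper: global classical search in $O(\log_B N)$, plus local quantum search analyzed as (cost per attempt $O(\log_B N)$) $\times$ (constant expected number of post-selection attempts). The one substantive difference is how the constant number of attempts is justified. The paper uses the trigger condition of the global search directly: since some child of a candidate is an inside node, all non-dummy items under that child (a balanced node of height $h-j-1$) are in the range, so $k \geq \tfrac{1}{4}B^{h-j}$ and the expected number of repeats is at most $2B^{h-j+1}/k \leq 8B$. You instead route through Lemma~\ref{lem:global_classical}'s guarantee $k \geq \tfrac{1}{8B}\vert\mathcal{R}\vert$ and then separately argue that the non-dummy items $\vert\mathcal{R}\vert$ form a constant fraction of the $2B^{h-j+1}$ slots; this is a valid (if more roundabout) path, and your identification of the dummy-slot issue is exactly the point that needs care. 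Two small inaccuracies in your constants: weight-balance only gives each non-root candidate weight at least $\tfrac{1}{4}B^{h-j+1}$ (not half the slots, as your ``$\cdot\tfrac{1}{2}$'' step suggests), so your chain yields roughly $32B$ rather than $16B$ expected attempts; and if the lone candidate is the root, the non-dummy density can be as small as $\Theta(1/B)$ (the root only guarantees two balanced children), giving $O(B^2)$ attempts. Since $B$ is treated as a constant here, exactly as in the paper's own ``at most $8B$ times, which is a constant'' step, none of this affects the $O(\log_B N)$ conclusion.
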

\begin{proofsketch}
    We need to show that both the global classical search and the local quantum search costs $O(\log_B N)$ time.
    The former is because we return at most two candidates and the height of the tree is $O(\log_B N)$.
    The latter is because we need to repeat all the steps for at most $8B$ times, which is a constant time, and in each iteration, we do Step 2 and Step 3 for at most $O(\log_B N)$ times, so the local quantum search needs $O(\log_B N)$ time.
\end{proofsketch}
\if 0
\begin{proof}
By Lemma \ref{lemma2}, the global classical search needs $O(\log_B N)$ time.  By Lemma \ref{lemma3}, the local quantum search needs $O(\log_B N)$ time. Therefore, the quantum range query algorithm needs $O(\log_B N)$ time.
\vspace{-1mm}
\end{proof}
\fi
By Theorem \ref{the:time_complexity_gclq} and Lemma \ref{lem:lower_bound_time_complexity}, this algorithm is asymptotically optimal in a quantum computer.

\if 0
Then, we propose the hybrid quantum-classical range search algorithm on a static quantum B+ tree. We first introduce the quantum post-selection search in Section \ref{post}. The post-selection returns the answer to a range query efficiently when $\frac{k}{N}$ is large, where $N$ is the number of records for search and $k$ is the number of keys in the range. Obviously, we cannot assume $\frac{k}{N}$ is large, since the range query can return only one record, so we further proposed a hybrid quantum-classical range search algorithm to solve this problem. This algorithm contains two main steps: a global classical search and a local quantum search. In Section \ref{global}, we introduce the classical part. In Section \ref{local}, we introduce the quantum part based on a quantum post-selection. Then, we analyze the combination of the global classical search and the local quantum search in Section \ref{comb}.

\subsubsection{Quantum Post-Selection search}\label{post}
\leavevmode\\
\begin{figure}[htbp]
  \centering
  \subfigure{
  \includegraphics[width=0.7\columnwidth]{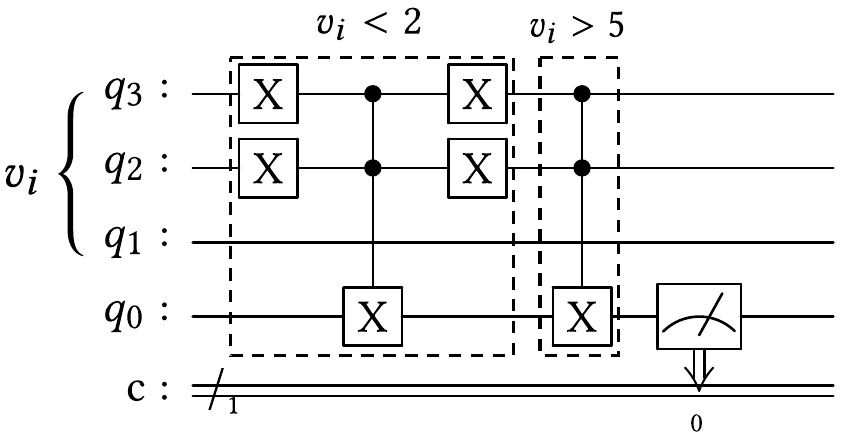}}
  \caption{Post-selection search for $QUERY(2, 5)$}
  \label{fig:postexp}
\end{figure}

To understand the quantum range search on a quantum B+ tree, we first discuss how to do an unstructured range search for $QUERY(x, y)$. Consider a QRAM $\mathcal{Q}$ such that $\mathcal{Q}\frac{1}{\sqrt{N}}\sum_{i=0}^{N-1}\ket{i}\ket{0}=\frac{1}{\sqrt{N}}\sum_{i=0}^{N-1}\ket{i}\ket{v_i},$ where $v_i$ is an integer value and $N$ is the number of values. Given $\mathcal{Q}$ and a pair of integers $x$ and $y$, we are to obtain the quantum states in the interval $[x, y]$ i.e., $\frac{1}{\sqrt{k}}\sum_{x\leq v_i\leq y}\ket{v_i},$ where $k$ is the number of values in the result. For simplicity, we use $\ket{in}=\frac{1}{\sqrt{k}}\sum_{x\leq v_i\leq y}\ket{v_i}$ to denote the values in the result and use $\ket{out}=\frac{1}{\sqrt{N-k}}\sum_{v_i < x \vee v_i > y}\ket{v_i}$ to denote the values not in the result. Then, we add an auxiliary qubit to obtain $$\frac{1}{\sqrt{N}}\sum_{i=0}^{N-1}\ket{v_i}\ket{0}=\frac{\sqrt{k}}{\sqrt{N}}\ket{in}\ket{0}+\frac{\sqrt{N-k}}{\sqrt{N}}\ket{out}\ket{0}.$$ Next, we need to change the quantum state from $\ket{0}$ to $\ket{1}$ after $\ket{out}$ by a quantum oracle $\mathcal{O}_{x,y}$:$$\ket{v_i}\ket{0} \rightarrow 
\left\{
	\begin{array}{ll}
		\ket{v_i}\ket{1} &\mbox{$v_i < x$ or $v_i > y$} ;\\
		\ket{v_i}\ket{0} &\mbox{otherwise} .\\
	\end{array}
\right.$$
Then, we obtain $\frac{\sqrt{k}}{\sqrt{N}}\ket{in}\ket{0}+\frac{\sqrt{N-k}}{\sqrt{N}}\ket{out}\ket{1}.$
The last step is to post-select the last qubit to be $\ket{0}$, which means we measure the last qubit, and if it is $1$, we repeat the whole process until it is $0$. After the post-selection, the state $\ket{out}$ is destroyed in the superposition so that we obtain $\ket{in}$. To analyze the complexity of the post-selection search, we obtain the following Lemma \ref{lemma1}.
\begin{lemma}\label{lemma1}
\vspace{-1mm}
	On average, the post-selection needs to be iterated $N/k$ times to obtain the answer to the range query, where $N$ is the total number of values and $k$ is the number of values in the query range.
	\vspace{-1mm}
\end{lemma}
\begin{proof}
After we measure the last auxiliary qubit, we will obtain $0$ with probability $\frac{k}{N}$ or $1$ with probability $\frac{N-k}{N}$ so that the expected number of measurements to obtain $0$ is $N/k$.
\vspace{-1mm}
\end{proof}
Figure \ref{fig:postexp} shows an example. Use $q_3q_2q_1$ to denote the binary representation of $v_i$ and use $q_0$ to denote the auxiliary qubit. Assume we have $\ket{q_3q_2q_1}=\frac{1}{2}(\ket{0}+\ket{1}+\ket{4}+\ket{7})$ and we are to answer $QUERY(2, 5)$ with post-selection. We add $q_0$ to the last, then apply the quantum circuit in Figure \ref{fig:postexp}. In the first dashed-line box, we consider the case $v_i<2$ such that we obtain $\ket{q_3q_2q_1}\ket{q_0}=\frac{1}{2}(\ket{0}+\ket{1})\ket{1}+\frac{1}{2}(\ket{4}+\ket{7})\ket{0}$. In the second dashed-line box, we consider the case $v_i>5$ such that we obtain $\ket{q_3q_2q_1}\ket{q_0}=\frac{1}{2}(\ket{0}+\ket{1}+\ket{7})\ket{1}+\frac{1}{2}\ket{4}\ket{0}$. Then, if we measure $q_0$ to be $0$ with probability $\frac{1}{4}$, then we can obtain $\ket{q_3q_2q_1}=\ket{4}$.

\subsubsection{Global Classical Search}\label{global}
\leavevmode\\
As described above, post-selection search costs $O(N/k)$ time, so it is obviously unacceptable to only use post-selection search to answer a range query. The reason is that if there is only one key-record pair in the query range, then $k=1$, which means the time complexity becomes $O(N)$. On the contrary, if $k$ is large, the post-selection search can be very fast. That is the reason why we use a data structure to do the range search even in a quantum computer.

Given a query $QUERY(x,y)$, in our quantum range search algorithm, the first step is to do a global classical search. The motivation of this step is to check if $k$ is large enough. We do the search levelwise starting from the root node. We call the nodes to be searched in a level the candidate nodes in the level. For example, the root node is the only candidate node in level $0$. Consider a candidate node $u$ in level $j$ has $f$ non-dummy children $c_0, c_1,\cdots, c_{f-1}$ and $f$ routing keys $(l_0,r_0), (l_1,r_1), \cdots, (l_{f-1},r_{f-1})$. For each $i\in [0, f-1]$, there are the following three cases.
\begin{itemize}
	\item Case 1: $r_i < x$ or $l_i > y$. That means $c_i$ does not have an answer. We do nothing.
	\item Case 2: $l_i < x$ and $x\leq r_i\leq y $ or $r_i > y$ and $x\leq l_i\leq y$. A part of the non-dummy key-record pairs below $c_i$ is in the answer, so add $c_i$ to the candidate nodes in level $j+1$. 
	\item Case 3: $l_i \geq x$ and $r_i \leq y$. That means all the non-dummy key-record pairs below $c_i$ are in the answer. Turn to a local quantum search starting from the candidate nodes in level $j$.
\end{itemize}

To analyze the complexity of the global classical search, we have the following Lemma \ref{lemma2}.

\begin{lemma}\label{lemma2}
\vspace{-1mm}
	The global classical search costs $O(\log_B N)$ time.
	\vspace{-1mm}
\end{lemma}
\begin{proof}
Assume in level $j$, the candidate nodes are $u_0, u_1, \cdots, u_{t-1}$, where $t\geq 3$ is the number of candidate nodes in the level. Obviously, all the key-record pairs below $u_1$ are in the answer, so we turn to the local quantum search in level $j-1$ and cannot reach level $j$, which makes a contradiction. Therefore, in each level, there are at most $2$ candidate nodes. Since the height of the tree is $O(\log_B N)$, there are totally $O(\log_B N)$ candidate nodes in the B+ tree. Hence, the global classical search costs $O(\log_B N)$ time.
\vspace{-1mm}
\end{proof}

For example, consider a query $QUERY(5, 11)$ given to the quantum B+ tree in Figure \ref{static}. We first check node 0. There is no child in Case 3. Node 1 and node 2 are candidate nodes since they are in Case 2. Then, we check node 1 and node 2. We find node 6 is a child in Case 3, so we turn to a local quantum search starting from node 1 and node 2.

\subsubsection{Local Quantum Search}\label{local}
\leavevmode\\
\begin{figure}[htbp]
  \centering
  \subfigure{
  \includegraphics[width=\columnwidth]{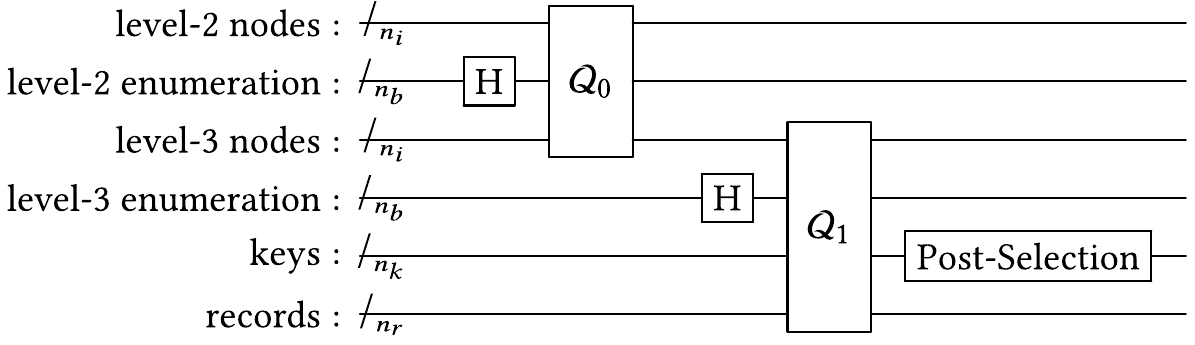}}
  \caption{Local quantum search for $QUERY(5, 11)$}
  \label{fig:localexp}
\end{figure}

Since the local quantum search starts from at most two candidate nodes, consider answering $QUERY(x, y)$ starting from a node $u$ and another node $v$ as the candidate nodes in level $j$ and the height of the tree is $h$. Step 1 is to initialize the first $n_i$ quantum qubits to be $\frac{1}{\sqrt{2}}(\ket{u}+\ket{v}).$ Then, in Step 2, we add $n_b$ auxiliary qubits $\ket{0}$ to the last and also apply a Hadamard gate on each auxiliary qubit. We obtain $$\frac{1}{\sqrt{2}}(\ket{u}+\ket{v})\mathcal{H}\ket{0}\mathcal{H}\ket{0}\cdots \mathcal{H}\ket{0}=\frac{1}{\sqrt{2B}}(\ket{u}+\ket{v})\sum_{i=0}^{B-1}\ket{i}.$$ This step is to enumerate all the edges from the candidate nodes.
Then, in Step 3, we add $n_i$ qubits to the end and apply $\mathcal{Q}_0$ to obtain all the children of the candidate nodes, so we obtain $$\frac{1}{\sqrt{2B}}\ket{u}\sum_{i=0}^{B-1}\ket{i}\ket{c_i}+\frac{1}{\sqrt{2B}}\ket{v}\sum_{i=0}^{B-1}\ket{i}\ket{c_{i+B}},$$ where $c_0, \cdots, c_{B-1}$ are the $B$ children of $u$ and $c_B, \cdots, c_{2B-1}$ are the $B$ children of $v$. If we only look at the last $n$ qubits, we obtain $\frac{1}{\sqrt{2B}}\sum_{i=0}^{2B-1}\ket{c_i},$ which is the $2B$ children of $u$ and $v$. We repeat Step 2 and Step 3 for $h-j$ times so that we obtain all the $2B^{h-j}$ leaves below $u$ and $v$. Then, we do the same thing as Step 2 to enumerate all the $2B^{h-j+1}$ key-record pairs in the $2B^{h-j}$ leaves. In the last step, we apply $\mathcal{Q}_1$ to obtain all the key-record pairs below $u$ and $v$ and then do a post-selection search. Denote the $2B^{h-j+1}$ key-record pairs as $(key_0, rec_0), \cdots, (key_{2B^{h-j+1}-1}, rec_{2B^{h-j+1}-1})$. Then the quantum states become $\frac{1}{\sqrt{2B^{h-j+1}}}\sum_{i=0}^{2B^{h-j+1}-1}\ket{key_i}\ket{rec_i}.$ Similar to Section \ref{post}, we use $\ket{in}$ to denote the $k$ key-record pairs in the query range and use $\ket{out}$ to denote the other dummy key-record pairs and non-dummy key-record pairs which are not in the query range. We obtain $\frac{\sqrt{k}}{\sqrt{2B^{h-j+1}}}\ket{in}+\frac{\sqrt{2B^{h-j+1}-k}}{\sqrt{2B^{h-j+1}}}\ket{out}.$ If we do a post-selection, we can obtain $\ket{in}$ with probability $\frac{k}{2B^{h-j+1}}$. To analyze the complexity of a local quantum search, we have the following Lemma \ref{lemma3}.

\begin{lemma}\label{lemma3}
\vspace{-1mm}
	On average, the local quantum search needs $O(\log_B N)$ time.
	\vspace{-1mm}
\end{lemma}
\begin{proof}
By Lemma \ref{lemma1}, we repeat all the steps for $\frac{2B^{h-j+1}}{k}$ times on average. By the condition to trigger a local quantum search mentioned in Section \ref{global}, all the non-dummy key-record pairs below one of the children of $u$ and $v$ are all in the answer, therefore $k\geq \frac{1}{4}B^{h-j}$ by the definition of our quantum B+ tree. So, we need to repeat all the steps for at most $8B$ times, which is a constant time. In each iteration, we do Step 2 and Step 3 for at most $O(\log_B N)$ times, so the local quantum search needs $O(\log_B N)$ time.
\vspace{-1mm}
\end{proof}

For example, consider a query $QUERY(5, 11)$ on the quantum B+ tree in Figure \ref{static}. As mentioned in Section \ref{global}, the candidate nodes are node 1 and node 2. First, we initialize $\ket{\psi}=\frac{1}{\sqrt{2}}(\ket{1}+\ket{2})$. Then, we apply the quantum circuit in Figure \ref{fig:localexp}. After applying Hadamard gates and $\mathcal{Q}_0$, we obtain $\ket{\psi}=\frac{1}{\sqrt{8}}(\ket{4}+\ket{5}+\ket{6}+\ket{7}+\ket{8})+\frac{\sqrt{3}}{\sqrt{8}}\ket{dummy}$, which consists of all the children of node 1 and node 2. Then, after applying Hadamard gates and $\mathcal{Q}_1$, we obtain all the key-record pairs below node 1 and node 2, which is $\ket{\psi}=\frac{1}{\sqrt{32}}(\ket{1}\ket{rec_1}+\ket{2}\ket{rec_2}+\ket{4}\ket{rec_4}+\ket{6}\ket{rec_6}+\ket{8}\ket{rec_8}+\ket{10}\ket{rec_{10}}+\ket{13}\ket{rec_{13}}+\ket{16}\ket{rec_{16}}+\ket{19}\ket{rec_{19}}+\ket{21}\ket{rec_{21}})+\frac{\sqrt{22}}{\sqrt{32}}\ket{dummy}$. Finally, by a post-selection, we can obtain $\ket{\psi}=\frac{1}{\sqrt{3}}(\ket{6}\ket{rec_6}+\ket{8}\ket{rec_8}+\ket{10}\ket{rec_{10}})$ with probability $\frac{3}{32}$.

\subsubsection{Quantum Range Query}\label{comb}
\leavevmode\\
\begin{algorithm}[htbp]
\SetKwFunction{Query}{QUERY}
\SetKwFunction{Break}{break}
\caption{\protect\Query{x, y}} \label{alg-1}
\KwIn{a lower bound $x$ and an upper bound $y$.}
\KwOut{$\frac{1}{\sqrt{k}}\sum_{x\leq key_i \leq y}\ket{key_i}\ket{rec_i}$ where $k$ is the number of records with keys in $[x, y]$.}
	\tcp*[l]{A global classical search}
	Create lists $L$ and $L'$ to store candidate nodes;\\
	$L.add(root)$; \tcp*[r]{Initialize $L$ with the root node}
	$Found$ $\leftarrow$ $False$; \tcp*[r]{to denote if we need a quantum search}
	\While{$L$ consists of internal nodes}{
		$L'$ $\leftarrow$ $\emptyset$;\\
		\ForEach{node $u$ in $L$}{
			\ForEach{child $c_i$ of $u$}{
				\lIf{$l_i \geq x$ and $r_i \leq y$}{
					$Found$ $\leftarrow$ $True$; \Break ;
				}
				\lIf{$l_i \leq y$ and $r_i \geq x$}{
					$L'.add(c_i)$;
				}
			}
			\lIf{Found}{
				\Break ;
			}
		}
		\lIf{Found}{
			\Break ;
		}
		$L$ $\leftarrow$ $L'$;\\
	}
	\uIf{not Found}{
		$L'$ $\leftarrow$ $\emptyset$;\\
		\ForEach{node $u$ in $L$}{
			\ForEach{key-record pair $(key_i, rec_i)$ stored in $u$}{
				\lIf{$x \leq key_i \leq y$}{
					$L'.add((key_i, rec_i))$;
				}
			}
		}
		\Return{$\frac{1}{\sqrt{|L'|}}\sum \ket{L'_i}$}
	}
	\Else{
		\tcp*[l]{A local quantum search}
		$\ket{\psi}$ $\leftarrow$ $\frac{1}{\sqrt{|L|}}\sum \ket{L_i}$;\\
		\While{$\ket{\psi}$ consists of internal nodes}{
			$\ket{\psi}=\frac{1}{\sqrt{t}}\sum_{i=0}^{t-1}\ket{u_i}$ $\xmapsto{H-gates}$ $\frac{1}{\sqrt{tB}}\sum_{i=0}^{t-1}\ket{u_i}\sum_{j=0}^{B-1}\ket{j}$ $\xmapsto{\mathcal{Q}_0}$ $\frac{1}{\sqrt{tB}}\sum_{i=0}^{tB-1}\ket{c_i}$;\tcp*[r]{search for the children}
		}
		$\ket{\psi}=\frac{1}{\sqrt{t}}\sum_{i=0}^{t-1}\ket{u_i}$ $\xmapsto{H-gates}$ $\frac{1}{\sqrt{tB}}\sum_{i=0}^{t-1}\ket{u_i}\sum_{j=0}^{B-1}\ket{j}$ $\xmapsto{\mathcal{Q}_1}$ $\frac{1}{\sqrt{tB}}\sum_{i=0}^{tB-1}\ket{key_i}\ket{rec_i}$ $\xmapsto{\mbox{post-selection}}$ $\frac{1}{\sqrt{k}}\sum_{x\leq key_i\leq y}\ket{key_i}\ket{rec_i}$;\\
		\Return{$\frac{1}{\sqrt{k}}\sum_{x\leq key_i\leq y}\ket{key_i}\ket{rec_i}$}
	}
\end{algorithm}

In summary, we obtain Algorithm \ref{alg-1}. The quantum range query algorithm on a static quantum B+ tree works as follows. First, we do a global classical search in the classical part. We can obtain at most two candidate nodes. Then, we do a local quantum search in the quantum part starting from the two candidate nodes. In the local quantum search, we use the QRAM to obtain all the key-record pairs below the candidate nodes, then do a post-selection search to obtain the answer. To analyze the complexity of the algorithm, we have the following Theorem \ref{theo2}.
\begin{theorem}\label{theo2}
\vspace{-1mm}
	On average, the quantum range query algorithm returns the answer in $O(\log_B N)$ time.
\end{theorem}
\begin{proof}
By Lemma \ref{lemma2}, the global classical search needs $O(\log_B N)$ time.  By Lemma \ref{lemma3}, the local quantum search needs $O(\log_B N)$ time. Therefore, the quantum range query algorithm needs $O(\log_B N)$ time.
\vspace{-1mm}
\end{proof}
By Theorem \ref{theo2} and Theorem \ref{theostatic}, this algorithm is asymptotically optimal in a quantum computer.

\fi

\section{Dynamic and Multi-dimensional Variants}\label{sec:dyn_md}

In this section, we introduce two variant of our quantum B+ tree,
the dynamic quantum B+ tree (in Section~\ref{subsec:dynamic})
and the static quantum range tree (in Section~\ref{subsec:md}),
which solves the dynamic and the multi-dimensional range queries, respectively.

\subsection{Dynamic Quantum B+ Tree}\label{subsec:dynamic}
In this section, we introduce how to make the static quantum B+ tree dynamic.
One idea of designing our quantum B+ tree is to retain a classical B+ tree structure and
to maintain a concise replication of hierarchical relationships in QRAMs.
This enables the flexible extension to the B+ tree variants that have been well studied
in classical computers.
As such, we adapt the idea of the logarithmic method~\cite{bentley1980decomposable}
to enable the insertion operations of the quantum B+ tree
by building some forests on the quantum B+ tree.
Based on the forests, we propose our approach to perform deletions.
In the following, the details of insertion and deletion
are discussed in Section~\ref{subsubsec:ins_del},
and then in Section~\ref{subsubsec:dynamic_query},
we introduce how to solve the dynamic quantum range query with the dynamic quantum B+ tree.

\subsubsection{Insertion and Deletion}\label{subsubsec:ins_del}
Following the idea of the logarithmic method~\cite{bentley1980decomposable},
{\color{black}
we build at most $L$ forests, says $F_0, \cdots, F_{L - 1}$,
where $L = \lfloor\log_B N\rfloor+1$, and for each $i\in[0, L-1]$,
the forest $F_i$ contains at most $B-1$ static quantum B+ trees of height $i$.

To insert a new key-record pair $(key, rec)$, we insert it into a sorted list first.
When the length of this sorted list reaches $B$, we flush it into $F_0$.
Then, whenever a forest $F_i$ has $B$ quantum B+ trees,
which indicates that we have $B$ quantum B+ trees of height $i$,
we merge the $B$ quantum B+ trees of height $i$ into a quantum B+ trees of height $i+1$,
and add it into $F_{i+1}$.
}

In additional to the classical B+ tree, we also replicate the insertion in the quantum B+ tree
such that the hierarchical relationships in the QRAM are consistent with the classical component.
Note that the store operation costs $O(1)$ in the QRAM, which creates marginal extra cost for insertion.

\begin{figure*}[tbp]
  \centering
  \subfigure[Deletion of $(6, rec_6)$ in node 5: The weight of node 1 becomes $3$ which is less than $\frac{1}{4}B^2=4$, and thus node 1 becomes imbalanced. It borrows node 6 from node 2 such that node 1 and node 2 are both balanced.]{
    \label{fig:2-1}
    \includegraphics[width=0.54\textwidth]{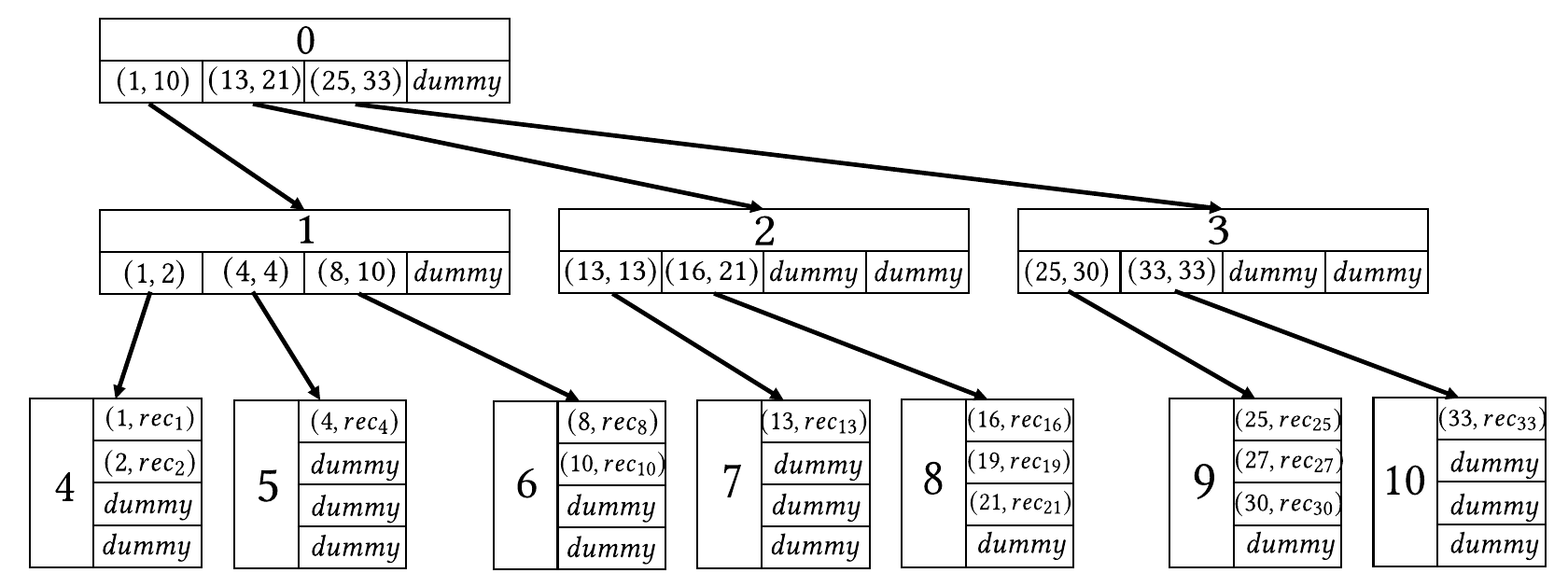}}
  \subfigure[Deletion of $(27, rec_{27})$ in node 9: Node 3 becomes imbalanced, and it cannot borrow node 8 from node 2. Node 2 and node 3 cannot be directly merged since they have 5 children, and thus the whole subtree is rebuilt.]{
    \label{fig:2-2}
    \includegraphics[width=0.44\textwidth]{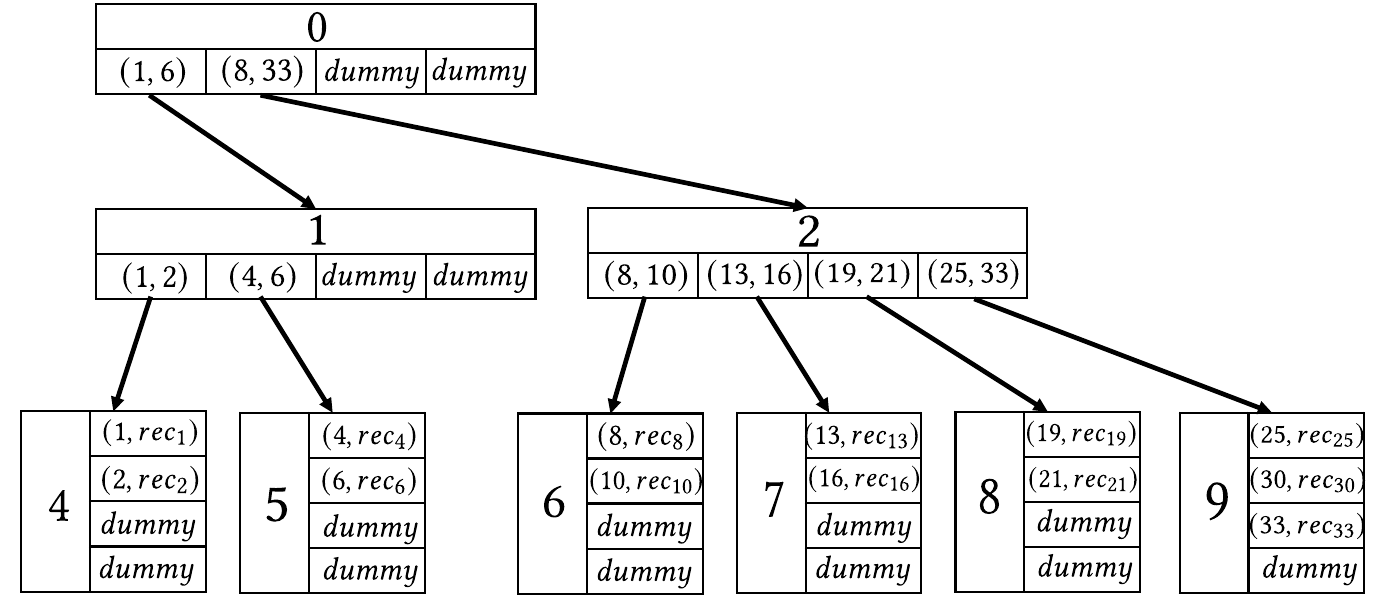}}
  \caption{Examples of Deletions}
  \label{fig:deleteexp}
  \vspace*{-0.2cm}
\end{figure*}

Next, we discuss how to delete a key-record pair in a dynamic quantum B+ tree.
It involves two steps.
The first step is to locate the quantum B+ tree in the forests
which contains the key-record pair.
The second step is to delete the key-record pair in both the classical component and
the quantum component of the quantum B+ tree.

To locate the tree containing a key-record pair,
we assign each key-record pair a unique ID in ascending order (e.g., by a counter).
We also maintain an auxiliary B+ tree $\mathcal{T}_0$
to store the one-to-one mapping from the key-record pair to its ID.
When a key-record pair is inserted, we insert the key-record pair and its ID into $\mathcal{T}_0$.
When a key-record pair is deleted, we delete it in $\mathcal{T}_0$ accordingly.
Furthermore, we maintain another auxiliary B+ tree $\mathcal{T}_1$
to store the mapping from the key-record pair ID to the forest $F_i$ it belongs to.
Similar to $\mathcal{T}_0$, we do insertions and deletions in $\mathcal{T}_1$ accordingly.
In addition, when we merge $F_i$, we update $\mathcal{T}_1$ for all the key-record pairs in $F_i$.

{\color{black}
To delete a key-record pair $(key, rec)$ in a B+ tree in $F_i$,
we do a classical search to find the leaf that contains $(key, rec)$ and
replace the key-record pair with a $dummy$.
Then, we check if its ancestors are still balanced.
If an imbalanced ancestor is found, we first check if it can borrow a child from its sibling.
Figure \ref{fig:2-1} shows an example in this case.
Otherwise, we check if it can be directly merged with its sibling,
which indicates that the node and its sibling have at most $B$ children.
If not, we merge the node and its sibling by rebuilding the subtrees below them.
Figure \ref{fig:2-2} shows an example in this case.
Then, after rebalancing the B+ tree, we check if the root node still has at least two children.
If not, we check if it can borrow a child from another B+ tree in $F_i$.
If not, then if there are at least two B+ trees in $F_i$,
we merge the root nodes of the two B+ trees.
Otherwise, we remove the root node and downgrade the B+ tree from $F_i$ to $F_{i-1}$.
}

The following lemma shows that the insertions and deletions in the quantum B+ tree is efficient.
\begin{lemma}\label{lem:insertion_cost}
    The amortized cost of an insertion or a deletion in a dynamic quantum B+ tree is $O(\log_B N)$.
\end{lemma}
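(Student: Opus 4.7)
The plan is to bound the insertion and deletion costs separately, using the standard amortized analysis of Bentley and Saxe's logarithmic method~\cite{bentley1980decomposable} for insertions and a weight-balanced potential argument for deletions.

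For insertions, I would first observe that building a perfectly balanced static quantum B+ tree on $n$ sorted items costs $O(n)$ time, because such a tree has $O(n)$ nodes and each node corresponds to $O(1)$ QRAM store operations. Under the logarithmic method, forest $F_i$ is flushed upward only after it accumulates $B$ trees of height $i$, whereupon the $\Theta(B^{i+1})$ items are rebuilt into one tree of height $i+1$ at cost $O(B^{i+1})$. Charging this cost uniformly across the $\Theta(B^{i+1})$ items participating in the merge gives $O(1)$ amortized work per item per level. Since each inserted item participates in at most one merge per level and there are $O(\log_B N)$ levels, the amortized work per insertion is $O(\log_B N)$. Updates to the auxiliary B+ trees $\mathcal{T}_0$ and $\mathcal{T}_1$ contribute $O(\log_B N)$ per insertion on top of this, so the bound is preserved.

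For deletions, I would decompose the cost into (i) locating the quantum B+ tree holding the victim, (ii) removing its leaf entry, and (iii) rebalancing. Steps (i) and (ii) are straightforward: looking up the ID via $\mathcal{T}_0$ and the containing forest via $\mathcal{T}_1$ costs $O(\log_B N)$, while a root-to-leaf classical descent inside the affected tree plus the dummy replacement costs another $O(\log_B N)$.

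The main obstacle is step (iii), since rebuilding the subtree rooted at a node of height $h$ can cost $\Theta(B^{h+1})$ in the worst case. I would absorb this via a weight-balanced potential argument. Whenever a height-$h$ node is restored by borrowing, merging, or rebuilding, the weight-balanced invariant guarantees that it emerges perfectly balanced, with weight at least $\frac{1}{2}B^{h+1}$, while the threshold for triggering another rebuild is $\frac{1}{4}B^{h+1}$. Hence at least $\frac{1}{4}B^{h+1}$ deletions within its subtree must occur before another rebuild is possible, and charging the $O(B^{h+1})$ rebuild cost across these deletions yields $O(1)$ amortized work per deletion per level. Summing over the $O(\log_B N)$ levels gives $O(\log_B N)$ amortized rebalancing cost per deletion, and combining (i)--(iii) completes the bound.
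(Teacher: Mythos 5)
Your overall route is the same as the paper's: the insertion bound comes from the Bentley--Saxe logarithmic-method amortization (the paper simply cites Theorem~3.1 of \cite{bentley1980decomposable} plus the $O(n)$ bulk-build of a static quantum B+ tree, which you re-derive), and the deletion bound comes from the standard partial-rebuilding argument of \cite{overmars1983design}: a height-$h$ node emerges from a rebuild perfectly balanced with weight at least $\frac{1}{2}B^{h+1}$, can only trigger another expensive rebuild after its weight drops below $\frac{1}{4}B^{h+1}$, hence $\Omega(B^{h+1})$ deletions absorb the $O(B^{h+1})$ rebuild cost, giving $O(1)$ per deletion per level and $O(\log_B N)$ overall. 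One small inaccuracy: a node restored by \emph{borrowing} a child from a sibling need not be perfectly balanced afterwards (its weight may only slightly exceed $\frac{1}{4}B^{h+1}$), but this is harmless since borrowing is cheap and the potential argument is only needed to pay for rebuilds.

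There is, however, a genuine gap in your insertion analysis. You assert that maintaining the auxiliary trees $\mathcal{T}_0$ and $\mathcal{T}_1$ adds only $O(\log_B N)$ per insertion, but $\mathcal{T}_1$ stores the map from item ID to the forest containing that item, so every merge of $F_i$ into $F_{i+1}$ changes this map for all $\Theta(B^{i+1})$ participating items. Updating $\mathcal{T}_1$ item by item costs $\Theta(B^{i+1}\log_B N)$ per merge; charged to the items this is $O(\log_B N)$ per item per level, and since an item can participate in a merge at each of the $O(\log_B N)$ levels over its lifetime, the amortized insertion cost would degrade to $O(\log_B^2 N)$, contradicting the claimed bound. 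The paper closes exactly this hole with a dedicated sub-argument: because IDs are assigned by an increasing counter, the IDs present in $F_i$ form a contiguous range disjoint from those of the other forests, so the entire merge-time update of $\mathcal{T}_1$ collapses to a single range update that lazy propagation \cite{ibtehaz2021multidimensional} executes in $O(\log_B N)$ per merge. Without this observation (or some equivalent device, e.g., storing the forest label once per tree rather than once per item), your stated $O(\log_B N)$ amortized insertion bound does not follow from the steps you give.
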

\begin{proofsketch}
    By Theorem 3.1 in \cite{bentley1980decomposable}, $N$ insertions totally cost $O(N\log_B N)$ time. Therefore, the amortized cost of one insertion is $O(\log_B N)$.
    For deletions, the first step costs $O(\log_B N)$ time, since it consists of two point queries in B+ trees.
    By the analysis of partial rebuilding in \cite{overmars1983design}, the second step also costs $O(\log_B N)$ time.
    Therefore, the amortized cost of a deletion is $O(\log_B N)$.
\end{proofsketch}

\subsubsection{Query}\label{subsubsec:dynamic_query}
To answer a query $QUERY(x,y)$ on our dynamic quantum B+ tree,
we extend our idea of the GCLQ algorithm to perform both a global classical search
and a local quantum search on the forests $F_0, \cdots, F_{L-1}$.

First, for global classical search, we follow the similar global search steps
as introduced in Section~\ref{subsec:static_range_query_alg}
for each tree in the forests.
Specifically, we first create an empty list $\mathcal{L}_i$
for each forest $F_i$ ($i \in [0, L-1]$),
and we add all the root nodes in $F_i$ into $L_i$.
Then, for each list $\mathcal{L}_i$,
we perform a global classical search from each root in $\mathcal{L}_i$.
A candidate set of at most two nodes are returned for each root,
which are then added to the original list for easier processing.

Then, we perform the local quantum search for the candidate nodes in the lists
$\mathcal{L}_0,\cdots, \mathcal{L}_{L-1}$.
{\color{black}
Consider the nodes $u_0, u_1,\cdots, u_{m-1}$ in the lists, where $m$ is the total number of nodes in $\mathcal{L}_0,\cdots, \mathcal{L}_i$. We initialize the quantum bits to be $\sum_{i=0}^{m-1}\frac{\sqrt{B^{h(u_i)+1}}}{\sqrt{\sum_{j=0}^{m-1} B^{h(u_j)+1}}}\ket{u_i},$ where $h(u_i)$ is the height of $u_i$, and $\frac{\sqrt{B^{h(u_i)+1}}}{\sqrt{\sum_{j=0}^{m-1} B^{h(u_j)+1}}}$ is the normalized amplitude of each $u_i$ such that each of the key-record pair below the nodes in the lists has the same amplitude in the result.
\if 0
By the amplitude encoding scheme in \cite{long2001efficient}, the initialization costs $O(m)$ time.
Since by Lemma \ref{lemma2}, $m=O(\log_B N)$, the quantum bits can be obtained in $O(\log_B N)$ time.
\fi
Then, we do the same steps 
in Section \ref{subsubsec:local} to obtain the leaves below the nodes. Finally, we do a post-selection search as discussed in Section \ref{subsec:static_range_query_alg}.
}

The following theorem shows the time complexity for the dynamic quantum B+ tree
to answer a quantum range query.
\begin{theorem}\label{the:time_complexity_dynamic}
    On average, the dynamic quantum B+ tree answers a range query in $O(\log^2_B N)$ time.
\end{theorem}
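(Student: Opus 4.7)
\begin{proofsketch}
The plan is to decompose the total query time into the cost of the global classical search phase and the cost of the local quantum search phase, and bound each separately.

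First I would bound the global classical search. The dynamic quantum B+ tree stores its data in $L=\lfloor \log_B N\rfloor+1$ forests $F_0,\ldots,F_{L-1}$, and $F_i$ contains at most $B-1$ static quantum B+ trees of height $i$. Each such tree admits the same per-tree global classical search as in Section~\ref{subsubsec:global_classical}, which traverses at most two partial nodes per level and therefore costs $O(i)$ time on a tree of height $i$. Summing over all trees in all forests gives
\begin{equation*}
\sum_{i=0}^{L-1}(B-1)\cdot O(i) \;=\; O(L^2) \;=\; O(\log^2_B N),
\end{equation*}
which also bounds the total number $m$ of candidate nodes returned across all forests by $2(B-1)L = O(\log_B N)$.

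Next I would bound the local quantum search. Initializing the weighted superposition $\sum_{i=0}^{m-1}\tfrac{\sqrt{B^{h(u_i)+1}}}{\sqrt{\sum_j B^{h(u_j)+1}}}\ket{u_i}$ over the $m=O(\log_B N)$ candidate nodes takes $O(m)=O(\log_B N)$ time by amplitude encoding. The level-wise expansion through $\mathcal{Q}_0$ and $\mathcal{Q}_1$, followed by post-selection, is structurally identical to the static case, so each iteration costs $O(\log_B N)$. The key point here, and what I expect to be the main obstacle, is to argue that the expected number of post-selection iterations remains $O(1)$: because the weights are chosen so that every key-record pair sitting below some candidate node receives the \emph{same} amplitude in the initialized superposition, the success probability of post-selection equals the global fraction of in-range items among all items under the candidate nodes. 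Applying Lemma~\ref{lem:global_classical} tree by tree shows that within each tree this fraction is at least $\tfrac{1}{8B}$, and a weighted-average argument confirms the same $\tfrac{1}{8B}$ lower bound holds after aggregating across all trees and forests. Hence the local quantum search runs in $O(\log_B N)$ expected time.

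Combining the two phases gives the overall expected bound $O(\log^2_B N)+O(\log_B N)=O(\log^2_B N)$, as claimed.
\end{proofsketch}
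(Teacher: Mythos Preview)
Your decomposition into global and local phases, and your $O(\log_B^2 N)$ bound for the global classical search, match the paper's argument. Where you diverge is the expected number of post-selection iterations: you claim $O(1)$, whereas the paper only claims $O(\log_B N)$, and your justification has a gap.

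The weighted-average step needs the $\tfrac{1}{8B}$ lower bound from Lemma~\ref{lem:global_classical} to hold for \emph{every} tree that contributes candidates. But a tree can have a partial root (its routing key overlaps $[x,y]$) while containing zero in-range items; for such a tree the per-tree global search descends all the way to partial leaves, and those leaves are returned as candidates (a leaf is precise by definition) even though none of their items lie in $[x,y]$. That tree then contributes up to $2B$ slots to the denominator and $0$ to the numerator, so its per-tree fraction is $0$, not $\geq\tfrac{1}{8B}$, and the weighted average can fall below $\tfrac{1}{8B}$. With up to $(B-1)L=O(\log_B N)$ trees of this kind and overall answer size $k$ as small as $1$, the success probability can be $\Theta(1/\log_B N)$ rather than $\Theta(1)$. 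The paper sidesteps this by a cruder estimate: it upper-bounds the total slot count by $O(B^{i+2}\log_B N)$ (at most $O(\log_B N)$ candidate nodes, each covering at most $B^{i+1}$ slots) and lower-bounds $k$ by $\tfrac{1}{4}B^i$ from a \emph{single} witnessed inside child, yielding $O(\log_B N)$ expected iterations and hence $O(\log_B^2 N)$ for the local phase as well. Since the global phase already costs $O(\log_B^2 N)$, both routes give the same final bound, so your gap does not affect the theorem's conclusion; you could also close it directly by having the global search discard leaf-level candidates that contain no in-range item before forming the superposition.
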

\begin{proofsketch}
    The global classical search costs $O(\log^2_B N)$ time,
    since we have $O(\log_B N)$ forests and the global search for each forest needs $O(\log_B N)$ time.
    For the local quantum search, the initialization and the steps to obtain the leaves cost $O(\log_B N)$.
    The average number of post-selection is $O(\log_B N)$, since there are at most $2B\lfloor\log_B N\rfloor$ nodes in the lists.
    Totally, the time complexity is $O(\log^2_B N)$.
\end{proofsketch}

\subsection{Quantum Range Tree}\label{subsec:md}

In this section, we consider the multi-dimensional quantum range query.
The range tree~\cite{bentley1978decomposable} is a common data structure
for the classical multi-dimensional range query.
It answers a $d$-dimensional range query in $O(\log^d N+k)$ time.
For the multi-dimensional quantum range query,
we extend our mechanism of the static quantum B+ tree
to form the static quantum range tree.
Similarly, our static quantum range tree improves the complexity to $O(\log^d N)$,
which saves the $O(k)$ part compared with the classical range tree.

We construct a quantum range tree by induction.
{\color{black}
To build a $d$-dimensional quantum range tree,
we first build a classical B+ tree indexing the $d$-th dimension of the keys.
Then, we build a $(d-1)$-dimensional tree for each internal node in the B+ tree
based on the key-record pairs below the internal node.
Specially, a $1$-dimensional quantum range tree is a static quantum B+ tree.
Obviously, the space complexity and the time complexity of the construction
are similar to the classical range tree.

Then, we discuss the $d$-dimensional quantum range query. Starting from the $d$-th dimension, we search the B+ tree and find the $O(\log_B N)$ internal nodes in the search path which covers the $d$-th dimension of the query range. Then, we turn to the quantum range trees on the $O(\log_B N)$ internal nodes and do a $(d-1)$-dimensional range search. Similar to the classical range tree, we then fetch the answers in $O(\log_B^{d-1} N)$ $1$-dimensional quantum range trees.
}

In the following theorem, we also show the time complexity of
answering the $d$-dimensional quantum range query with our proposed quantum range tree.
\begin{theorem}\label{the:time_complexity_md}
    On average, the quantum range tree answers a $d$-dimensional quantum range query in $O(\log^d_B N)$ time.
\end{theorem}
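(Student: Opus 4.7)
The plan is to prove the theorem by induction on the dimension $d$. The base case $d=1$ is immediate: a $1$-dimensional quantum range tree is by definition a static quantum B+ tree, so by Theorem~\ref{the:time_complexity_gclq} the query costs $O(\log_B N)$ time.

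For the inductive step, assuming the claim for dimension $d-1$, I would split the work into two phases. In the first phase, I would run a classical search on the primary B+ tree (which indexes the $d$-th coordinate), adapted so that, instead of returning at most two candidates as in Section~\ref{subsubsec:global_classical}, it returns the canonical collection of $O(\log_B N)$ internal nodes whose subtrees disjointly cover exactly the items whose $d$-th coordinate lies in $[x_d, y_d]$. This classical traversal costs $O(\log_B N)$ because the primary B+ tree has height $O(\log_B N)$ and the search visits only $O(1)$ nodes per level. In the second phase, I would recursively invoke the $(d-1)$-dimensional quantum range query on the auxiliary $(d-1)$-dimensional tree attached to each canonical node; by the inductive hypothesis each invocation costs $O(\log_B^{d-1} N)$, giving $O(\log_B N)\cdot O(\log_B^{d-1} N)=O(\log_B^d N)$ across all canonical nodes.

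The main obstacle will be \emph{combining} the $O(\log_B N)$ recursively produced superpositions into a single uniform superposition over the full $d$-dimensional answer set within the same $O(\log_B^d N)$ budget. My plan is to reuse the combination idea from Section~\ref{subsubsec:dynamic_query}: initialize an auxiliary register over the canonical nodes with amplitudes proportional to $\sqrt{w_i}$, where $w_i$ is the number of items under the $i$-th canonical node, then apply the recursive $(d-1)$-dimensional query controlled on this register so that after one global post-selection every valid answer pair ends up with equal amplitude. The subtle point is bounding the success probability of this single post-selection by a constant (so that the expected number of repetitions remains $O(1)$); this will follow from the weight-balance property of B+ tree nodes already exploited in Lemma~\ref{lem:global_classical}, applied independently to each of the $d$ dimensions.

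Combining the two phases and invoking the inductive hypothesis, the total expected cost is $O(\log_B^d N)$, which completes the induction and establishes the theorem.
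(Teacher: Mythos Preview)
Your inductive framework is natural, but the combination step has a real gap. The recursive $(d-1)$-dimensional query, as given by your own inductive hypothesis, \emph{includes} post-selection (a measurement) and is therefore not a unitary; you cannot ``apply it controlled'' on an index register in superposition. If instead you mean to defer post-selection to the very end, then the object you apply controlled must be the \emph{pre-measurement} unitary part of the recursive procedure, and your inductive hypothesis must track that unitary and the state it produces---not merely the existence of an $O(\log_B^{d-1} N)$-time algorithm returning the final superposition. Relatedly, your weights $w_i$ (the number of items under the $i$-th canonical node in the primary tree) do not yield uniform amplitudes on the answers after a single post-selection: the correct weight is the number of \emph{slots} beneath the candidates produced by the global classical search inside the $i$-th auxiliary tree, which you only know after running that classical search for each $i$, not before.

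The paper sidesteps these difficulties by \emph{unrolling} the recursion classically rather than inducting on $d$. It first navigates the outer $d-1$ levels of range-tree structure entirely classically to collect $O(\log_B^{d-1} N)$ one-dimensional quantum B+ trees, then runs the global classical search on each of them (total $O(\log_B^{d} N)$ time) to obtain $O(\log_B^{d-1} N)$ candidate nodes, and only then performs a \emph{single} local quantum search with a single post-selection over all candidates simultaneously. Because Lemma~\ref{lem:global_classical} applies separately to each 1D tree with a nonempty answer set, the overall success probability is still $\Omega(1/B)$ (not $\Omega(1/B^{d})$), so a constant number of repetitions suffices. This decomposition cleanly separates all classical work and all measurement from the coherent quantum step, which is precisely what your inductive scheme needs but does not achieve as written.
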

\begin{proofsketch}
The first step to obtain the $O(\log_B^{d-1} N)$ 1D quantum range trees costs $O(\log_B^{d-1} N)$ time. The second step to perform a global classical search on the $O(\log_B^{d-1} N)$ quantum B+ tree costs $O(\log_B^d N)$ time. The third step to perform a local quantum search starting from $O(\log_B^{d-1} N)$ candidate nodes costs $O(\log_B^{d-1} N)$ time.
Overall, the average time complexity is $O(\log^d_B N)$.
\end{proofsketch}
\if 0
\begin{proof}
The first step is to recursively search the quantum range trees to obtain the $O(\log_B^{d-1} N)$ $1$-dimensional quantum range trees which contain answers. It costs $O(\log_B^{d-1} N)$ time. The second step is to perform a global classical search on the $O(\log_B^{d-1} N)$ quantum B+ tree. It costs $O(\log_B^d N)$ time. The third step is to perform a local quantum search starting from $O(\log_B^{d-1} N)$ candidate nodes returned in the second step. We first initialize the quantum bits in $O(\log_B^{d-1} N)$ time. Then, we do the same Step 2 and Step 3 in Section \ref{local} to obtain the leaves below the candidate nodes in $O(\log_B N)$ time. By Lemma \ref{lemma1}, we need to do a constant number of post-selections. Therefore, this step costs $O(\log_B^{d-1} N)$ time.

Therefore, the quantum range tree answers a range query in $O(\log^d_B N)$ time on average.
\vspace{-1mm}
\end{proof}
\fi

\if 0
\subsection{Dynamic Quantum B+ Tree}\label{subsec:dynamic}
In this section, we introduce how to make the static quantum B+ tree dynamic. In Section \ref{insert}, we introduce how to use the LSM-technique to make the quantum B+ tree support insertions. In Section \ref{delete}, we give a solution to perform deletions. In Section \ref{query}, we analyze the complexity of a range query in the dynamic version.

To use the quantum range query algorithm introduced in Section \ref{comb}, the quantum B+ tree needs to meet 3 requirements:
\begin{enumerate}
	\item Fanout limit: The fanout of each node should be at most $B$.
	\item Weight limit: The weight of each node of height $h$ should be at least $\frac{1}{4}B^{h+1}$.
	\item Level limit: All the leaves in a tree should be in the same level.
\end{enumerate}
The reason is that Lemma \ref{lemma3} is based on these 3 requirements. To the best of our knowledge, there is no existing data structure that meets our all requirements. For example, the original B+ tree \cite{comer1979ubiquitous} has no weight limit, and the original weight-balanced B+ tree \cite{arge1996optimal} has no fanout limit, since the fanout can be larger than $B$. The reason is that to meet the three requirements is meaningless in a classical computer. That is the motivation of us to make a new data structure that can work with our quantum range query algorithm.

\subsubsection{Insertion}\label{insert}
\leavevmode\\
We use the logarithmic method \cite{bentley1980decomposable} to make the quantum B+ tree support insertions. To store the key-record pairs, we build at most $\lfloor\log_B N\rfloor+1$ forests $F_0, \cdots, F_{\lfloor\log_B N\rfloor}$ where for each $i\in[0, \lfloor\log_B N\rfloor]$, the forest $F_i$ contains at most $B-1$ static quantum B+ trees of height $i$.

To perform an insertion like $INSERT(key, rec)$, we insert the key-record pair into a sorted list first. When the length of the sorted list reaches $B$, we flush it into $F_0$. Then, whenever a forest $F_i$ has $B$ quantum B+ trees, which means we have $B$ quantum B+ trees of height $i$, we merge the $B$ quantum B+ trees of height $i$ into a quantum B+ trees of height $i+1$, and add it into $F_{i+1}$. To analyze the complexity, we have the following Theorem \ref{theo3}.
\begin{theorem}\label{theo3}
\vspace{-1mm}
	The amortized cost of an insertion in a dynamic quantum B+ tree is $O(\log_B N)$.
\vspace{-1mm}
\end{theorem}
\begin{proof}
By Theorem \ref{theo1} in this paper and Theorem 3.1 in \cite{bentley1980decomposable}, $N$ insertions totally cost $O(N\log_B N)$ time. Therefore, the amortized cost of one insertion is $O(\log_B N)$.
\vspace{-1mm}
\end{proof}
Note that all quantum B+ trees have their classical parts and quantum parts. When doing insertions, any modification in the classical part has to be done in the quantum part to make the mappings in the QRAM correspond to the classical part. Since a QRAM operation costs $O(1)$ time as defined in Section \ref{QRAM}, Theorem \ref{theo3} also holds in the quantum part.

\subsubsection{Deletion}\label{delete}
\leavevmode\\

\begin{figure*}[tbp]
  \centering
  \subfigure[$DELETE(6, rec_6)$: Delete $(6, rec_6)$ in node 5. The weight of node 1 becomes $3$ which is less than $\frac{1}{4}B^2=4$, so node 1 becomes imbalanced. It borrows node 6 from node 2 such that node 1 and node 2 are both balanced.]{
    \label{fig:2-1}
    \includegraphics[width=0.54\textwidth]{figure/delete1.pdf}}
  \subfigure[$DELETE(27, rec_{27})$: Delete $(27, rec_{27})$ in node 9. Node 3 becomes imbalanced, and it cannot borrow node 8 from node 2. Node 2 and node 3 cannot be directly merged since they have 5 children, so the whole subtree is rebuilt.]{
    \label{fig:2-2}
    \includegraphics[width=0.44\textwidth]{figure/delete2.pdf}}
  \caption{Examples of deletions}
  \label{fig:deleteexp}
\end{figure*}

To perform a deletion like $DELETE(key, rec)$, we have two steps. The first step is to find the quantum B+ tree which contains the key-record pair. The second step is to delete the key-record pair in both the classical part and the quantum part of the quantum B+ tree. We then discuss these two steps one by one.

To locate the tree to do the deletion, we assign each key-record pair a unique ID in ascending order (e.g., by a counter). When doing an insertion $INSERT(key, rec)$, we assign the ID to the key-record pair, and maintain a B+ tree $T_0$ to store the one-to-one mapping from the key-record pair to its ID. When a key-record pair is inserted, we insert the key-record pair and its ID into $T_0$. When the key-record pair is deleted, we delete the key-record pair in $T_0$ accordingly. There is no other case to update $T_0$. Furthermore, we maintain another B+ tree $T_1$ to store the mapping from the key-record pair ID to the forest $F_i$ it belongs to. Similar to $T_0$, we do insertions and deletions in $T_1$ accordingly. In addition, when we merge $F_i$, we update $T_1$ for all the key-record pairs in $F_i$. To analyze the update cost in a merge, we have the following Lemma \ref{lemma4}.
\begin{lemma}\label{lemma4}
\vspace{-1mm}
	When merging $F_i$, the time complexity to update $T_1$ is $O(\log_B N)$.	\vspace{-1mm}
\end{lemma}
\begin{proof}
	Let $ID_l$ denote the least ID in $F_i$. For each $j<i$ and each key-record pair in $F_j$, the ID of the key-record pair is smaller than $ID_l$. Let $ID_r$ denote the greatest ID in $F_i$. Then, for each $j>i$ and each key-record pair in $F_j$, the ID of the key-record pair is greater than $ID_r$. Therefore, the update operations for the key-record pairs in $F_i$ can be merged into a range update. Then, we can use lazy propagation \cite{ibtehaz2021multidimensional} to do the range update in $O(\log_B N)$ time.
	\vspace{-1mm}
\end{proof}
By Lemma \ref{lemma4}, the extra cost to maintain $T_0$ and $T_1$ has no impact on Theorem \ref{theo3}, which means the amortized cost of insertion is still $O(\log_B N)$.

To delete the key-record pair in a B+ tree in $F_i$, we do a classical search to find the leaf that contains the key-record pair. Replace the key-record pair with a $dummy$. Then, check if its ancestors are still balanced. If an imbalanced ancestor is found, we first check if it can borrow a child from its sibling. Figure \ref{fig:2-1} shows an example in this case. If not, we check if it can be directly merged with its sibling, which means the node and its sibling have at most $B$ children. If not, we merge the node and its sibling by rebuilding the subtrees below them. Figure \ref{fig:2-2} shows an example in this case. Then, after rebalancing the B+ tree, we check if the root node still has at least two children. If not, we check if it can borrow a child from another B+ tree in $F_i$. If not, then if there are at least two B+ trees in $F_i$, we merge the root nodes of the two B+ trees. Otherwise, we remove the root node and downgrade the B+ tree from $F_i$ to $F_{i-1}$. 
\begin{theorem}\label{theo4}
\vspace{-1mm}
	The amortized cost of a deletion in a dynamic quantum B+ tree is $O(\log_B N)$.
\vspace{-1mm}
\end{theorem}
\begin{proof}
The first step costs $O(\log_B N)$ time, since it consists of two point queries in B+ trees.

Then consider the second step. Motivated by the analysis of partial rebuilding in \cite{overmars1983design}, we consider a node $u$ of height $h$ just after a rebuild. The node $u$ is perfectly balanced such that its weight $w(u)\geq \frac{1}{2}B^{h+1}$. Since it will become imbalanced if and only if $w(u)<\frac{1}{4}B^{h+1}$, there must be $\Omega(B^{h+1})$ deletions below the node $u$ or its siblings before that. So, it is charged $O(1)$ time for each deletion below it and its siblings. Then, for a deletion in a leaf node, each ancestor and its siblings are charged $O(1)$ time, so they are totally charged $O(\log_B N)$ time.

Therefore, the amortized cost of a deletion is $O(\log_B N)$.
\vspace{-1mm}
\end{proof}
Theorem \ref{theo4} shows that the complexity of deletion is the same as insertion, which is $O(\log_B N)$.

\subsubsection{Query}\label{query}
\leavevmode\\
To answer a query $QUERY(x,y)$, we do a global classical search and a local quantum search on $F_0, \cdots, F_{\lfloor\log_B N\rfloor}$. We initialize $\lfloor\log_B N\rfloor+1x$ lists $L_0,\cdots, L_{\lfloor\log_B N\rfloor}$. For each $i\in[0, \lfloor\log_B N\rfloor]$, we add the root nodes in $F_i$ into $L_i$. Then, we do the global classical search from $L_{\lfloor\log_B N\rfloor}$ to $L_0$. For $L_i$, we scan all the nodes in it one by one. Consider a node $u$ in $L_i$. We scan all $u$'s children. Similar to the algorithm in Section \ref{global}, if there is a child such that all the non-dummy key-record pairs are in the answer, then we turn to the local quantum search starting from $L_0,\cdots, L_i$. Otherwise, we add at most two children of $u$ into $L_{i-1}$, which may contain the answers. By Lemma \ref{lemma2}, since there are initially $O(\log_B N)$ quantum B+ trees, the classical global search costs $O(\log^2_B N)$ time.

Then, consider the local quantum search starting from $L_0,\cdots, L_i$. Consider the nodes $u_0, u_1,\cdots, u_{m-1}$ in the lists, where $m$ is the total number of nodes in $L_0,\cdots, L_i$. We initialize the quantum bits to be $\sum_{i=0}^{m-1}\frac{\sqrt{B^{h(u_i)+1}}}{\sqrt{\sum_{j=0}^{m-1} B^{h(u_j)+1}}}\ket{u_i},$ where $h(u_i)$ is the height of $u_i$, and $\frac{\sqrt{B^{h(u_i)+1}}}{\sqrt{\sum_{j=0}^{m-1} B^{h(u_j)+1}}}$ is the normalized amplitude of each $u_i$ such that each of the key-record pair below the nodes in the lists has the same amplitude in the result. By the amplitude encoding scheme in \cite{long2001efficient}, the initialization costs $O(m)$ time. Since by Lemma \ref{lemma2}, $m=O(\log_B N)$, the quantum bits can be obtained in $O(\log_B N)$ time. Then, we do the same Step 2 and Step 3 in Section \ref{local} to obtain the leaves below the nodes. Finally, we do a post-selection search as discussed in Section \ref{post}. To analyze the query cost, we have the following Theorem \ref{theo5}.
\begin{theorem}\label{theo5}
\vspace{-1mm}
	On average, the dynamic quantum B+ tree answers a range query in $O(\log^2_B N)$ time.
\vspace{-1mm}
\end{theorem}
\begin{proof}
As above mentioned, the global classical search costs $O(\log^2_B N)$ time. Consider the cost of the local quantum search. The initialization and the steps to obtain the leaves cost $O(\log_B N)$. Then, we multiply it by the average post-selection times. Let $N'$ denote the total number of key-record pairs below the nodes in $L_0,\cdots, L_i$. Then, by Lemma \ref{lemma1}, the post-selection needs to be iterated $O(N'/k)$ times on average. Since there are at most $2B\lfloor\log_B N\rfloor$ nodes in the lists and each node has a weight at most $B^{i+1}$, we have $N'\leq 2B^{i+2}\log_B N$. Before we turn to the local quantum search, we have found a child of a node in $L_i$ such that all the key-record pairs below the child are in the answer, such that $k\geq \frac{1}{4}B^i$. Therefore, $N'/k=O(\log_B N)$. Hence, the local quantum search costs $O(\log^2_B N)$ time on average.

Since both the global classical search and the local quantum search cost $O(\log_B^2 N)$ time, the dynamic quantum B+ tree answers a range query in $O(\log^2_B N)$ time on average.
\vspace{-1mm}
\end{proof}

\subsection{Static Quantum Range Tree}\label{subsec:md}

In this section, we show how to convert the classical range tree \cite{bentley1978decomposable} into a quantum data structure, as another extension of the static quantum B+ tree. Originally, the range tree answers a $d$-dimensional range query in $O(\log^d N+k)$ time. Using the quantum B+ tree, we can improve the complexity to $O(\log^d N)$. In the new problem, the key in a key-record pair is a $d$-dimensional vector $(key_0,\cdots,key_{d-1})$ where each $key_i$ is an integer, and in a query $QUERY(x, y)$, $x$ and $y$ are also two $d$-dimensional vectors, which are $(x_0,\cdots, x_{d-1})$ and $(y_0,\cdots, y_{d-1})$, respectively. The quantum range tree returns a superposition of all the key-record pairs such that for each $i\in[0, d-1]$, we have $x_i\leq key_i\leq y_i$.

We construct a static quantum range tree recursively. To build a $d$-dimensional quantum range tree, we first build a classical B+ tree indexing the $d$-th dimension of the keys. Then, we build a $d-1$-dimensional tree for each internal node in the B+ tree based on the key-record pairs below the internal node. Specifically, a $1$-dimensional quantum range tree is a static quantum B+ tree. Obviously, the space complexity and the time complexity of the construction are similar to the classical range tree.

Then, we discuss the $d$-dimensional static quantum range query. Starting from the $d$-th dimension, we search the B+ tree and find the $O(\log_B N)$ internal nodes in the search path which covers the $d$-th dimension of the query range. Then, turn to the quantum range trees on the $O(\log_B N)$ internal nodes and do a $d-1$-dimensional range search. Similar to the classical range tree, we then fetch the answers in $O(\log_B^{d-1} N)$ $1$-dimensional quantum range trees. By the same algorithm as described in Section \ref{comb}, we can obtain the key-record pairs within the query range in $O(O(\log_B^d N))$ time on average. The following Theorem \ref{theo6} shows the result.
\begin{theorem}\label{theo6}
\vspace{-1mm}
	On average, the quantum range tree answers a range query in $O(\log^d_B N)$ time.
\vspace{-1mm}
\end{theorem}
\begin{proof}
The first step is to recursively search the quantum range trees to obtain the $O(\log_B^{d-1} N)$ $1$-dimensional quantum range trees which contain answers. It costs $O(\log_B^{d-1} N)$ time. The second step is to perform a global classical search on the $O(\log_B^{d-1} N)$ quantum B+ tree. It costs $O(\log_B^d N)$ time. The third step is to perform a local quantum search starting from $O(\log_B^{d-1} N)$ candidate nodes returned in the second step. We first initialize the quantum bits in $O(\log_B^{d-1} N)$ time. Then, we do the same Step 2 and Step 3 in Section \ref{local} to obtain the leaves below the candidate nodes in $O(\log_B N)$ time. By Lemma \ref{lemma1}, we need to do a constant number of post-selections. Therefore, this step costs $O(\log_B^{d-1} N)$ time.

Therefore, the quantum range tree answers a range query in $O(\log^d_B N)$ time on average.
\vspace{-1mm}
\end{proof}

\fi

\section{Experiment}\label{exp}

In this section, we show our experimental results to verify the superiority of our proposed quantum B+ tree
on quantum range queries.
The study of the real-world quantum supremacy \cite{arute2019quantum, boixo2018characterizing, terhal2018quantum},
which is to confirm that a quantum computer can do tasks faster than classical computers,
is still a developing topic in the quantum area.
Thus, in this paper, we would not verify the quantum supremacy,
which we believe will be fully verified in a future quantum computer.
To the best of our knowledge, there exists no information about the real implementation of the quantum memory
(i.e., QRAM) that can suggest the execution time of algorithms on quantum computers.
Therefore, we chose to evaluate the \emph{number of memory accesses} to make the comparisons,
which corresponds to IOs in traditional searches.
In the quantum data structures, a QRAM read or write operation is counted as 1 IO
(following the assumption of the $O(1)$ load/store operation in QRAM~\cite{kerenidis2017quantum, saeedi2019quantum}).
In the classical data structures, a page access is counted as 1 IO as well.
Moreover, although there exist some quantum simulators such as Qiskit \cite{wille2019ibm} and Cirq \cite{cirq},
they are not capable of simulating an efficient QRAM, 
and thus we chose to use C++ to perform the quantum simulations.

We used two real datasets from SNAP \cite{leskovec2016snap}, namely \emph{Brightkite} and \emph{Gowalla}.
Each of the two datasets contains a set of check-in records,
each of which consists of a timestamp (i.e., an integer)
and a location (i.e., a 2-tuple of integers).
The original sizes of the two datasets are 4M and 6M, respectively.
For the 1-dimensional static and dynamic quantum range queries,
we set the timestamp as the search key.
For the multi-dimensional quantum range query,
we set the location as the 2-dimensional search key.

For the 1-dimensional static and dynamic quantum range queries,
we compare our GCLQ search algorithm on our proposed quantum B+ tree and its dynamic variant
(which are simply denoted as \textbf{quantum B+ tree})
with the classical B+ tree search algorithm and its dynamic version~\cite{bentley1980decomposable}
(which are denoted as \textbf{classical B+ tree}).
For the multi-dimensional quantum range queries,
we also compare our search algorithm on the proposed quantum range tree
(which is denoted as \textbf{quantum range tree})
with the search on the classical range tree~\cite{chazelle1990lower1}
(which is denoted as \textbf{classical range tree}).
Note that the classical range tree is shown to be asymptotically optimal
to solve the multi-dimensional range queries on classical computers.

The factors we studied are the dataset size $N$, the branching factor $B$ and the selectivity
(which is defined to be the proportion of items within the query range among the entire dataset).
We varied $N$ from 4K to 2M (by randomly choosing a subset from each dataset with the target size).
We varied $B$ from 4 to 64 for each data structure.
We varied the selectivity from 1\% to 10\%.
By default, $N =$ 2M, $B =$ 16 and the selectivity is 5\%.
For each type of range query, we randomly generate 10000 range queries and report the average measurement.

We present our experimental results as follows.
For the sake of space, we mainly show the results of dataset \emph{Brightkite},
while we observe similar results in the other dataset \emph{Gowalla} in all our experiments.
The complete results of dataset \emph{Gowalla} can be found in our technical report~\cite{technical_report}.

\begin{figure*}[tbp]
  \centering
  \hspace*{-1.5cm}
  \includegraphics[width=0.95\linewidth]{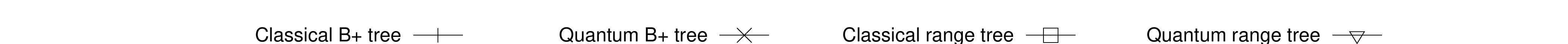}
  
  \begin{tabular}{c c c c c}
    \begin{minipage}[htbp]{0.21\linewidth}
      \hspace{-0.4cm}
      \includegraphics[width=\linewidth]{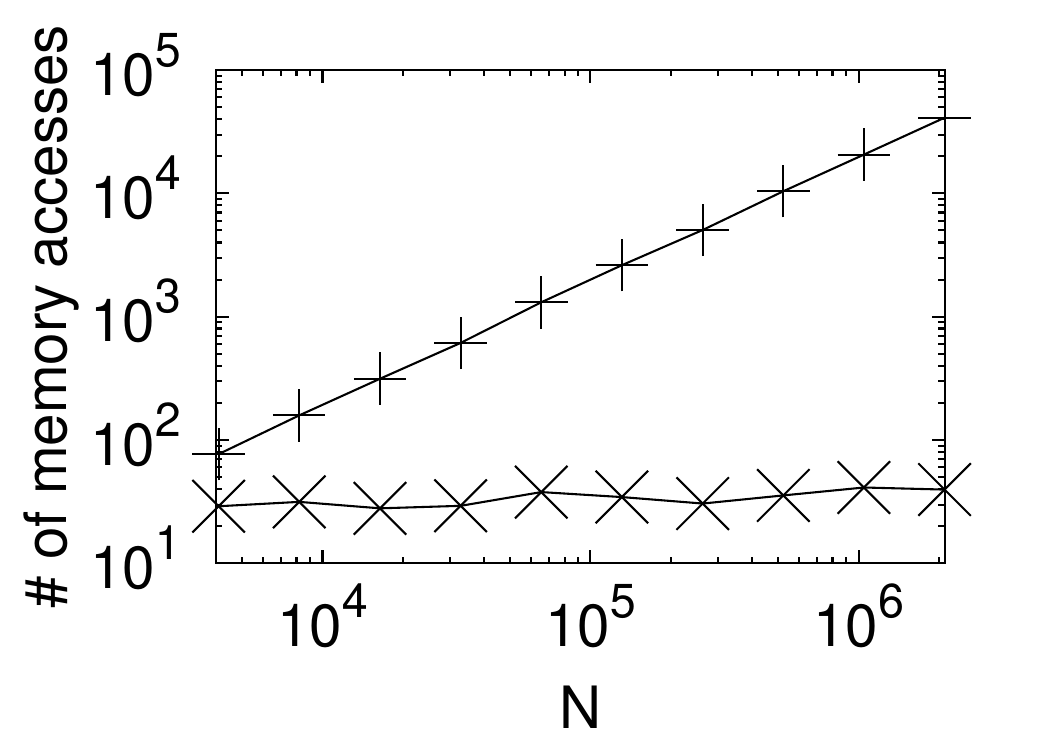}
    \end{minipage}
    &
    \begin{minipage}[htbp]{0.21\linewidth}
      \hspace{-0.9cm}
      \includegraphics[width=\linewidth]{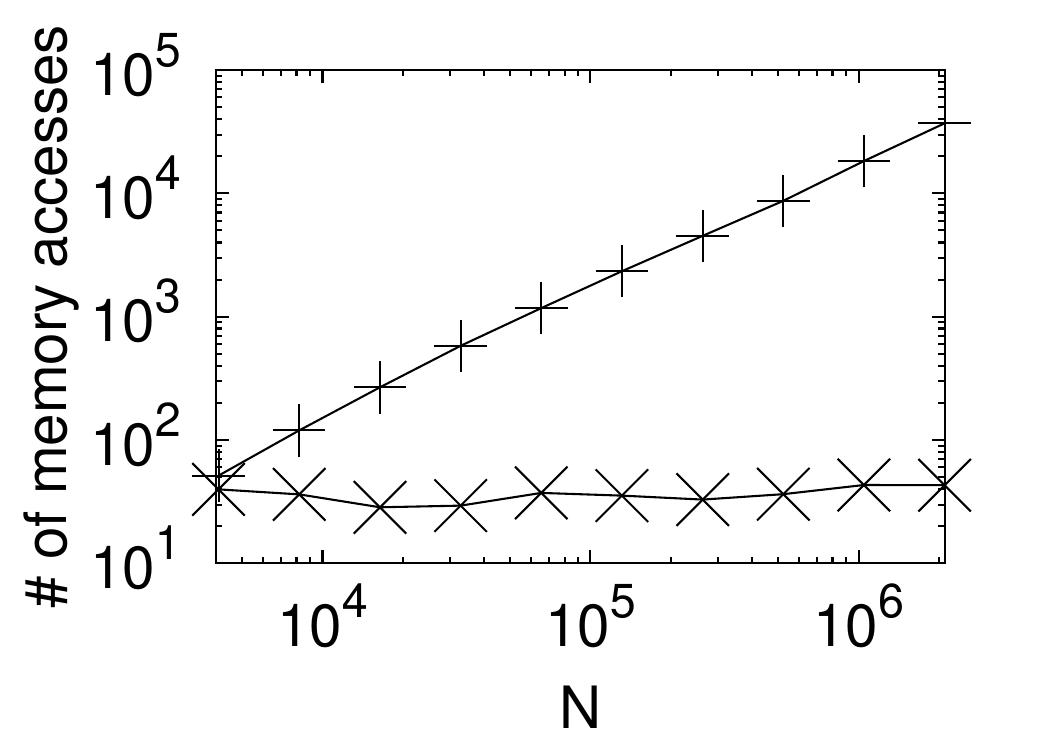}
    \end{minipage}
    &
    \begin{minipage}[htbp]{0.21\linewidth}
      \hspace{-1.4cm}
      \includegraphics[width=\linewidth]{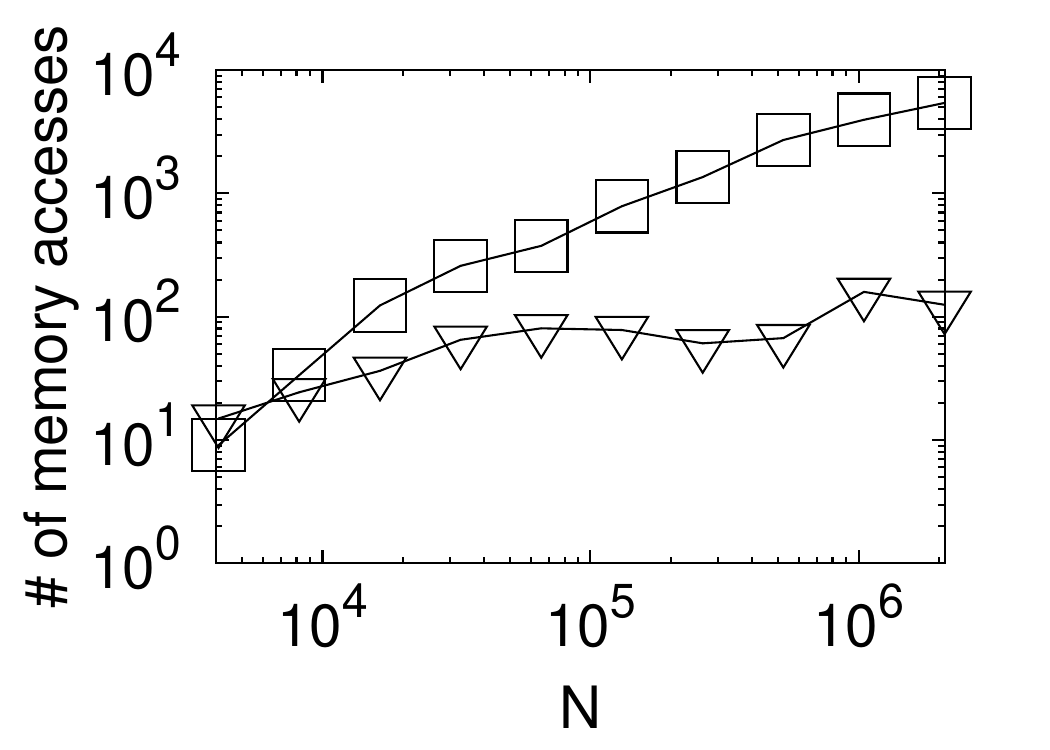}
    \end{minipage}
    &
    \begin{minipage}[htbp]{0.21\linewidth}
      \hspace{-1.85cm}
      \includegraphics[width=\linewidth]{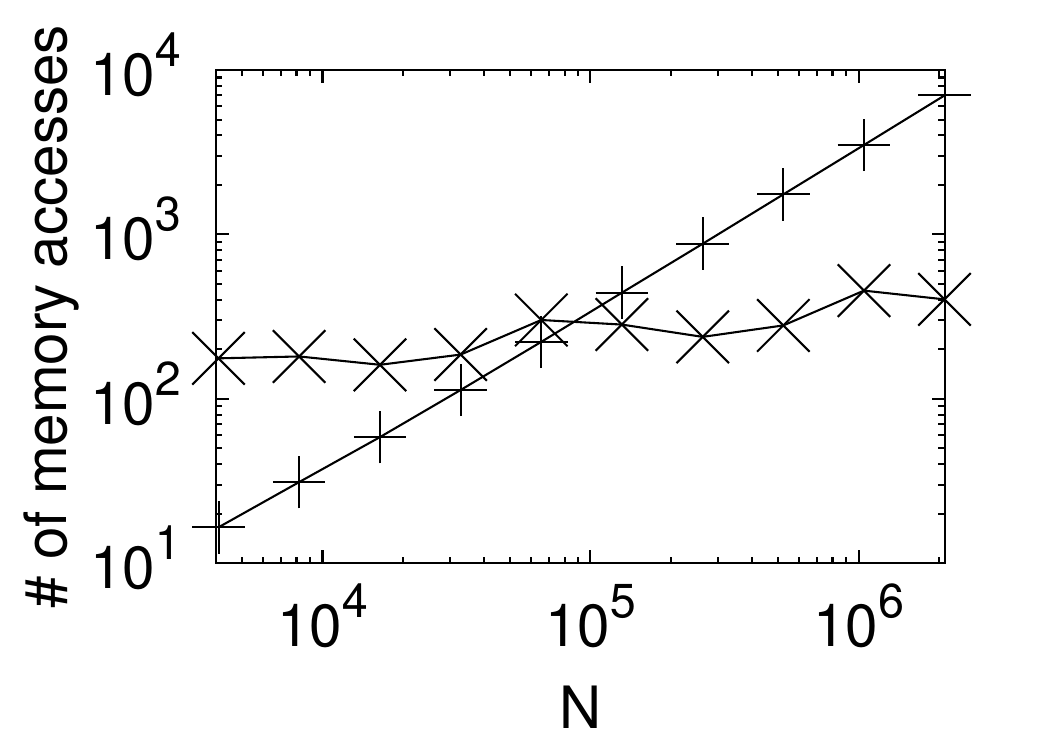}
    \end{minipage}
    &
    \begin{minipage}[htbp]{0.21\linewidth}
      \hspace{-2.3cm}
      \includegraphics[width=\linewidth]{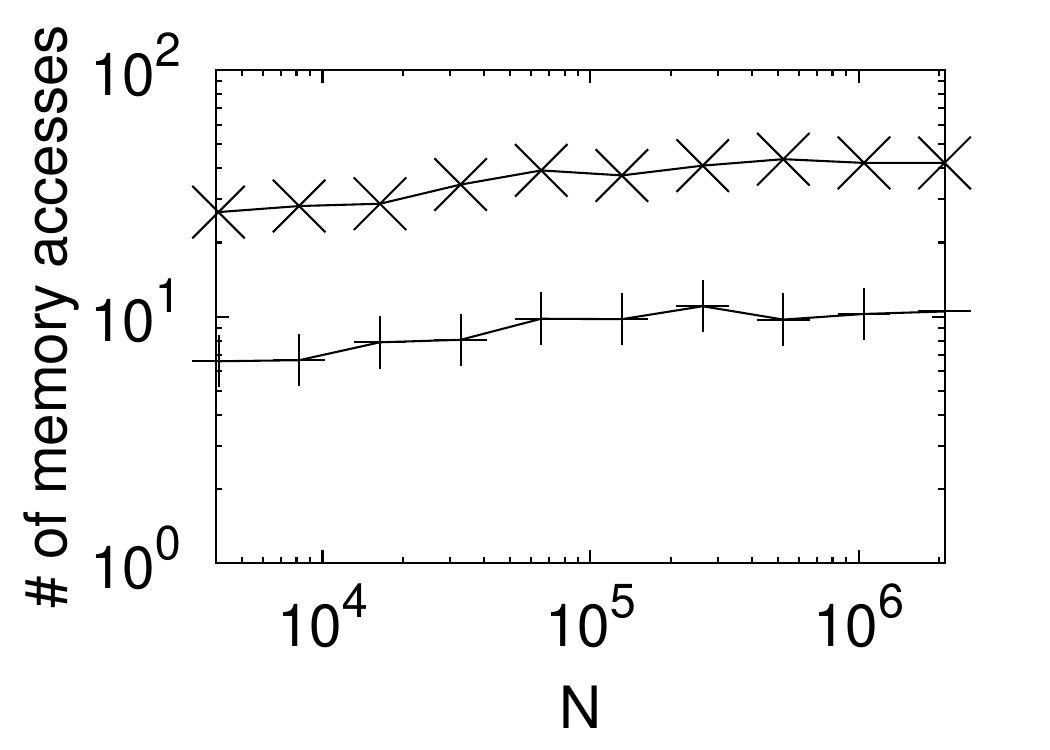}
    \end{minipage} \\
    \hspace{-0.35cm} (a) \emph{Brightkite} & \hspace{-1.2cm} (b) \emph{Gowalla} & \hspace{-2.2cm} (c) \emph{Brightkite} & \hspace{-3.1cm} (d) \emph{Brightkite} & \hspace{-3.8cm} (e) \emph{Brightkite}
  \end{tabular}
  \caption{The Effect of $N$ on (a) \& (b) Quantum Range Queries, (c) Multi-dimensional Range Queries (d) Dynamic Range Queries and (e) Insertions and Deletions into the Dynamic Data Structures} 
  \label{fig:exp:effect_n}
  \vspace*{-0.5cm}
\end{figure*}

\begin{figure}[tbp]
\vspace{-0.2cm}
  \centering
  \begin{minipage}[htbp]{\linewidth}
    \includegraphics[width=\linewidth]{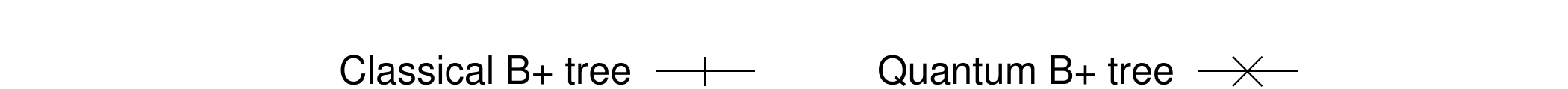}
    \vspace{-0.3cm}
   \end{minipage}
  \begin{tabular}{c c}
    \begin{minipage}[htbp]{0.44\linewidth}
      \includegraphics[width=\linewidth]{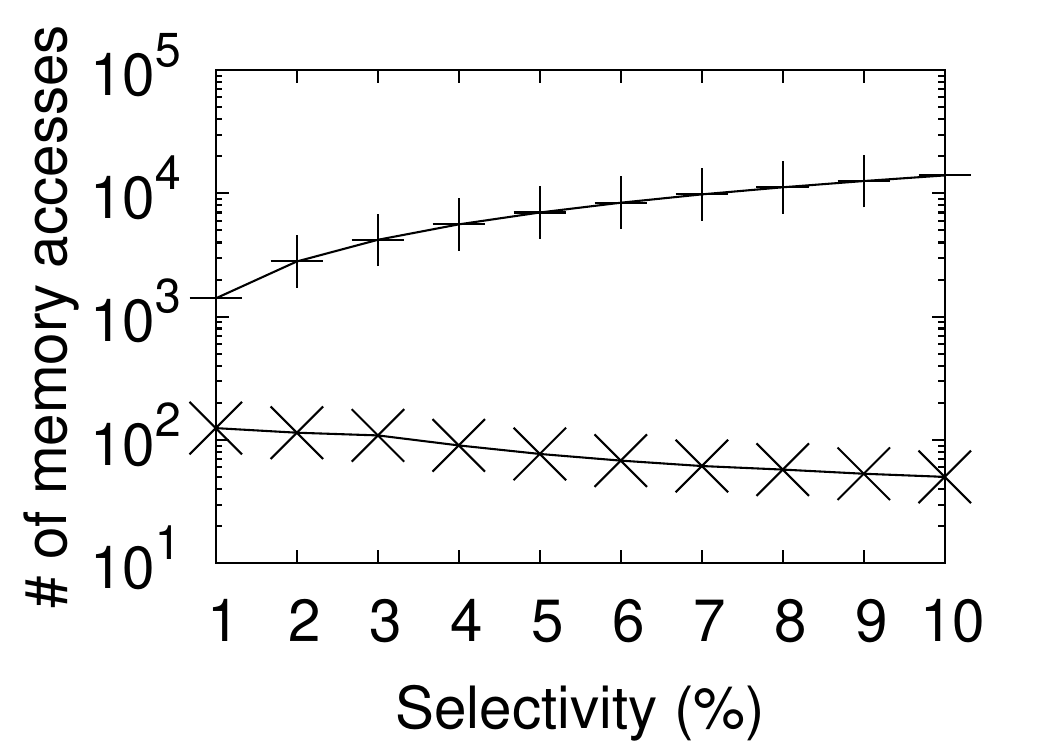}
    \end{minipage}
    &
    \begin{minipage}[htpb]{0.44\linewidth}
      \includegraphics[width=\linewidth]{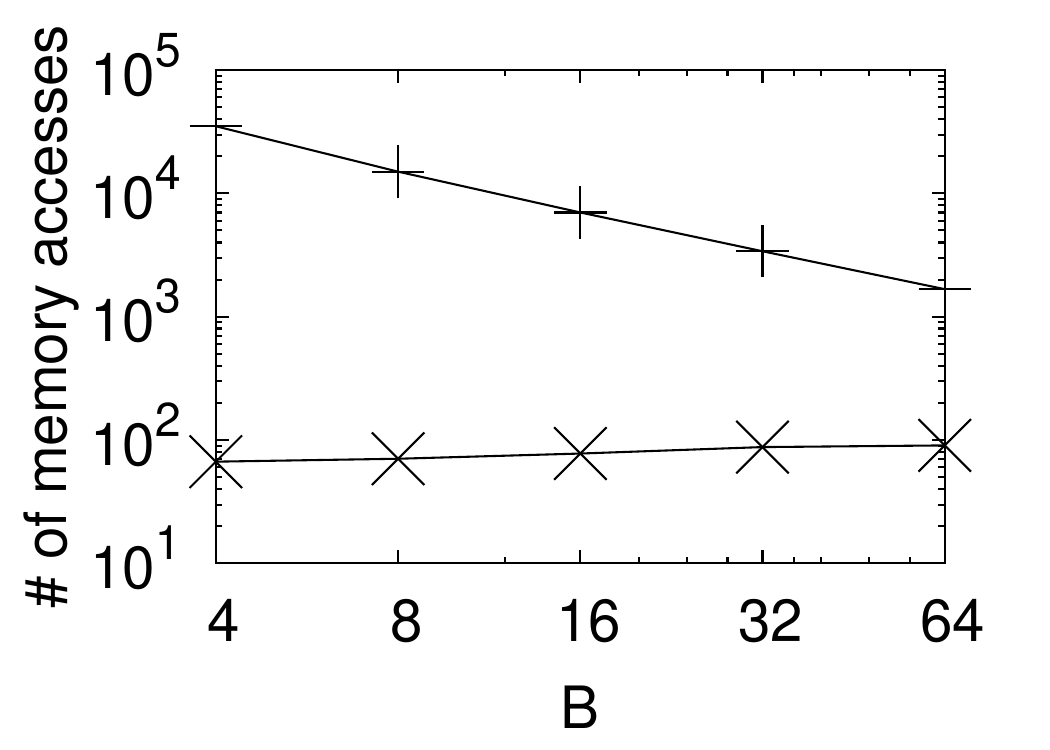}
    \end{minipage} \\
    (a) \emph{Brightkite} &
    (b) \emph{Brightkite} \\
    \begin{minipage}[htbp]{0.44\linewidth}
      \includegraphics[width=\linewidth]{figure/Brightkite-1-s.pdf}
    \end{minipage}
    &
    \begin{minipage}[htpb]{0.44\linewidth}
      \includegraphics[width=\linewidth]{figure/Brightkite-1-B.pdf}
    \end{minipage} \\
    (c) \emph{Brightkite} &
    (d) \emph{Brightkite}
  \end{tabular}
  \caption{Effect of Selectivity and $B$ of (a) \& (b) Quantum Range Queries and (c) \& (d) Dynamic Quantum Range Queries} 
  \label{fig:effect_s_b_static_dynamic}
\end{figure}

\if 0
\begin{figure}[htbp]
\vspace{-2mm}
  \centering
  \subfigure{
  \includegraphics[width=\columnwidth]{figure/legend1.pdf}}
  \\
  \vspace{-4mm}
  \setcounter{subfigure}{0}
  \subfigure[Brightkite]{
    \label{fig:1-1}
    \includegraphics[width=0.48\columnwidth]{figure/Brightkite-1.pdf}}
  \subfigure[Gowalla]{
    \label{fig:1-2}
    \includegraphics[width=0.48\columnwidth]{figure/Gowalla-1.pdf}}
  \caption{Range queries with $d=1$ and $B=16$}
  \label{fig:1}
\end{figure}

\begin{figure}[htbp]
\vspace{-2mm}
  \centering
  \subfigure{
  \includegraphics[width=\columnwidth]{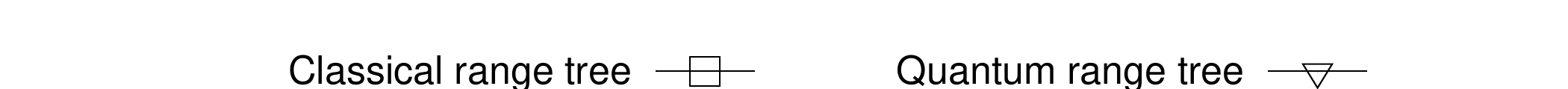}}
  \\
  \vspace{-4mm}
  \setcounter{subfigure}{0}
  \subfigure[Brightkite]{
    \label{fig:2-1}
    \includegraphics[width=0.48\columnwidth]{figure/Brightkite-2.pdf}}
  \subfigure[Gowalla]{
    \label{fig:2-2}
    \includegraphics[width=0.48\columnwidth]{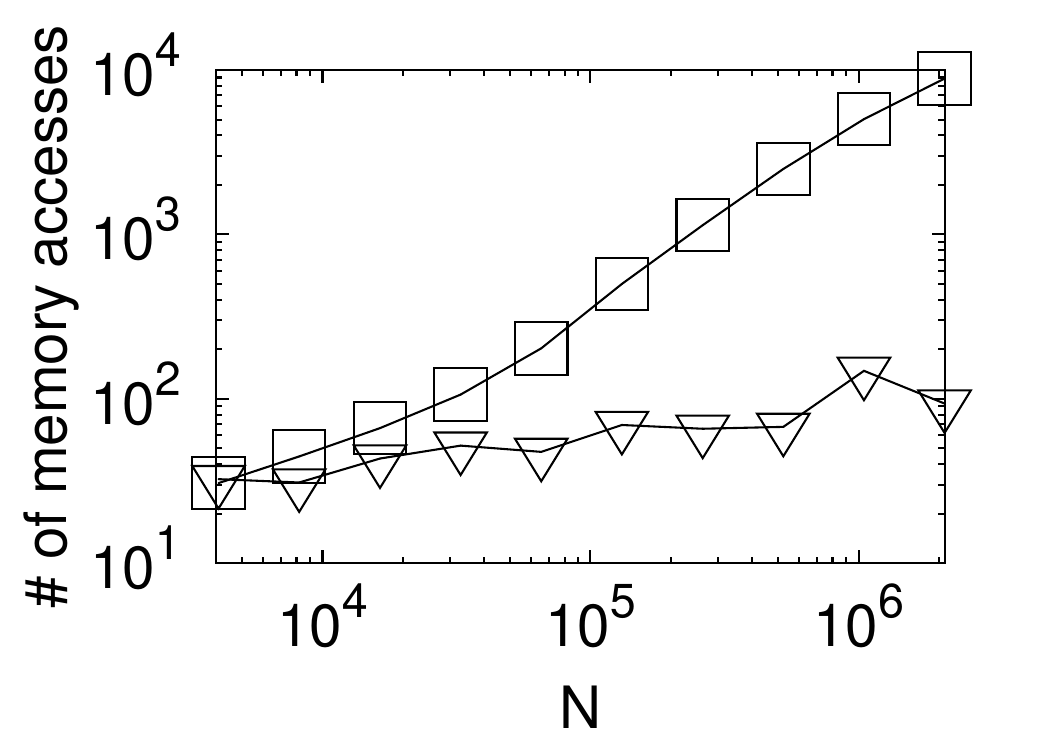}}
  \caption{Range queries with $d=2$ and $B=16$}
  \label{fig:2}
\end{figure}

\begin{figure*}[htbp]
\vspace{-2mm}
  \centering
  \subfigure{
  \includegraphics[width=0.6\textwidth]{figure/legend1.pdf}}
  \\
  \vspace{-4mm}
  \setcounter{subfigure}{0}
  \subfigure[Brightkite]{
    \label{fig:3-1}
    \includegraphics[width=0.23\textwidth]{figure/Brightkite-1-s.pdf}}
    \subfigure[Brightkite]{
    \label{fig:3-2}
    \includegraphics[width=0.23\textwidth]{figure/Brightkite-1-B.pdf}}
  \subfigure[Gowalla]{
    \label{fig:3-3}
    \includegraphics[width=0.23\textwidth]{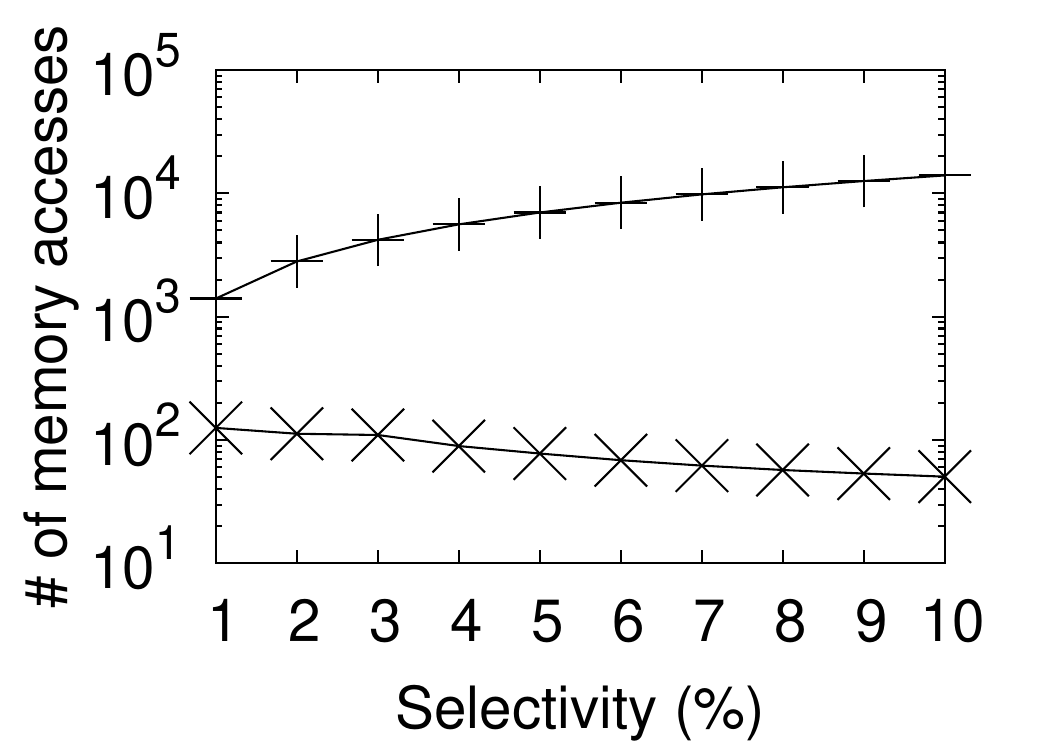}}
    \subfigure[Gowalla]{
    \label{fig:3-4}
    \includegraphics[width=0.23\textwidth]{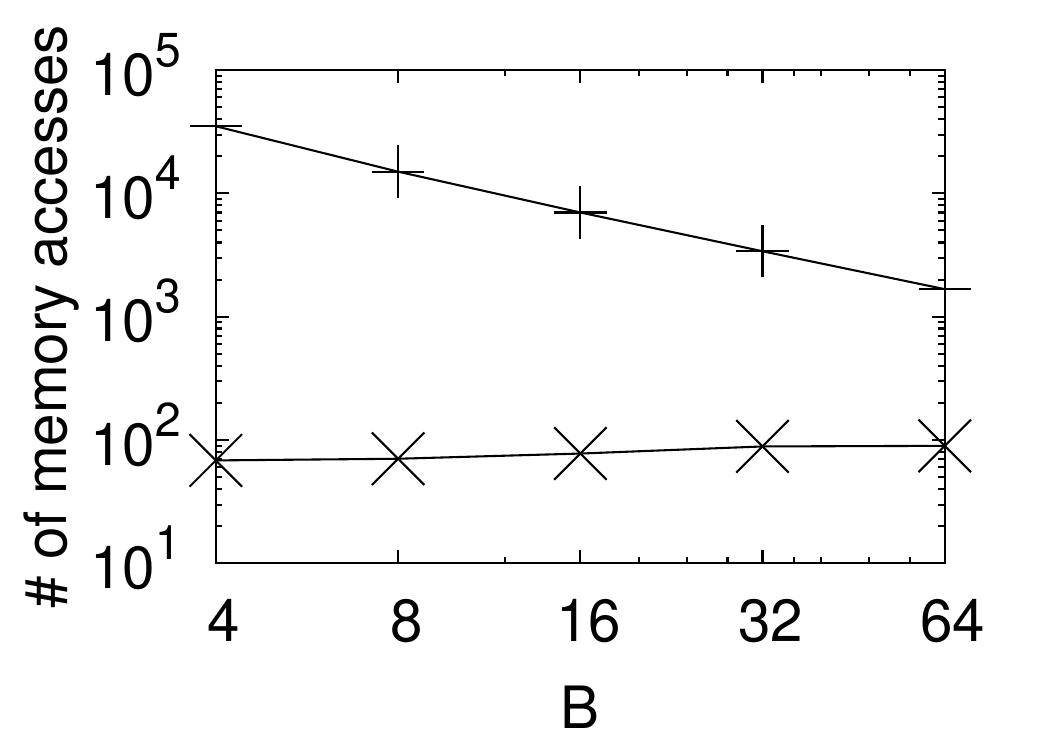}}
  \caption{$1$-dimensional static range queries}
  \label{fig:3}
\end{figure*}

\begin{figure*}[htbp]
\vspace{-2mm}
  \centering
  \subfigure{
  \includegraphics[width=0.6\textwidth]{figure/legend2.pdf}}
  \\
  \vspace{-4mm}
  \setcounter{subfigure}{0}
  \subfigure[Brightkite]{
    \label{fig:4-1}
    \includegraphics[width=0.23\textwidth]{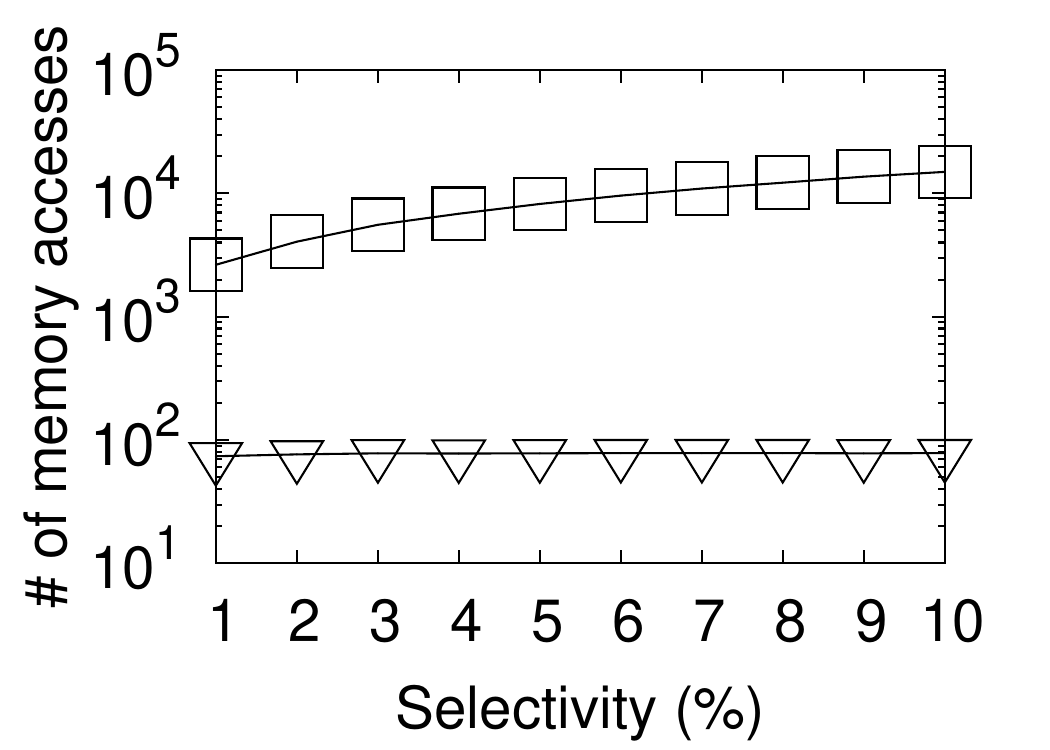}}
    \subfigure[Brightkite]{
    \label{fig:4-2}
    \includegraphics[width=0.23\textwidth]{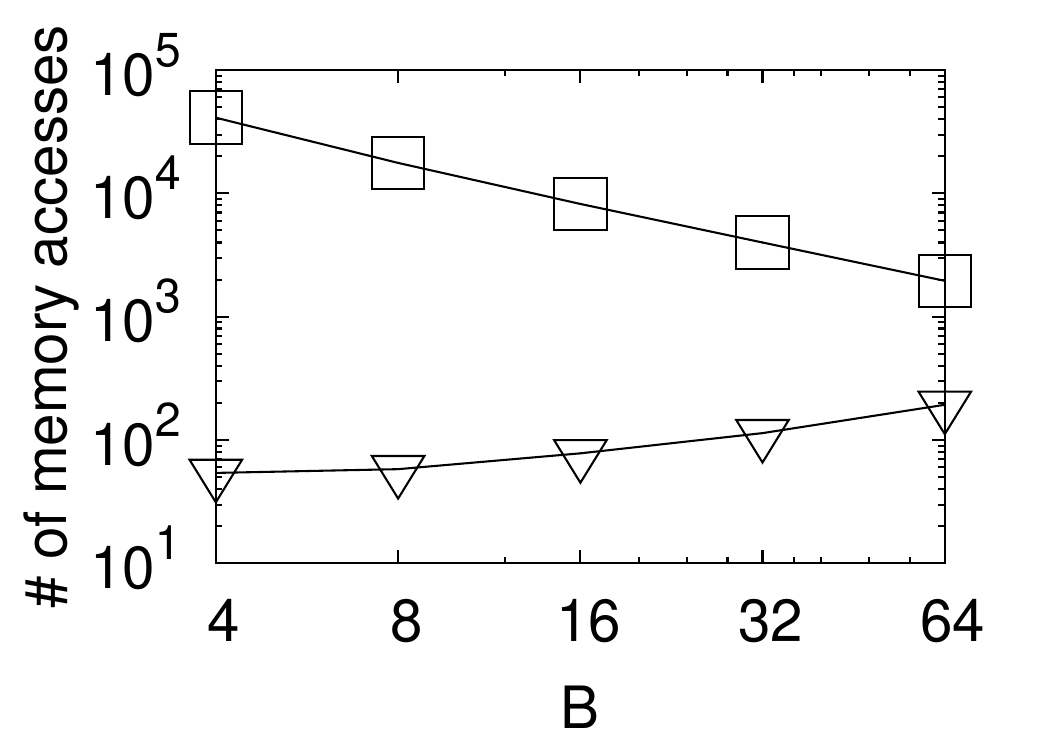}}
  \subfigure[Gowalla]{
    \label{fig:4-3}
    \includegraphics[width=0.23\textwidth]{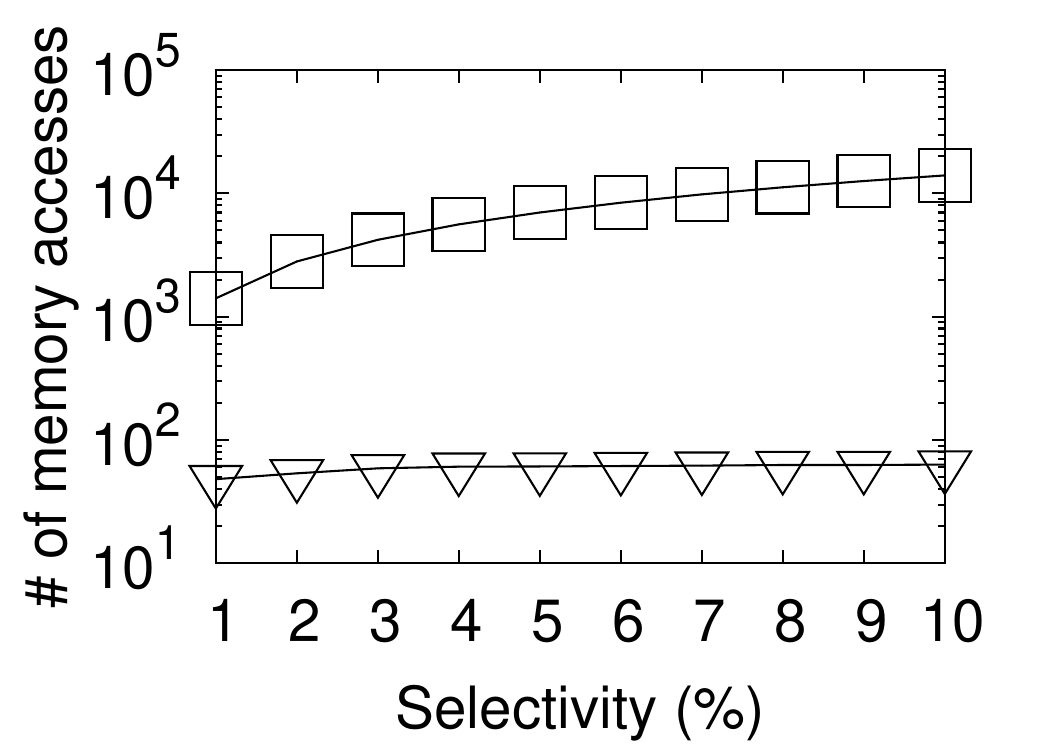}}
    \subfigure[Gowalla]{
    \label{fig:4-4}
    \includegraphics[width=0.23\textwidth]{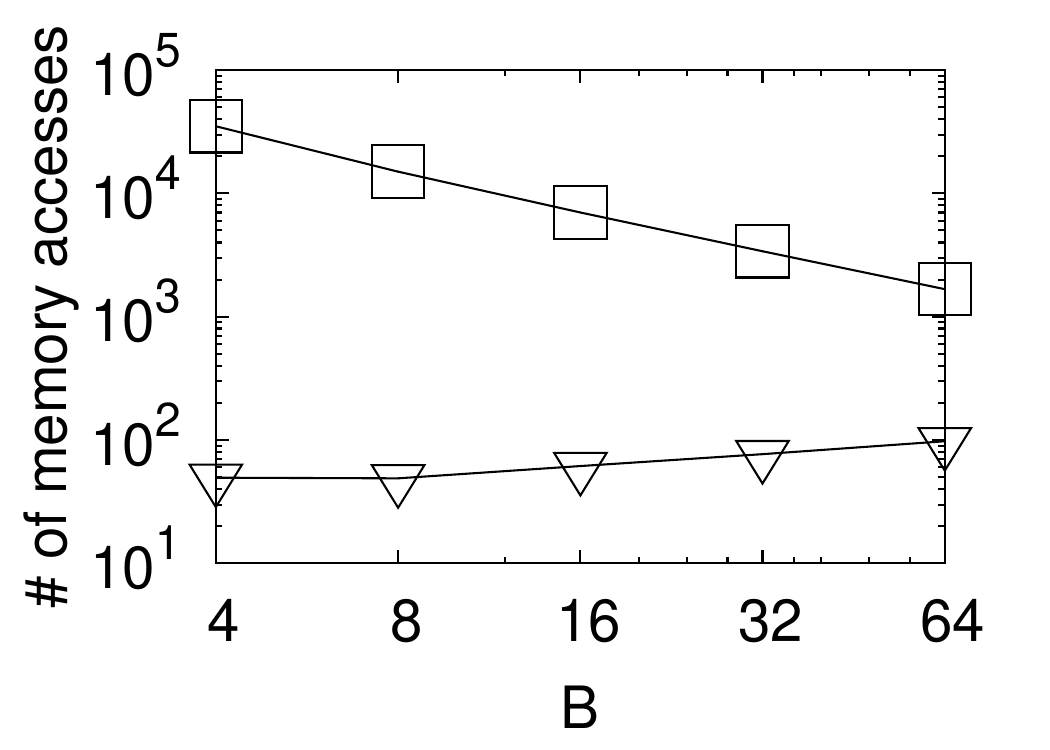}}
  \caption{$2$-dimensional static range queries}
  \label{fig:4}
\end{figure*}

\begin{figure*}[htbp]
\vspace{-2mm}
  \centering
  \subfigure{
  \includegraphics[width=0.6\textwidth]{figure/legend1.pdf}}
  \\
  \vspace{-4mm}
  \setcounter{subfigure}{0}
  \subfigure[Brightkite]{
    \label{fig:5-1}
    \includegraphics[width=0.23\textwidth]{figure/Brightkite-i.pdf}}
    \subfigure[Gowalla]{
    \label{fig:5-2}
    \includegraphics[width=0.23\textwidth]{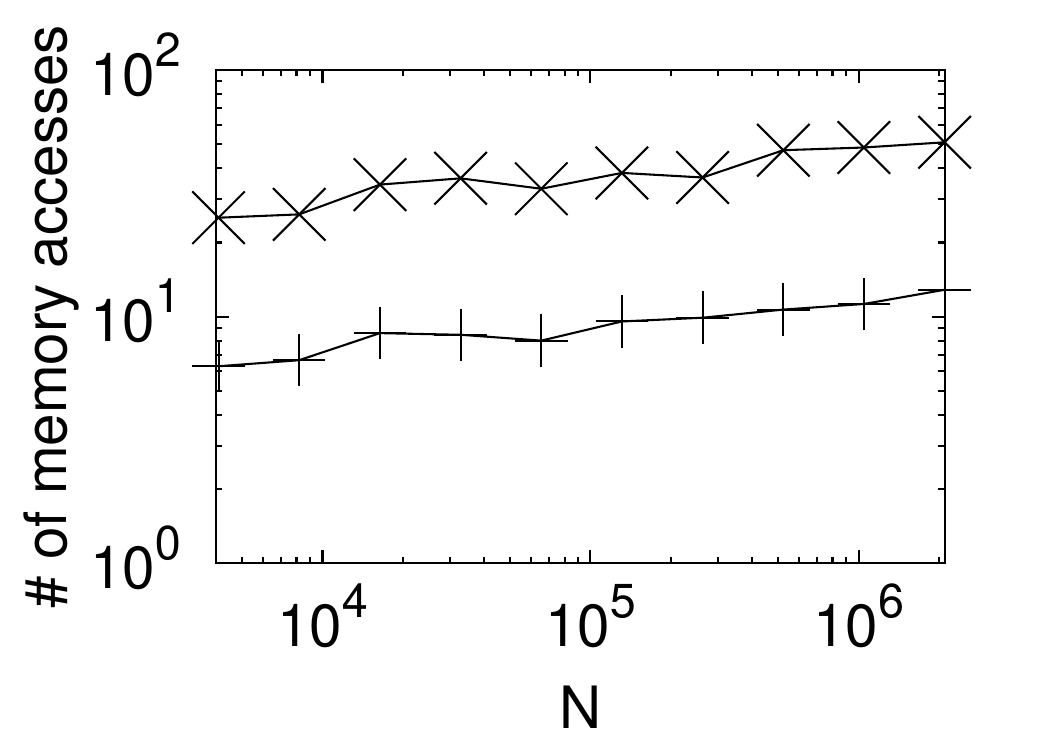}}
  \subfigure[Brightkite]{
    \label{fig:5-3}
    \includegraphics[width=0.23\textwidth]{figure/Brightkite-q.pdf}}
    \subfigure[Gowalla]{
    \label{fig:5-4}
    \includegraphics[width=0.23\textwidth]{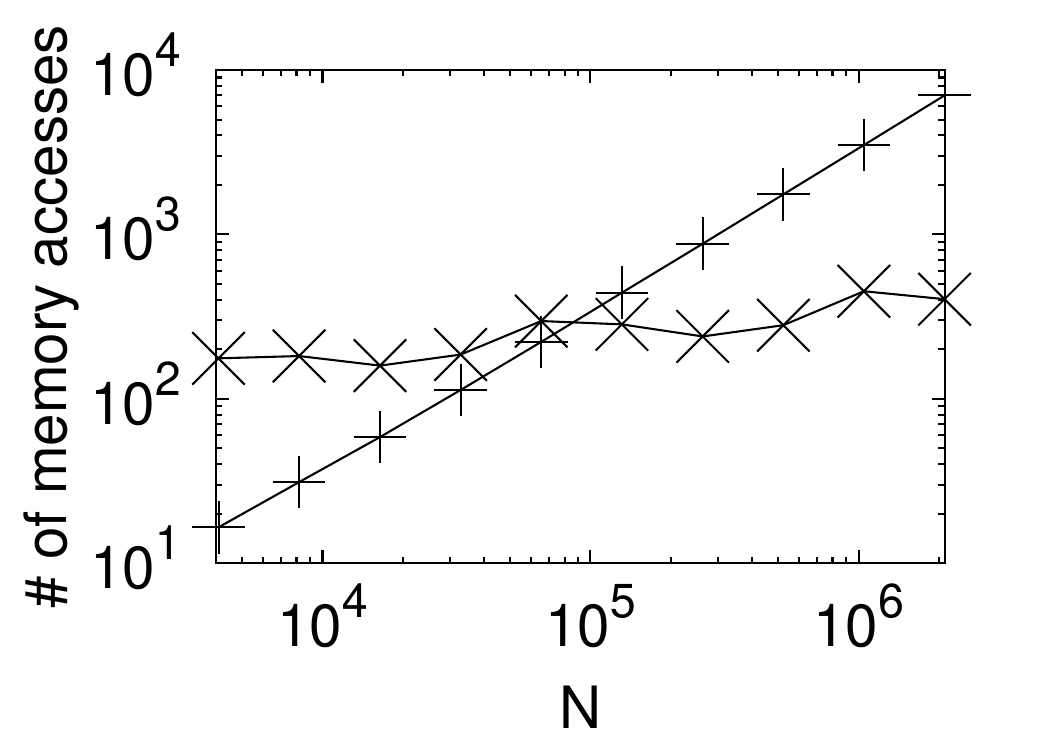}}
  \caption{$1$-dimensional dynamic range queries}
  \label{fig:5}
\end{figure*}
\fi

\smallskip
\noindent
\textbf{Effect of $N$.}
We first study the effect of the dataset $N$, which verifies the scalability of our proposed data structures and algorithms.
As shown in Figures~\ref{fig:exp:effect_n}(a) and (b),
our proposed \textbf{quantum B+ tree} with our proposed GCLQ search algorithm
scales well when $N$ grows from 4K to 2M on both datasets,
which verifies the efficient $O(\log_B N)$ cost of performing the GCLQ search on our \textbf{quantum B+ tree}.
As $N$ increases, the number of items within the query range (i.e., $k$)
for each query also tend to increase (approximately linearly with $N$).
Thus, the number of memory access for the \textbf{classical B+ tree} demonstrates a linear growth
with $N$, which complies its $O(\log_B N + k)$ complexity.
In comparison, the performance of our \textbf{quantum B+ tree} does not depend on the number of range query results,
and thus it is up to 1000x more efficient than the \textbf{classical B+ tree} for the (static) quantum range queries.

For multi-dimensional quantum range queries, our proposed \textbf{quantum range tree} obtains
superior performance than the \textbf{classical range tree} similarly,
as shown in Figure~\ref{fig:exp:effect_n}(c).
For the dynamic quantum range queries, although our (dynamic) \textbf{quantum B+ tree}
has larger number of memory accesses than the (dynamic) \textbf{classical B+ tree} for small $N$ (i.e., $<$ 100K),
our \textbf{quantum B+ tree} shows much better scalability for larger $N$,
as illustrated in Figure~\ref{fig:exp:effect_n}(d).
For the dynamic versions, we also tested the performance of insertions and deletions.
Following \cite{alsubaiee2014storage}, we insert all the records in the datasets
where for each insertion, there is 1\% chance to delete an existing record instead of doing this insertion,
and we measure the number of memory accesses for each insertion or deletion operation on average.
Figure~\ref{fig:exp:effect_n}(e) shows that the dynamic \textbf{quantum B+ tree}
needs 3x more memory accesses for insertion or deletion,
because the \textbf{quantum B+ tree} has more complex structure than the \textbf{classical B+ tree}.
However, the \textbf{quantum B+ tree} shows the $O(N)$ growth, which is similar to the \textbf{classical B+ tree},
indicating that the insertion and deletions operations on our \textbf{quantum B+ tree} are still reasonably efficient.

\smallskip
\noindent
\textbf{Effect of Selectivity.}
Then, we study the effect of selectivity on the quantum range queries.
When the selectivity (i.e., $\frac{k}{N}$) increases from 1\% to 10\% for both the static and the dynamic quantum range queries
(as shown in Figures~\ref{fig:effect_s_b_static_dynamic}(a) and (c)),
the number of memory accesses increases linearly for the \textbf{classical B+ tree}
due to the $O(\log_B N + k)$ query complexity.
Our proposed \textbf{quantum B+ tree} with $O(\log_B N)$ cost surprisingly
shows better performance as the selectivity increases.
This is because a larger $k$ shortened the process of the classical global search in our GCLQ search algorithm,
such that the efficient local quantum search can be triggered earlier.
Also, when $k$ increases, the post-selection will be accelerated
since the cost of post-selection is linear to $\frac{N}{k}$ as we mentioned in Section~\ref{subsec:static_range_query_alg}.

In the multi-dimensional quantum range queries,
as shown in Figure~\ref{fig:effect_s_b_md}(a),
although the number of memory accesses of our \textbf{quantum range tree} does not decrease as the selectivity increases
(because a larger $k$ leads to more candidate nodes from global classical search in the multi-dimensional case),
our \textbf{quantum range tree} is still efficient (since its performance does not depend on the selectivity) and is 10x--100x superior than the \textbf{classical range tree}.

\begin{figure}[tbp]
\vspace{-0.2cm}
  \centering
  \begin{minipage}[htbp]{\linewidth}
    \includegraphics[width=\linewidth]{figure/legend2.pdf}
    \vspace{-0.3cm}
   \end{minipage}
  \begin{tabular}{c c}
    \begin{minipage}[htbp]{0.44\linewidth}
      \includegraphics[width=\linewidth]{figure/Brightkite-2-s.pdf}
    \end{minipage}
    &
    \begin{minipage}[htpb]{0.44\linewidth}
      \includegraphics[width=\linewidth]{figure/Brightkite-2-B.pdf}
    \end{minipage} \\
    (a) \emph{Brightkite} &
    (b) \emph{Brightkite}
  \end{tabular}
  \caption{Effect of Selectivity and $B$ of Multi-dimensional Quantum Range Queries} 
  \label{fig:effect_s_b_md}
\end{figure}

\smallskip
\noindent
\textbf{Effect of $B$.}
We also study the effect of the branching factor $B$.
As shown in Figures~\ref{fig:effect_s_b_static_dynamic}(b),~\ref{fig:effect_s_b_static_dynamic}(d)
and~\ref{fig:effect_s_b_md}(b),
when $B$ increases, the number of memory accesses for the classical data structures decreases sharply,
since the height of the tree is smaller for larger $B$.
Our quantum data structures needs slightly more memory accesses as $B$ increases,
because the success rate of the post-selection could be affected
as we mentioned in Section~\ref{subsubsec:local}.
However, it is well known that a larger branching factor leads to
more memory consumption for a classical B+ tree~\cite{comer1979ubiquitous} (and thus a quantum B+ as well).
Thus, we set $B$ to be 16 in other experiments for fair comparisons with a reasonable memory consumption.
Overall, our proposed quantum data structures favor a smaller branching factor.

\smallskip
\noindent
\textbf{Summary.}
In summary, our \textbf{quantum B+ tree} with our proposed GCLQ search algorithm
achieves up to 1000x performance improvement than the \textbf{classical B+ tree}.
On the dataset \emph{Brightkite} of size 2M,
the average number of memory access is only around 40 for the \textbf{quantum B+ tree}
for the quantum range query,
while the \textbf{classical B+ tree} needs around 40K memory accesses.
The similar superiority of our \textbf{quantum B+ tree}
is observed on the dynamic and multi-dimensional quantum range queries
compared with the classical data structures.
We also show that our \textbf{quantum B+ tree} scales well with the dataset size,
the selectivity of the query ranges and the branching factor.

\if 0
We conducted an experiment by simulating the quantum B+ tree on real-world datasets from SNAP \cite{leskovec2016snap}. The two datasets named Brightkite (4m) and Gowalla (6m) are two lists of check-ins with timestamps and locations. To compare the quantum B+ tree and the classical B+ tree, we took the timestamps as the $1$-dimensional data, which corresponds to the time-based range queries in real-world applications. We set the data size $N$ from $4096$ to $1048576$ and randomly generated $10000$ range queries. Figure \ref{fig:1} shows the result, where the x-axis $N$ denotes the number of records and the y-axis denotes the number of memory accessed needed to answer a range query on average.

In addition, to show the advantage of the quantum range tree, we compared it with the classical range tree \cite{chazelle1990lower1}. Note that Chazalle \cite{chazelle1990lower1} has proved the lower bound of the orthogonal range searching problem, and the range tree is asymptotically optimal. We choose the state-of-the-art in the classical computer to make the comparison. We took the locations in the datasets as the $2$-dimensional data, which corresponds to the location-based range queries in real-world applications. Figure \ref{fig:2} shows the result. 

In Figure \ref{fig:1} and Figure \ref{fig:2}, we can learn that the cost of a classical range query grows linearly with $N$. The reason is that when the size of the dataset becomes larger, the size of the answer to a range query will also probably become larger. Compared to the classical competitors, the cost of a quantum range query only grows logarithmically with $N$, since it does not depend on the size of the answer. Based on these small datasets, the quantum range query is up to $1000\times$ faster than the classical query, which supports our claim in Section \ref{intro} that the quantum range query can be exponentially faster than the classical range query. 

Then, we study how the size of the answer and the block size affect the performance. Figure \ref{fig:3} shows the result in 1-dimension. First, we studied the performance with different selectivities (i.e., $\frac{k}{N}$). We varied the selectivity between $1\%$ and $10\%$ and the results are shown in Figure \ref{fig:3-1} and Figure \ref{fig:3-3}, where the x-axis denotes the selectivity and the y-axis denotes the number of memory accessed needed to answer a range query on average. We see that the IOs needed by the classical B+ tree grows linearly with $k$. However, the IOs needed by the quantum B+ tree even decreases. It seems surprising but in fact it is reasonable. The reason is that a larger $k$ shortened the process of the classical global search, such that we can turn to the local quantum search earlier. Also, as mentioned in Section \ref{post}, post-selection costs $O(N/k)$ time, so a larger $k$ also speedup the post-selection step. In addition, we studied how the block size affects the performance, and the results are shown in Figure \ref{fig:3-2} and Figure \ref{fig:3-4}, where the x-axis denotes the block size and the y-axis denotes the number of memory accessed needed to answer a range query on average. IOs needed by the classical B+ tree decreases fast, since with a larger $B$, fewer pages were accessed by the B+ tree. IOs needed by the quantum B+ tree increases, and this observation corresponds to a middle result in the proof of Lemma \ref{lemma3} that $B$ has an impact on the success rate of the post-selection step. 

Figure \ref{fig:4} shows the result in 2-dimension. We also varied the selectivity between $1\%$ and $10\%$, and varied $B$ from $4$ to $64$ to study their performance. The observations are almost the same as the $1$-dimensional cases. The only difference is that a larger $k$ did not lead to fewer IOs needed by the quantum range tree. The reason is that a larger $k$ leads to more candidate nodes for the subsequent local quantum search, which is different from the $1$-dimensional case.

The above is the result of static queries. We also conducted an experiment on the dynamic quantum B+ tree to study how the data insertion and deletion affect the efficiency of the quantum B+ tree. Following \cite{alsubaiee2014storage}, we insert all the records in the datasets with 1\% chance to delete an existing record instead of inserting a new record. $B$ is set to 16. Figure \ref{fig:5-1} and Figure \ref{fig:5-2} show the number of IOs needed to do the operation on average. We can learn that the dynamic quantum B+ tree needs $3$ times more IOs to insert or delete a record, since the data structure is more complex and it needs to do extra insertion in the QRAM. Figure \ref{fig:5-3} and Figure \ref{fig:5-4} show the number of memory accessed needed to answer a range query on a dynamic quantum B+ tree, where the selectivity is set to $5\%$. It is obvious that the dynamic quantum B+ tree has higher scalability since the IO cost increases much slower.

In conclusion, the quantum trees performs far better than the classical trees from the perspective of the number of memory accesses. In $1$-dimensional case, a larger $k$ and a smaller $B$ leads to a better performance. In $2$-dimensional case, a smaller $B$ leads to a better performance. Since the quantum B+ tree and the quantum range tree needs much fewer memory accesses to answer a range query, we conclude that they have the potential to outperform any classical data structure.
\fi

\section{Related Work}\label{sec:related}

The classical range query problem has been studied for decades with various data structures proposed to solve this problem.
There is little room for further significant improvement.
For the static range query and the dynamic range query,
the B-tree \cite{bayer2002organization} and the B+ tree \cite{comer1979ubiquitous} are widely-used.
The B+ tree is a tree data structure which can find a key in $O(\log_B N)$ time,
where $N$ is the number of key-record pairs and $B$ is the branching factor.
Then, $O(k)$ time is needed to load the $k$ key-record pairs in the query range,
so the range query on a B+ tree costs $O(\log_B N + k)$ time,
which is shown to be asymptotically optimal in classical computers~\cite{yao1981should}.
\if 0
The membership problem is another problem that given a query key,
we need to decide if the key exists in the data structure.
Yao \cite{yao1981should} proved that for a large key space, any cell-probe data structure needs $\Omega(\log N)$ time to answer a membership problem in the worst case, where $N$ is the size of the data structure. Since a membership problem can be transformed into a point query, $\Omega(\log N)$ is also the lower bound on a tree data structure, so the B+ tree is asymptotically optimal for range queries in classical computers.
\fi

For the high-dimensional static range query, Bentley \cite{bentley1978decomposable} proposed the range tree to answer it in $O(\log^d N + k)$ time, where $d$ is the dimension of the keys.
In addition, the range tree needs $O(N\log^{d-1} N)$ storage space.
Chazelle later \cite{chazelle1990lower1, chazelle1990lower2} proposed the lower bounds for the $d$-dimensional cases: $O(k+polylog(N))$ time complexity with $\Omega(N(\log N/\log$ $\log N)^{d-1}))$ storage space and $\Omega((\log N/$ $\log (2C/N))^{d-1}+k)$ time complexity with $C$ units storage space where the lower bound of query I/O cost is provably tight for $C=\Omega(N(\log N)^{d-1})$. 

However, all the above classical data structures have the same problem that the execution time grows linearly with $k$, which makes them useless for quantum algorithms.
\if 0
For example, the $O(\log N)$ quantum algorithm for linear systems of equations \cite{harrow2009quantum}, the $O(\sqrt{N})$ quantum Bayesian inference algorithm \cite{low2014quantum}, and the $O(\log N)$ quantum support vector machine \cite{rebentrost2014quantum} are all sub-linear time algorithms. If a linear time is needed to load the data, the advantages of sub-linear time algorithms will disappear. So, the data structures in quantum computers for range queries should be considered.
\fi

The quantum database searching problem is also very popular in the quantum algorithm field. Grover's algorithm is described as a database search algorithm in \cite{grover1996fast}. It solves the problem of searching a marked record in an unstructured list, which means that all the $N$ records are arranged in random order. On average, the classical algorithm needs to perform $N/2$ queries to a function $f$ which tells us if the record is marked. More formally, for each index $i$, $f(i)=1$ means the $i$-th record is marked and $f(i)=0$ means the $i$-th record is unmarked. Note that only one record is marked in the database.
\if 0
If we have a quantum circuit to calculate this function, then we can build a Grover oracle $G:\ket{x} \rightarrow (-1)^{f(x)}\ket{x}$.
\fi
Taking the advantage of quantum parallelism, Grover's algorithm can find the index of the marked record with $O(\sqrt{N})$ queries to the oracle. The main idea is to first ``flip" the amplitude of the answer state and then reduce the amplitudes of the other states. One such iteration will enlarge the amplitude of the answer state and $O(\sqrt{N})$ iterations should be performed until the probability that the qubits are measured to be the answer comes close to 1. Then, an improved Grover's algorithm was proposed in \cite{boyer1998tight}. We are also given the function $f$ to mark the record, but $k$ records are marked at this time. The improved Grover's algorithm can find one of the $k$ marked records in $O(\sqrt{N/k})$ time. If we make the function $f$ to mark the records with keys within a query range, then this algorithm returns one of the $k$ key-record pairs in $O(\sqrt{N/k})$ time, and it needs $O(\sqrt{Nk})$ time to answer a range query. 
Since the query time grows linearly with the square root of $N$,
this algorithm is much less efficient when $N$ is very large,
even compared to using the classical data structure with $O(\log N + k)$ time complexity.
The reason is that this algorithm only handle the unstructured dataset and does not leverage the power of data structures that could be pre-built as a database index.

\if 0
The execution time of the quantum database search algorithm grows linearly with the square root of $k$, which is better than the classical data structures. However, the time complexity is even worse than the classical structures. Due to this observation, although Grover described the Grover's algorithm as a database search algorithm in \cite{grover1996fast}, we do not consider it capable of real-world database searching. The main reason is that in most real-world scenario, the database is not unstructured. Thus, the data structure should be considered in real-world databases. 
\fi

In the database area, there are also plenty of studies discussing how to use quantum computers to further improve traditional database queries. For example, \cite{uotila2022synergy, schonberger2022applicability, fankhauser2021multiple, trummer2015multiple} discussed quantum query optimization, which is to use quantum algorithms like quantum annealing \cite{finnila1994quantum} to optimize a traditional database query. However, compared with the quantum range query discussed in this paper, the existing studies are in a different direction. The existing studies are discussing how to use a quantum algorithm to optimize range queries in a classical computer, where the query returns a list of records. In this paper, we discuss how to use a classical algorithm to optimize range queries in a quantum computer, where the query returns quantum bits in a superposition of records. We focus on a quantum data structure stored in a quantum computer, where the classical algorithm is only to assist the query. To our best knowledge, in the database area, we are the first to discuss quantum algorithms in this direction.

In conclusion, the existing classical data structures and the existing quantum database searching algorithms cannot solve the range query problem in quantum computers perfectly.

\section{Conclusion}\label{sec:con}

In this paper, we study the quantum range query problem.
We propose the quantum B+ tree, the first tree-like quantum data structure,
and the efficient global-classical local-quantum search algorithm based on the quantum B+ tree.
Our proposed data structure and algorithm can answer a quantum range query in $O(\log_B N)$ time,
which is asymptotically optimal in quantum computers
and is exponentially faster than classical B+ trees.
Furthermore, we extend it to a dynamic quantum B+ tree.
The dynamic quantum B+ tree can support insertions and deletions in $O(\log_B N)$ time
and answer a dynamic quantum range query in $O(\log^2_B N)$ time.
We also extend the quantum B+ tree to the quantum range tree
to solve the $d$-dimensional quantum range query in $O(\log^d_B N)$ time.
In the experiments, we did simulations to verify the superiority of our proposed quantum data structures compared with the classical data structures.
We expect that the quantum data structures will show significant advantages in the real world.

The future work includes exploring even more efficient quantum algorithms for the dynamic and multi-dimensional range queries
and studying more advanced database queries such as the top-$k$ queries.

\bibliographystyle{ACM-Reference-Format}
\bibliography{reference}

\begin{thebibliography}{}

\end{thebibliography}



\begin{thebibliography}{65}


\ifx \showCODEN    \undefined \def \showCODEN     #1{\unskip}     \fi
\ifx \showDOI      \undefined \def \showDOI       #1{#1}\fi
\ifx \showISBNx    \undefined \def \showISBNx     #1{\unskip}     \fi
\ifx \showISBNxiii \undefined \def \showISBNxiii  #1{\unskip}     \fi
\ifx \showISSN     \undefined \def \showISSN      #1{\unskip}     \fi
\ifx \showLCCN     \undefined \def \showLCCN      #1{\unskip}     \fi
\ifx \shownote     \undefined \def \shownote      #1{#1}          \fi
\ifx \showarticletitle \undefined \def \showarticletitle #1{#1}   \fi
\ifx \showURL      \undefined \def \showURL       {\relax}        \fi
\providecommand\bibfield[2]{#2}
\providecommand\bibinfo[2]{#2}
\providecommand\natexlab[1]{#1}
\providecommand\showeprint[2][]{arXiv:#2}

\bibitem[\protect\citeauthoryear{Abohashima, Elhosen, Houssein, and Mohamed}{Abohashima et~al\mbox{.}}{2020}]%
        {abohashima2020classification}
\bibfield{author}{\bibinfo{person}{Zainab Abohashima}, \bibinfo{person}{Mohamed Elhosen}, \bibinfo{person}{Essam~H Houssein}, {and} \bibinfo{person}{Waleed~M Mohamed}.} \bibinfo{year}{2020}\natexlab{}.
\newblock \showarticletitle{Classification with Quantum Machine Learning: A Survey}.
\newblock \bibinfo{journal}{\emph{arXiv e-prints}} (\bibinfo{year}{2020}), \bibinfo{pages}{arXiv--2006}.
\newblock


\bibitem[\protect\citeauthoryear{Adhikary, Dangwal, and Bhowmik}{Adhikary et~al\mbox{.}}{2020}]%
        {adhikary2020supervised}
\bibfield{author}{\bibinfo{person}{Soumik Adhikary}, \bibinfo{person}{Siddharth Dangwal}, {and} \bibinfo{person}{Debanjan Bhowmik}.} \bibinfo{year}{2020}\natexlab{}.
\newblock \showarticletitle{Supervised learning with a quantum classifier using multi-level systems}.
\newblock \bibinfo{journal}{\emph{Quantum Information Processing}} \bibinfo{volume}{19}, \bibinfo{number}{3} (\bibinfo{year}{2020}), \bibinfo{pages}{1--12}.
\newblock


\bibitem[\protect\citeauthoryear{Alsubaiee, Behm, Borkar, Heilbron, Kim, Carey, Dreseler, and Li}{Alsubaiee et~al\mbox{.}}{2014}]%
        {alsubaiee2014storage}
\bibfield{author}{\bibinfo{person}{Sattam Alsubaiee}, \bibinfo{person}{Alexander Behm}, \bibinfo{person}{Vinayak Borkar}, \bibinfo{person}{Zachary Heilbron}, \bibinfo{person}{Young-Seok Kim}, \bibinfo{person}{Michael~J Carey}, \bibinfo{person}{Markus Dreseler}, {and} \bibinfo{person}{Chen Li}.} \bibinfo{year}{2014}\natexlab{}.
\newblock \showarticletitle{Storage management in AsterixDB}.
\newblock \bibinfo{journal}{\emph{Proceedings of the VLDB Endowment}} \bibinfo{volume}{7}, \bibinfo{number}{10} (\bibinfo{year}{2014}), \bibinfo{pages}{841--852}.
\newblock


\bibitem[\protect\citeauthoryear{Ambainis}{Ambainis}{1999}]%
        {ambainis1999better}
\bibfield{author}{\bibinfo{person}{Andris Ambainis}.} \bibinfo{year}{1999}\natexlab{}.
\newblock \showarticletitle{A better lower bound for quantum algorithms searching an ordered list}. In \bibinfo{booktitle}{\emph{40th Annual Symposium on Foundations of Computer Science (Cat. No. 99CB37039)}}. IEEE, \bibinfo{pages}{352--357}.
\newblock


\bibitem[\protect\citeauthoryear{Arge and Vitter}{Arge and Vitter}{1996}]%
        {arge1996optimal}
\bibfield{author}{\bibinfo{person}{Lars Arge} {and} \bibinfo{person}{Jeffrey~Scott Vitter}.} \bibinfo{year}{1996}\natexlab{}.
\newblock \showarticletitle{Optimal dynamic interval management in external memory}. In \bibinfo{booktitle}{\emph{Proceedings of 37th Conference on Foundations of Computer Science}}. IEEE, \bibinfo{pages}{560--569}.
\newblock


\bibitem[\protect\citeauthoryear{Arute, Arya, Babbush, Bacon, Bardin, Barends, Biswas, Boixo, Brandao, Buell, et~al\mbox{.}}{Arute et~al\mbox{.}}{2019}]%
        {arute2019quantum}
\bibfield{author}{\bibinfo{person}{Frank Arute}, \bibinfo{person}{Kunal Arya}, \bibinfo{person}{Ryan Babbush}, \bibinfo{person}{Dave Bacon}, \bibinfo{person}{Joseph~C Bardin}, \bibinfo{person}{Rami Barends}, \bibinfo{person}{Rupak Biswas}, \bibinfo{person}{Sergio Boixo}, \bibinfo{person}{Fernando~GSL Brandao}, \bibinfo{person}{David~A Buell}, {et~al\mbox{.}}} \bibinfo{year}{2019}\natexlab{}.
\newblock \showarticletitle{Quantum supremacy using a programmable superconducting processor}.
\newblock \bibinfo{journal}{\emph{Nature}} \bibinfo{volume}{574}, \bibinfo{number}{7779} (\bibinfo{year}{2019}), \bibinfo{pages}{505--510}.
\newblock


\bibitem[\protect\citeauthoryear{Author(s)}{Author(s)}{2023}]%
        {technical_report}
\bibfield{author}{\bibinfo{person}{Anonymous Author(s)}.} \bibinfo{year}{2023}\natexlab{}.
\newblock \bibinfo{booktitle}{\emph{First Tree-like Quantum Data Structure: Quantum B+ Tree}}.
\newblock \bibinfo{type}{{T}echnical {R}eport}.
\newblock
\urldef\tempurl%
\url{https://github.com/anonym45263/EBDE402C7C7404A8}
\showURL{%
\tempurl}


\bibitem[\protect\citeauthoryear{Bayer and McCreight}{Bayer and McCreight}{2002}]%
        {bayer2002organization}
\bibfield{author}{\bibinfo{person}{Rudolf Bayer} {and} \bibinfo{person}{Edward McCreight}.} \bibinfo{year}{2002}\natexlab{}.
\newblock \showarticletitle{Organization and maintenance of large ordered indexes}.
\newblock In \bibinfo{booktitle}{\emph{Software pioneers}}. \bibinfo{publisher}{Springer}, \bibinfo{pages}{245--262}.
\newblock


\bibitem[\protect\citeauthoryear{Benedetti, Garcia-Pintos, Perdomo, Leyton-Ortega, Nam, and Perdomo-Ortiz}{Benedetti et~al\mbox{.}}{2019}]%
        {benedetti2019generative}
\bibfield{author}{\bibinfo{person}{Marcello Benedetti}, \bibinfo{person}{Delfina Garcia-Pintos}, \bibinfo{person}{Oscar Perdomo}, \bibinfo{person}{Vicente Leyton-Ortega}, \bibinfo{person}{Yunseong Nam}, {and} \bibinfo{person}{Alejandro Perdomo-Ortiz}.} \bibinfo{year}{2019}\natexlab{}.
\newblock \showarticletitle{A generative modeling approach for benchmarking and training shallow quantum circuits}.
\newblock \bibinfo{journal}{\emph{npj Quantum Information}} \bibinfo{volume}{5}, \bibinfo{number}{1} (\bibinfo{year}{2019}), \bibinfo{pages}{1--9}.
\newblock


\bibitem[\protect\citeauthoryear{Bentley}{Bentley}{1978}]%
        {bentley1978decomposable}
\bibfield{author}{\bibinfo{person}{Jon~Louis Bentley}.} \bibinfo{year}{1978}\natexlab{}.
\newblock \bibinfo{booktitle}{\emph{Decomposable searching problems.}}
\newblock \bibinfo{type}{{T}echnical {R}eport}. \bibinfo{institution}{CARNEGIE-MELLON UNIV PITTSBURGH PA DEPT OF COMPUTER SCIENCE}.
\newblock


\bibitem[\protect\citeauthoryear{Bentley and Saxe}{Bentley and Saxe}{1980}]%
        {bentley1980decomposable}
\bibfield{author}{\bibinfo{person}{Jon~Louis Bentley} {and} \bibinfo{person}{James~B Saxe}.} \bibinfo{year}{1980}\natexlab{}.
\newblock \showarticletitle{Decomposable searching problems I. Static-to-dynamic transformation}.
\newblock \bibinfo{journal}{\emph{Journal of Algorithms}} \bibinfo{volume}{1}, \bibinfo{number}{4} (\bibinfo{year}{1980}), \bibinfo{pages}{301--358}.
\newblock


\bibitem[\protect\citeauthoryear{Berg, Oskarsson, and O'Connor}{Berg et~al\mbox{.}}{2021}]%
        {berg2021deep}
\bibfield{author}{\bibinfo{person}{Axel Berg}, \bibinfo{person}{Magnus Oskarsson}, {and} \bibinfo{person}{Mark O'Connor}.} \bibinfo{year}{2021}\natexlab{}.
\newblock \showarticletitle{Deep ordinal regression with label diversity}. In \bibinfo{booktitle}{\emph{2020 25th International Conference on Pattern Recognition (ICPR)}}. IEEE, \bibinfo{pages}{2740--2747}.
\newblock


\bibitem[\protect\citeauthoryear{Boixo, Isakov, Smelyanskiy, Babbush, Ding, Jiang, Bremner, Martinis, and Neven}{Boixo et~al\mbox{.}}{2018}]%
        {boixo2018characterizing}
\bibfield{author}{\bibinfo{person}{Sergio Boixo}, \bibinfo{person}{Sergei~V Isakov}, \bibinfo{person}{Vadim~N Smelyanskiy}, \bibinfo{person}{Ryan Babbush}, \bibinfo{person}{Nan Ding}, \bibinfo{person}{Zhang Jiang}, \bibinfo{person}{Michael~J Bremner}, \bibinfo{person}{John~M Martinis}, {and} \bibinfo{person}{Hartmut Neven}.} \bibinfo{year}{2018}\natexlab{}.
\newblock \showarticletitle{Characterizing quantum supremacy in near-term devices}.
\newblock \bibinfo{journal}{\emph{Nature Physics}} \bibinfo{volume}{14}, \bibinfo{number}{6} (\bibinfo{year}{2018}), \bibinfo{pages}{595--600}.
\newblock


\bibitem[\protect\citeauthoryear{Boyer, Brassard, H{\o}yer, and Tapp}{Boyer et~al\mbox{.}}{1998}]%
        {boyer1998tight}
\bibfield{author}{\bibinfo{person}{Michel Boyer}, \bibinfo{person}{Gilles Brassard}, \bibinfo{person}{Peter H{\o}yer}, {and} \bibinfo{person}{Alain Tapp}.} \bibinfo{year}{1998}\natexlab{}.
\newblock \showarticletitle{Tight bounds on quantum searching}.
\newblock \bibinfo{journal}{\emph{Fortschritte der Physik: Progress of Physics}} \bibinfo{volume}{46}, \bibinfo{number}{4-5} (\bibinfo{year}{1998}), \bibinfo{pages}{493--505}.
\newblock


\bibitem[\protect\citeauthoryear{Brassard, H{\o}yer, and Tapp}{Brassard et~al\mbox{.}}{1998}]%
        {brassard1998quantum}
\bibfield{author}{\bibinfo{person}{Gilles Brassard}, \bibinfo{person}{Peter H{\o}yer}, {and} \bibinfo{person}{Alain Tapp}.} \bibinfo{year}{1998}\natexlab{}.
\newblock \showarticletitle{Quantum counting}. In \bibinfo{booktitle}{\emph{International Colloquium on Automata, Languages, and Programming}}. Springer, \bibinfo{pages}{820--831}.
\newblock


\bibitem[\protect\citeauthoryear{Campos, Bellog{\'\i}n, D{\'\i}ez, and Chavarriaga}{Campos et~al\mbox{.}}{2010}]%
        {campos2010simple}
\bibfield{author}{\bibinfo{person}{Pedro~G Campos}, \bibinfo{person}{Alejandro Bellog{\'\i}n}, \bibinfo{person}{Fernando D{\'\i}ez}, {and} \bibinfo{person}{J~Enrique Chavarriaga}.} \bibinfo{year}{2010}\natexlab{}.
\newblock \showarticletitle{Simple time-biased KNN-based recommendations}. In \bibinfo{booktitle}{\emph{Proceedings of the Workshop on Context-Aware Movie Recommendation}}. \bibinfo{pages}{20--23}.
\newblock


\bibitem[\protect\citeauthoryear{Chakraborty, Shaikh, Chakrabarti, and Ghosh}{Chakraborty et~al\mbox{.}}{2020}]%
        {chakraborty2020hybrid}
\bibfield{author}{\bibinfo{person}{Sanjay Chakraborty}, \bibinfo{person}{Soharab~Hossain Shaikh}, \bibinfo{person}{Amlan Chakrabarti}, {and} \bibinfo{person}{Ranjan Ghosh}.} \bibinfo{year}{2020}\natexlab{}.
\newblock \showarticletitle{A hybrid quantum feature selection algorithm using a quantum inspired graph theoretic approach}.
\newblock \bibinfo{journal}{\emph{Applied Intelligence}} \bibinfo{volume}{50}, \bibinfo{number}{6} (\bibinfo{year}{2020}), \bibinfo{pages}{1775--1793}.
\newblock


\bibitem[\protect\citeauthoryear{Chazelle}{Chazelle}{1990a}]%
        {chazelle1990lower1}
\bibfield{author}{\bibinfo{person}{Bernard Chazelle}.} \bibinfo{year}{1990}\natexlab{a}.
\newblock \showarticletitle{Lower bounds for orthogonal range searching: I. the reporting case}.
\newblock \bibinfo{journal}{\emph{Journal of the ACM (JACM)}} \bibinfo{volume}{37}, \bibinfo{number}{2} (\bibinfo{year}{1990}), \bibinfo{pages}{200--212}.
\newblock


\bibitem[\protect\citeauthoryear{Chazelle}{Chazelle}{1990b}]%
        {chazelle1990lower2}
\bibfield{author}{\bibinfo{person}{Bernard Chazelle}.} \bibinfo{year}{1990}\natexlab{b}.
\newblock \showarticletitle{Lower bounds for orthogonal range searching: I\text{I}. The arithmetic model}.
\newblock \bibinfo{journal}{\emph{Journal of the ACM (JACM)}} \bibinfo{volume}{37}, \bibinfo{number}{3} (\bibinfo{year}{1990}), \bibinfo{pages}{439--463}.
\newblock


\bibitem[\protect\citeauthoryear{Comer}{Comer}{1979}]%
        {comer1979ubiquitous}
\bibfield{author}{\bibinfo{person}{Douglas Comer}.} \bibinfo{year}{1979}\natexlab{}.
\newblock \showarticletitle{Ubiquitous B-tree}.
\newblock \bibinfo{journal}{\emph{ACM Computing Surveys (CSUR)}} \bibinfo{volume}{11}, \bibinfo{number}{2} (\bibinfo{year}{1979}), \bibinfo{pages}{121--137}.
\newblock


\bibitem[\protect\citeauthoryear{Covington, Adams, and Sargin}{Covington et~al\mbox{.}}{2016}]%
        {covington2016deep}
\bibfield{author}{\bibinfo{person}{Paul Covington}, \bibinfo{person}{Jay Adams}, {and} \bibinfo{person}{Emre Sargin}.} \bibinfo{year}{2016}\natexlab{}.
\newblock \showarticletitle{Deep neural networks for youtube recommendations}. In \bibinfo{booktitle}{\emph{Proceedings of the 10th ACM conference on recommender systems}}. \bibinfo{pages}{191--198}.
\newblock


\bibitem[\protect\citeauthoryear{Dirac}{Dirac}{1939}]%
        {dirac1939new}
\bibfield{author}{\bibinfo{person}{Paul Adrien~Maurice Dirac}.} \bibinfo{year}{1939}\natexlab{}.
\newblock \showarticletitle{A new notation for quantum mechanics}. In \bibinfo{booktitle}{\emph{Mathematical Proceedings of the Cambridge Philosophical Society}}, Vol.~\bibinfo{volume}{35}. Cambridge University Press, \bibinfo{pages}{416--418}.
\newblock


\bibitem[\protect\citeauthoryear{Dougherty, Kohavi, and Sahami}{Dougherty et~al\mbox{.}}{1995}]%
        {dougherty1995supervised}
\bibfield{author}{\bibinfo{person}{James Dougherty}, \bibinfo{person}{Ron Kohavi}, {and} \bibinfo{person}{Mehran Sahami}.} \bibinfo{year}{1995}\natexlab{}.
\newblock \showarticletitle{Supervised and unsupervised discretization of continuous features}.
\newblock In \bibinfo{booktitle}{\emph{Machine learning proceedings 1995}}. \bibinfo{publisher}{Elsevier}, \bibinfo{pages}{194--202}.
\newblock


\bibitem[\protect\citeauthoryear{Durr and Hoyer}{Durr and Hoyer}{1996}]%
        {durr1996quantum}
\bibfield{author}{\bibinfo{person}{Christoph Durr} {and} \bibinfo{person}{Peter Hoyer}.} \bibinfo{year}{1996}\natexlab{}.
\newblock \showarticletitle{A quantum algorithm for finding the minimum}.
\newblock \bibinfo{journal}{\emph{arXiv preprint quant-ph/9607014}} (\bibinfo{year}{1996}).
\newblock


\bibitem[\protect\citeauthoryear{Fankhauser, Sol{\`e}r, F{\"u}chslin, and Stockinger}{Fankhauser et~al\mbox{.}}{2021}]%
        {fankhauser2021multiple}
\bibfield{author}{\bibinfo{person}{Tobias Fankhauser}, \bibinfo{person}{Marc~E Sol{\`e}r}, \bibinfo{person}{Rudolf~M F{\"u}chslin}, {and} \bibinfo{person}{Kurt Stockinger}.} \bibinfo{year}{2021}\natexlab{}.
\newblock \showarticletitle{Multiple query optimization using a hybrid approach of classical and quantum computing}.
\newblock \bibinfo{journal}{\emph{arXiv preprint arXiv:2107.10508}} (\bibinfo{year}{2021}).
\newblock


\bibitem[\protect\citeauthoryear{Finnila, Gomez, Sebenik, Stenson, and Doll}{Finnila et~al\mbox{.}}{1994}]%
        {finnila1994quantum}
\bibfield{author}{\bibinfo{person}{Aleta~Berk Finnila}, \bibinfo{person}{MA Gomez}, \bibinfo{person}{C Sebenik}, \bibinfo{person}{Catherine Stenson}, {and} \bibinfo{person}{Jimmie~D Doll}.} \bibinfo{year}{1994}\natexlab{}.
\newblock \showarticletitle{Quantum annealing: A new method for minimizing multidimensional functions}.
\newblock \bibinfo{journal}{\emph{Chemical physics letters}} \bibinfo{volume}{219}, \bibinfo{number}{5-6} (\bibinfo{year}{1994}), \bibinfo{pages}{343--348}.
\newblock


\bibitem[\protect\citeauthoryear{Gidney~C and contributors}{Gidney~C and contributors}{2018}]%
        {cirq}
\bibfield{author}{\bibinfo{person}{Bacon~D Gidney~C} {and} \bibinfo{person}{contributors}.} \bibinfo{year}{2018}\natexlab{}.
\newblock \bibinfo{title}{Cirq: A python framework for creating, editing, and invoking noisy intermediate scale quantum (NISQ) circuits}.
\newblock \bibinfo{howpublished}{https://github.com/quantumlib/Cirq}.
\newblock


\bibitem[\protect\citeauthoryear{Grover}{Grover}{1996}]%
        {grover1996fast}
\bibfield{author}{\bibinfo{person}{Lov~K Grover}.} \bibinfo{year}{1996}\natexlab{}.
\newblock \showarticletitle{A fast quantum mechanical algorithm for database search}. In \bibinfo{booktitle}{\emph{Proceedings of the twenty-eighth annual ACM symposium on Theory of computing}}. \bibinfo{pages}{212--219}.
\newblock


\bibitem[\protect\citeauthoryear{Grover and Radhakrishnan}{Grover and Radhakrishnan}{2005}]%
        {grover2005partial}
\bibfield{author}{\bibinfo{person}{Lov~K Grover} {and} \bibinfo{person}{Jaikumar Radhakrishnan}.} \bibinfo{year}{2005}\natexlab{}.
\newblock \showarticletitle{Is partial quantum search of a database any easier?}. In \bibinfo{booktitle}{\emph{Proceedings of the seventeenth annual ACM symposium on Parallelism in algorithms and architectures}}. \bibinfo{pages}{186--194}.
\newblock


\bibitem[\protect\citeauthoryear{Gupta and Singh}{Gupta and Singh}{2013}]%
        {gupta2013location}
\bibfield{author}{\bibinfo{person}{Anant Gupta} {and} \bibinfo{person}{Kuldeep Singh}.} \bibinfo{year}{2013}\natexlab{}.
\newblock \showarticletitle{Location based personalized restaurant recommendation system for mobile environments}. In \bibinfo{booktitle}{\emph{2013 International Conference on Advances in Computing, Communications and Informatics (ICACCI)}}. IEEE, \bibinfo{pages}{507--511}.
\newblock


\bibitem[\protect\citeauthoryear{Hadamard}{Hadamard}{1893}]%
        {hadamard1893resolution}
\bibfield{author}{\bibinfo{person}{Jacques Hadamard}.} \bibinfo{year}{1893}\natexlab{}.
\newblock \showarticletitle{Resolution d'une question relative aux determinants}.
\newblock \bibinfo{journal}{\emph{Bull. des sciences math.}}  \bibinfo{volume}{2} (\bibinfo{year}{1893}), \bibinfo{pages}{240--246}.
\newblock


\bibitem[\protect\citeauthoryear{Harrow, Hassidim, and Lloyd}{Harrow et~al\mbox{.}}{2009}]%
        {harrow2009quantum}
\bibfield{author}{\bibinfo{person}{Aram~W Harrow}, \bibinfo{person}{Avinatan Hassidim}, {and} \bibinfo{person}{Seth Lloyd}.} \bibinfo{year}{2009}\natexlab{}.
\newblock \showarticletitle{Quantum algorithm for linear systems of equations}.
\newblock \bibinfo{journal}{\emph{Physical review letters}} \bibinfo{volume}{103}, \bibinfo{number}{15} (\bibinfo{year}{2009}), \bibinfo{pages}{150502}.
\newblock


\bibitem[\protect\citeauthoryear{Hosoyamada and Sasaki}{Hosoyamada and Sasaki}{2018}]%
        {hosoyamada2018quantum}
\bibfield{author}{\bibinfo{person}{Akinori Hosoyamada} {and} \bibinfo{person}{Yu Sasaki}.} \bibinfo{year}{2018}\natexlab{}.
\newblock \showarticletitle{Quantum Demiric-Sel{\c{c}}uk meet-in-the-middle attacks: applications to 6-round generic Feistel constructions}. In \bibinfo{booktitle}{\emph{International Conference on Security and Cryptography for Networks}}. Springer, \bibinfo{pages}{386--403}.
\newblock


\bibitem[\protect\citeauthoryear{Ibtehaz, Kaykobad, and Rahman}{Ibtehaz et~al\mbox{.}}{2021}]%
        {ibtehaz2021multidimensional}
\bibfield{author}{\bibinfo{person}{Nabil Ibtehaz}, \bibinfo{person}{M Kaykobad}, {and} \bibinfo{person}{M~Sohel Rahman}.} \bibinfo{year}{2021}\natexlab{}.
\newblock \showarticletitle{Multidimensional segment trees can do range updates in poly-logarithmic time}.
\newblock \bibinfo{journal}{\emph{Theoretical Computer Science}}  \bibinfo{volume}{854} (\bibinfo{year}{2021}), \bibinfo{pages}{30--43}.
\newblock


\bibitem[\protect\citeauthoryear{Kapralov, Khadiev, Mokut, Shen, and Yagafarov}{Kapralov et~al\mbox{.}}{2020}]%
        {kapralov2020fast}
\bibfield{author}{\bibinfo{person}{Ruslan Kapralov}, \bibinfo{person}{Kamil Khadiev}, \bibinfo{person}{Joshua Mokut}, \bibinfo{person}{Yixin Shen}, {and} \bibinfo{person}{Maxim Yagafarov}.} \bibinfo{year}{2020}\natexlab{}.
\newblock \showarticletitle{Fast Classical and Quantum Algorithms for Online $ k $-server Problem on Trees}.
\newblock \bibinfo{journal}{\emph{arXiv preprint arXiv:2008.00270}} (\bibinfo{year}{2020}).
\newblock


\bibitem[\protect\citeauthoryear{Kerenidis and Prakash}{Kerenidis and Prakash}{2017}]%
        {kerenidis2017quantum}
\bibfield{author}{\bibinfo{person}{Iordanis Kerenidis} {and} \bibinfo{person}{Anupam Prakash}.} \bibinfo{year}{2017}\natexlab{}.
\newblock \showarticletitle{Quantum Recommendation Systems}. In \bibinfo{booktitle}{\emph{8th Innovations in Theoretical Computer Science Conference (ITCS 2017)}}. Schloss Dagstuhl-Leibniz-Zentrum fuer Informatik.
\newblock


\bibitem[\protect\citeauthoryear{Kieferova, Carlos, and Wiebe}{Kieferova et~al\mbox{.}}{2021}]%
        {kieferova2021quantum}
\bibfield{author}{\bibinfo{person}{Maria Kieferova}, \bibinfo{person}{Ortiz~Marrero Carlos}, {and} \bibinfo{person}{Nathan Wiebe}.} \bibinfo{year}{2021}\natexlab{}.
\newblock \showarticletitle{Quantum Generative Training Using R$\backslash$'enyi Divergences}.
\newblock \bibinfo{journal}{\emph{arXiv preprint arXiv:2106.09567}} (\bibinfo{year}{2021}).
\newblock


\bibitem[\protect\citeauthoryear{Kim and Won}{Kim and Won}{2001}]%
        {kim2001batch}
\bibfield{author}{\bibinfo{person}{Sang-Wook Kim} {and} \bibinfo{person}{Hee-Sun Won}.} \bibinfo{year}{2001}\natexlab{}.
\newblock \showarticletitle{Batch-construction of B+-trees}. In \bibinfo{booktitle}{\emph{Proceedings of the 2001 ACM symposium on Applied Computing}}. \bibinfo{pages}{231--235}.
\newblock


\bibitem[\protect\citeauthoryear{Leskovec and Sosi{\v{c}}}{Leskovec and Sosi{\v{c}}}{2016}]%
        {leskovec2016snap}
\bibfield{author}{\bibinfo{person}{Jure Leskovec} {and} \bibinfo{person}{Rok Sosi{\v{c}}}.} \bibinfo{year}{2016}\natexlab{}.
\newblock \showarticletitle{Snap: A general-purpose network analysis and graph-mining library}.
\newblock \bibinfo{journal}{\emph{ACM Transactions on Intelligent Systems and Technology (TIST)}} \bibinfo{volume}{8}, \bibinfo{number}{1} (\bibinfo{year}{2016}), \bibinfo{pages}{1--20}.
\newblock


\bibitem[\protect\citeauthoryear{Li, Song, and Wang}{Li et~al\mbox{.}}{2021a}]%
        {li2021vsql}
\bibfield{author}{\bibinfo{person}{Guangxi Li}, \bibinfo{person}{Zhixin Song}, {and} \bibinfo{person}{Xin Wang}.} \bibinfo{year}{2021}\natexlab{a}.
\newblock \showarticletitle{VSQL: variational shadow quantum learning for classification}. In \bibinfo{booktitle}{\emph{Proceedings of the AAAI Conference on Artificial Intelligence}}, Vol.~\bibinfo{volume}{35}. \bibinfo{pages}{8357--8365}.
\newblock


\bibitem[\protect\citeauthoryear{Li, Chakrabarti, and Wu}{Li et~al\mbox{.}}{2019}]%
        {li2019sublinear}
\bibfield{author}{\bibinfo{person}{Tongyang Li}, \bibinfo{person}{Shouvanik Chakrabarti}, {and} \bibinfo{person}{Xiaodi Wu}.} \bibinfo{year}{2019}\natexlab{}.
\newblock \showarticletitle{Sublinear quantum algorithms for training linear and kernel-based classifiers}. In \bibinfo{booktitle}{\emph{International Conference on Machine Learning}}. PMLR, \bibinfo{pages}{3815--3824}.
\newblock


\bibitem[\protect\citeauthoryear{Li, Wang, Chakrabarti, and Wu}{Li et~al\mbox{.}}{2021b}]%
        {li2021sublinear}
\bibfield{author}{\bibinfo{person}{Tongyang Li}, \bibinfo{person}{Chunhao Wang}, \bibinfo{person}{Shouvanik Chakrabarti}, {and} \bibinfo{person}{Xiaodi Wu}.} \bibinfo{year}{2021}\natexlab{b}.
\newblock \showarticletitle{Sublinear Classical and Quantum Algorithms for General Matrix Games}. In \bibinfo{booktitle}{\emph{Proceedings of the AAAI Conference on Artificial Intelligence}}, Vol.~\bibinfo{volume}{35}. \bibinfo{pages}{8465--8473}.
\newblock


\bibitem[\protect\citeauthoryear{Mitarai, Negoro, Kitagawa, and Fujii}{Mitarai et~al\mbox{.}}{2018}]%
        {mitarai2018quantum}
\bibfield{author}{\bibinfo{person}{Kosuke Mitarai}, \bibinfo{person}{Makoto Negoro}, \bibinfo{person}{Masahiro Kitagawa}, {and} \bibinfo{person}{Keisuke Fujii}.} \bibinfo{year}{2018}\natexlab{}.
\newblock \showarticletitle{Quantum circuit learning}.
\newblock \bibinfo{journal}{\emph{Physical Review A}} \bibinfo{volume}{98}, \bibinfo{number}{3} (\bibinfo{year}{2018}), \bibinfo{pages}{032309}.
\newblock


\bibitem[\protect\citeauthoryear{Montanaro}{Montanaro}{2017}]%
        {montanaro2017quantum}
\bibfield{author}{\bibinfo{person}{Ashley Montanaro}.} \bibinfo{year}{2017}\natexlab{}.
\newblock \showarticletitle{Quantum pattern matching fast on average}.
\newblock \bibinfo{journal}{\emph{Algorithmica}} \bibinfo{volume}{77}, \bibinfo{number}{1} (\bibinfo{year}{2017}), \bibinfo{pages}{16--39}.
\newblock


\bibitem[\protect\citeauthoryear{Naya-Plasencia and Schrottenloher}{Naya-Plasencia and Schrottenloher}{2020}]%
        {naya2020optimal}
\bibfield{author}{\bibinfo{person}{Mar{\'\i}a Naya-Plasencia} {and} \bibinfo{person}{Andr{\'e} Schrottenloher}.} \bibinfo{year}{2020}\natexlab{}.
\newblock \showarticletitle{Optimal Merging in Quantum $k$-xor and $k$-sum Algorithms}. In \bibinfo{booktitle}{\emph{Annual International Conference on the Theory and Applications of Cryptographic Techniques}}. Springer, \bibinfo{pages}{311--340}.
\newblock


\bibitem[\protect\citeauthoryear{Nielsen and Chuang}{Nielsen and Chuang}{2001}]%
        {nielsen2001quantum}
\bibfield{author}{\bibinfo{person}{Michael~A Nielsen} {and} \bibinfo{person}{Isaac~L Chuang}.} \bibinfo{year}{2001}\natexlab{}.
\newblock \showarticletitle{Quantum computation and quantum information}.
\newblock \bibinfo{journal}{\emph{Phys. Today}} \bibinfo{volume}{54}, \bibinfo{number}{2} (\bibinfo{year}{2001}), \bibinfo{pages}{60}.
\newblock


\bibitem[\protect\citeauthoryear{Overmars}{Overmars}{1983}]%
        {overmars1983design}
\bibfield{author}{\bibinfo{person}{Mark~H Overmars}.} \bibinfo{year}{1983}\natexlab{}.
\newblock \bibinfo{booktitle}{\emph{The design of dynamic data structures}}. Vol.~\bibinfo{volume}{156}.
\newblock \bibinfo{publisher}{Springer Science \& Business Media}.
\newblock


\bibitem[\protect\citeauthoryear{O’Neil, Cheng, Gawlick, and O’Neil}{O’Neil et~al\mbox{.}}{1996}]%
        {o1996log}
\bibfield{author}{\bibinfo{person}{Patrick O’Neil}, \bibinfo{person}{Edward Cheng}, \bibinfo{person}{Dieter Gawlick}, {and} \bibinfo{person}{Elizabeth O’Neil}.} \bibinfo{year}{1996}\natexlab{}.
\newblock \showarticletitle{The log-structured merge-tree (LSM-tree)}.
\newblock \bibinfo{journal}{\emph{Acta Informatica}} \bibinfo{volume}{33}, \bibinfo{number}{4} (\bibinfo{year}{1996}), \bibinfo{pages}{351--385}.
\newblock


\bibitem[\protect\citeauthoryear{Rebentrost, Mohseni, and Lloyd}{Rebentrost et~al\mbox{.}}{2014}]%
        {rebentrost2014quantum}
\bibfield{author}{\bibinfo{person}{Patrick Rebentrost}, \bibinfo{person}{Masoud Mohseni}, {and} \bibinfo{person}{Seth Lloyd}.} \bibinfo{year}{2014}\natexlab{}.
\newblock \showarticletitle{Quantum support vector machine for big data classification}.
\newblock \bibinfo{journal}{\emph{Physical review letters}} \bibinfo{volume}{113}, \bibinfo{number}{13} (\bibinfo{year}{2014}), \bibinfo{pages}{130503}.
\newblock


\bibitem[\protect\citeauthoryear{Ruan, Xue, Liu, Tan, and Li}{Ruan et~al\mbox{.}}{2017}]%
        {ruan2017quantum}
\bibfield{author}{\bibinfo{person}{Yue Ruan}, \bibinfo{person}{Xiling Xue}, \bibinfo{person}{Heng Liu}, \bibinfo{person}{Jianing Tan}, {and} \bibinfo{person}{Xi Li}.} \bibinfo{year}{2017}\natexlab{}.
\newblock \showarticletitle{Quantum algorithm for k-nearest neighbors classification based on the metric of hamming distance}.
\newblock \bibinfo{journal}{\emph{International Journal of Theoretical Physics}} \bibinfo{volume}{56}, \bibinfo{number}{11} (\bibinfo{year}{2017}), \bibinfo{pages}{3496--3507}.
\newblock


\bibitem[\protect\citeauthoryear{Saeedi and Arodz}{Saeedi and Arodz}{2019}]%
        {saeedi2019quantum}
\bibfield{author}{\bibinfo{person}{Seyran Saeedi} {and} \bibinfo{person}{Tom Arodz}.} \bibinfo{year}{2019}\natexlab{}.
\newblock \showarticletitle{Quantum sparse support vector machines}.
\newblock \bibinfo{journal}{\emph{arXiv preprint arXiv:1902.01879}} (\bibinfo{year}{2019}).
\newblock


\bibitem[\protect\citeauthoryear{Sch{\"o}nberger}{Sch{\"o}nberger}{2022}]%
        {schonberger2022applicability}
\bibfield{author}{\bibinfo{person}{Manuel Sch{\"o}nberger}.} \bibinfo{year}{2022}\natexlab{}.
\newblock \showarticletitle{Applicability of quantum computing on database query optimization}. In \bibinfo{booktitle}{\emph{Proceedings of the 2022 International Conference on Management of Data}}. \bibinfo{pages}{2512--2514}.
\newblock


\bibitem[\protect\citeauthoryear{Schr{\"o}dinger}{Schr{\"o}dinger}{1935}]%
        {schrodinger1935discussion}
\bibfield{author}{\bibinfo{person}{Erwin Schr{\"o}dinger}.} \bibinfo{year}{1935}\natexlab{}.
\newblock \showarticletitle{Discussion of probability relations between separated systems}. In \bibinfo{booktitle}{\emph{Mathematical Proceedings of the Cambridge Philosophical Society}}, Vol.~\bibinfo{volume}{31}. Cambridge University Press, \bibinfo{pages}{555--563}.
\newblock


\bibitem[\protect\citeauthoryear{Schuld, Bocharov, Svore, and Wiebe}{Schuld et~al\mbox{.}}{2020}]%
        {schuld2020circuit}
\bibfield{author}{\bibinfo{person}{Maria Schuld}, \bibinfo{person}{Alex Bocharov}, \bibinfo{person}{Krysta~M Svore}, {and} \bibinfo{person}{Nathan Wiebe}.} \bibinfo{year}{2020}\natexlab{}.
\newblock \showarticletitle{Circuit-centric quantum classifiers}.
\newblock \bibinfo{journal}{\emph{Physical Review A}} \bibinfo{volume}{101}, \bibinfo{number}{3} (\bibinfo{year}{2020}), \bibinfo{pages}{032308}.
\newblock


\bibitem[\protect\citeauthoryear{Schuld and Killoran}{Schuld and Killoran}{2019}]%
        {schuld2019quantum}
\bibfield{author}{\bibinfo{person}{Maria Schuld} {and} \bibinfo{person}{Nathan Killoran}.} \bibinfo{year}{2019}\natexlab{}.
\newblock \showarticletitle{Quantum machine learning in feature Hilbert spaces}.
\newblock \bibinfo{journal}{\emph{Physical review letters}} \bibinfo{volume}{122}, \bibinfo{number}{4} (\bibinfo{year}{2019}), \bibinfo{pages}{040504}.
\newblock


\bibitem[\protect\citeauthoryear{Sen and Venkatesh}{Sen and Venkatesh}{2001}]%
        {sen2001lower}
\bibfield{author}{\bibinfo{person}{Pranab Sen} {and} \bibinfo{person}{Srinivasan Venkatesh}.} \bibinfo{year}{2001}\natexlab{}.
\newblock \showarticletitle{Lower bounds in the quantum cell probe model}. In \bibinfo{booktitle}{\emph{International Colloquium on Automata, Languages, and Programming}}. Springer, \bibinfo{pages}{358--369}.
\newblock


\bibitem[\protect\citeauthoryear{Shor}{Shor}{1994}]%
        {shor1994algorithms}
\bibfield{author}{\bibinfo{person}{Peter~W Shor}.} \bibinfo{year}{1994}\natexlab{}.
\newblock \showarticletitle{Algorithms for quantum computation: discrete logarithms and factoring}. In \bibinfo{booktitle}{\emph{Proceedings 35th annual symposium on foundations of computer science}}. Ieee, \bibinfo{pages}{124--134}.
\newblock


\bibitem[\protect\citeauthoryear{Terhal}{Terhal}{2018}]%
        {terhal2018quantum}
\bibfield{author}{\bibinfo{person}{Barbara~M Terhal}.} \bibinfo{year}{2018}\natexlab{}.
\newblock \showarticletitle{Quantum supremacy, here we come}.
\newblock \bibinfo{journal}{\emph{Nature Physics}} \bibinfo{volume}{14}, \bibinfo{number}{6} (\bibinfo{year}{2018}), \bibinfo{pages}{530--531}.
\newblock


\bibitem[\protect\citeauthoryear{Trummer and Koch}{Trummer and Koch}{2015}]%
        {trummer2015multiple}
\bibfield{author}{\bibinfo{person}{Immanuel Trummer} {and} \bibinfo{person}{Christoph Koch}.} \bibinfo{year}{2015}\natexlab{}.
\newblock \showarticletitle{Multiple query optimization on the D-Wave 2X adiabatic quantum computer}.
\newblock \bibinfo{journal}{\emph{arXiv preprint arXiv:1510.06437}} (\bibinfo{year}{2015}).
\newblock


\bibitem[\protect\citeauthoryear{Uotila}{Uotila}{2022}]%
        {uotila2022synergy}
\bibfield{author}{\bibinfo{person}{Valter Uotila}.} \bibinfo{year}{2022}\natexlab{}.
\newblock \showarticletitle{Synergy between Quantum Computers and Databases}.
\newblock  (\bibinfo{year}{2022}).
\newblock


\bibitem[\protect\citeauthoryear{Wiebe, Kapoor, and Svore}{Wiebe et~al\mbox{.}}{2015}]%
        {wiebe2015quantum}
\bibfield{author}{\bibinfo{person}{Nathan Wiebe}, \bibinfo{person}{Ashish Kapoor}, {and} \bibinfo{person}{Krysta~M Svore}.} \bibinfo{year}{2015}\natexlab{}.
\newblock \showarticletitle{Quantum algorithms for nearest-neighbor methods for supervised and unsupervised learning}.
\newblock \bibinfo{journal}{\emph{Quantum Information \& Computation}} \bibinfo{volume}{15}, \bibinfo{number}{3-4} (\bibinfo{year}{2015}), \bibinfo{pages}{316--356}.
\newblock


\bibitem[\protect\citeauthoryear{Wille, Van~Meter, and Naveh}{Wille et~al\mbox{.}}{2019}]%
        {wille2019ibm}
\bibfield{author}{\bibinfo{person}{Robert Wille}, \bibinfo{person}{Rod Van~Meter}, {and} \bibinfo{person}{Yehuda Naveh}.} \bibinfo{year}{2019}\natexlab{}.
\newblock \showarticletitle{IBM’s Qiskit tool chain: Working with and developing for real quantum computers}. In \bibinfo{booktitle}{\emph{2019 Design, Automation \& Test in Europe Conference \& Exhibition (DATE)}}. IEEE, \bibinfo{pages}{1234--1240}.
\newblock


\bibitem[\protect\citeauthoryear{Xue and Hauskrecht}{Xue and Hauskrecht}{2017}]%
        {xue2017efficient}
\bibfield{author}{\bibinfo{person}{Yanbing Xue} {and} \bibinfo{person}{Milos Hauskrecht}.} \bibinfo{year}{2017}\natexlab{}.
\newblock \showarticletitle{Efficient learning of classification models from soft-label information by binning and ranking}. In \bibinfo{booktitle}{\emph{The Thirtieth International Flairs Conference}}.
\newblock


\bibitem[\protect\citeauthoryear{Yao}{Yao}{1981}]%
        {yao1981should}
\bibfield{author}{\bibinfo{person}{Andrew Chi-Chih Yao}.} \bibinfo{year}{1981}\natexlab{}.
\newblock \showarticletitle{Should tables be sorted?}
\newblock \bibinfo{journal}{\emph{Journal of the ACM (JACM)}} \bibinfo{volume}{28}, \bibinfo{number}{3} (\bibinfo{year}{1981}), \bibinfo{pages}{615--628}.
\newblock


\bibitem[\protect\citeauthoryear{Zhang and Korepin}{Zhang and Korepin}{2018}]%
        {zhang2018quantum}
\bibfield{author}{\bibinfo{person}{Kun Zhang} {and} \bibinfo{person}{Vladimir Korepin}.} \bibinfo{year}{2018}\natexlab{}.
\newblock \showarticletitle{Quantum partial search for uneven distribution of multiple target items}.
\newblock \bibinfo{journal}{\emph{Quantum Information Processing}} \bibinfo{volume}{17}, \bibinfo{number}{6} (\bibinfo{year}{2018}), \bibinfo{pages}{1--20}.
\newblock


\end{thebibliography}

\appendix

\begin{figure*}[tbp]
  \centering
  \hspace*{-1.5cm}
  \includegraphics[width=0.95\linewidth]{figure/legend3.pdf}
  
  \begin{tabular}{c c c c}
    \begin{minipage}[htbp]{0.255\linewidth}
      \hspace{-0.4cm}
      \includegraphics[width=\linewidth]{figure/Gowalla-1.pdf}
    \end{minipage}
    &
    \begin{minipage}[htbp]{0.255\linewidth}
      \hspace{-0.8cm}
      \includegraphics[width=\linewidth]{figure/Gowalla-2.pdf}
    \end{minipage}
    &
    \begin{minipage}[htbp]{0.255\linewidth}
      \hspace{-1.2cm}
      \includegraphics[width=\linewidth]{figure/Gowalla-q.pdf}
    \end{minipage}
    &
    \begin{minipage}[htbp]{0.255\linewidth}
      \hspace{-1.5cm}
      \includegraphics[width=\linewidth]{figure/Gowalla-i.pdf}
    \end{minipage} \\
    \hspace{-0.35cm} (a) \emph{Gowalla} & \hspace{-1.2cm} (b) \emph{Gowalla} & \hspace{-2.2cm} (c) \emph{Gowalla} & \hspace{-3.1cm} (d) \emph{Gowalla}
  \end{tabular}
  \caption{The Effect of $N$ on (a) Quantum Range Queries, (b) Multi-dimensional Range Queries (c) Dynamic Range Queries and (d) Insertions and Deletions into the Dynamic Data Structures} 
  \label{fig:exp:effect_n_g}
  \vspace*{-0.5cm}
\end{figure*}

\begin{figure}[tbp]
\vspace{-0.2cm}
  \centering
  \begin{minipage}[htbp]{\linewidth}
    \includegraphics[width=\linewidth]{figure/legend1.pdf}
    \vspace{-0.3cm}
   \end{minipage}
  \begin{tabular}{c c}
    \begin{minipage}[htbp]{0.44\linewidth}
      \includegraphics[width=\linewidth]{figure/Gowalla-1-s.pdf}
    \end{minipage}
    &
    \begin{minipage}[htpb]{0.44\linewidth}
      \includegraphics[width=\linewidth]{figure/Gowalla-1-B.pdf}
    \end{minipage} \\
    (a) \emph{Gowalla} &
    (b) \emph{Gowalla} \\
    \begin{minipage}[htbp]{0.44\linewidth}
      \includegraphics[width=\linewidth]{figure/Gowalla-1-s.pdf}
    \end{minipage}
    &
    \begin{minipage}[htpb]{0.44\linewidth}
      \includegraphics[width=\linewidth]{figure/Gowalla-1-B.pdf}
    \end{minipage} \\
    (c) \emph{Gowalla} &
    (d) \emph{Gowalla}
  \end{tabular}
  \caption{Effect of Selectivity and $B$ of (a) \& (b) Quantum Range Queries and (c) \& (d) Dynamic Quantum Range Queries} 
  \label{fig:effect_s_b_static_dynamic_g}
\end{figure}

\begin{figure}[tbp]
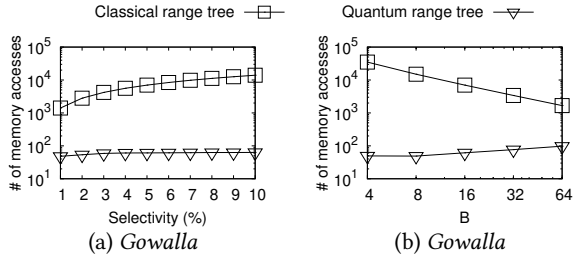

\vspace{-0.2cm}
  \centering
  \begin{minipage}[htbp]{\linewidth}
    \includegraphics[width=\linewidth]{figure/legend2.pdf}
    \vspace{-0.3cm}
   \end{minipage}
  \begin{tabular}{c c}
    \begin{minipage}[htbp]{0.44\linewidth}
      \includegraphics[width=\linewidth]{figure/Gowalla-2-s.pdf}
    \end{minipage}
    &
    \begin{minipage}[htpb]{0.44\linewidth}
      \includegraphics[width=\linewidth]{figure/Gowalla-2-B.pdf}
    \end{minipage} \\
    (a) \emph{Gowalla} &
    (b) \emph{Gowalla}
  \end{tabular}
  \caption{Effect of Selectivity and $B$ of Multi-dimensional Quantum Range Queries} 
  \label{fig:effect_s_b_md_g}
\end{figure}

\section{Remaining Experimental Results}
In this section, we show the complete experimental results of dataset \emph{Gowalla},
which are not presented in Section~\ref{exp} due to the space limit.

\section{Complete Proofs}
In this section, we show the complete proofs of all the lemmas and theorems in this paper.

\begin{proof}[\textbf{Proof of Lemma~\ref{lem:lower_bound_time_complexity}}]
Consider a static quantum range query $QUERY(x, x)$.
It answers whether the key $x$ exists in the dataset,
which is a static membership problem.
By \cite{sen2001lower}, the time complexity of answering a static membership problem
in quantum computers is $\Omega(\log N)$.
Therefore, the time complexity of answering a static quantum range query is also $\Omega(\log N)$.
\end{proof}

\begin{proof}[\textbf{Proof of Lemma~\ref{lem:global_classical}}]
    Since the routing keys of all nodes in the same level are disjoint,
    we cannot have more than two partial nodes in the same level.
    It is also easy to verify that the returned nodes are from the same level,
    Thus, the candidate set contains at most two nodes.
    Since we only filter out the outside node in this algorithm,
    we have $\mathcal{R}^* \subset \mathcal{R}$.
    Finally, since the precise partial node either is a leaf
    (which contains at least $1/B$ items in the query range),
    or contains an inside child
    (which contains at least $\frac{1}{4B}$ items in the query range due to the balanced nodes of the B+ tree),
    and it can be verify that at least one returned node is precise,
    we have $\vert \mathcal{R}^* \vert \geq \frac{1}{8B} \vert \mathcal{R} \vert$.
\end{proof}

\begin{proof}[\textbf{Proof of Lemma~\ref{the:time_complexity_gclq}}]
The global classical search costs $O(\log_B N)$ time because we return at most two candidates and the height of the tree is $O(\log_B N)$.
For the local quantum search, we repeat all the steps for $\frac{2B^{h-j+1}}{k}$ times on average.
By the condition to trigger a local quantum search, 
all the non-dummy key-record pairs below one of the children of $u$ and $v$ are all in the answer, therefore $k\geq \frac{1}{4}B^{h-j}$ by the definition of our quantum B+ tree. So, we need to repeat all the steps for at most $8B$ times, which is a constant time. In each iteration, we do Step 2 and Step 3 for at most $O(\log_B N)$ times, so the local quantum search needs $O(\log_B N)$ time.
Therefore, the quantum range query algorithm needs $O(\log_B N)$ time.
\end{proof}

\begin{proof}[\textbf{Proof of Theorem~\ref{lem:insertion_cost}}]
By Theorem \ref{theo1} in this paper and Theorem 3.1 in \cite{bentley1980decomposable}, $N$ insertions totally cost $O(N\log_B N)$ time. Therefore, the amortized cost of one insertion is $O(\log_B N)$.

To analyze the update cost in a merge, we have that when merging $F_i$, the time complexity to update $T_1$ is $O(\log_B N)$.
The reason is as follows.

Let $ID_l$ denote the least ID in $F_i$. For each $j<i$ and each key-record pair in $F_j$, the ID of the key-record pair is smaller than $ID_l$. Let $ID_r$ denote the greatest ID in $F_i$. Then, for each $j>i$ and each key-record pair in $F_j$, the ID of the key-record pair is greater than $ID_r$. Therefore, the update operations for the key-record pairs in $F_i$ can be merged into a range update. Then, we can use lazy propagation \cite{ibtehaz2021multidimensional} to do the range update in $O(\log_B N)$ time.

Thus, the extra cost to maintain $T_0$ and $T_1$ has no impact, 
which means the amortized cost of insertion is still $O(\log_B N)$.

For deletions, 
the first step costs $O(\log_B N)$ time, since it consists of two point queries in B+ trees.

Then consider the second step. Motivated by the analysis of partial rebuilding in \cite{overmars1983design}, we consider a node $u$ of height $h$ just after a rebuild. The node $u$ is perfectly balanced such that its weight $w(u)\geq \frac{1}{2}B^{h+1}$. Since it will become imbalanced if and only if $w(u)<\frac{1}{4}B^{h+1}$, there must be $\Omega(B^{h+1})$ deletions below the node $u$ or its siblings before that. So, it is charged $O(1)$ time for each deletion below it and its siblings. Then, for a deletion in a leaf node, each ancestor and its siblings are charged $O(1)$ time, so they are totally charged $O(\log_B N)$ time.

Therefore, the amortized cost of a deletion is $O(\log_B N)$.

\end{proof}

\begin{proof}[\textbf{Proof of Theorem~\ref{the:time_complexity_dynamic}}]
As above mentioned, the global classical search costs $O(\log^2_B N)$ time. Consider the cost of the local quantum search. The initialization and the steps to obtain the leaves cost $O(\log_B N)$. Then, we multiply it by the average post-selection times. Let $N'$ denote the total number of key-record pairs below the nodes in $L_0,\cdots, L_i$. Then, 
the post-selection needs to be iterated $O(N'/k)$ times on average. Since there are at most $2B\lfloor\log_B N\rfloor$ nodes in the lists and each node has a weight at most $B^{i+1}$, we have $N'\leq 2B^{i+2}\log_B N$. Before we turn to the local quantum search, we have found a child of a node in $L_i$ such that all the key-record pairs below the child are in the answer, such that $k\geq \frac{1}{4}B^i$. Therefore, $N'/k=O(\log_B N)$. Hence, the local quantum search costs $O(\log^2_B N)$ time on average.

Since both the global classical search and the local quantum search cost $O(\log_B^2 N)$ time, the dynamic quantum B+ tree answers a range query in $O(\log^2_B N)$ time on average.
\end{proof}

\begin{proof}[\textbf{Proof of Theorem~\ref{the:time_complexity_md}}]
The first step is to recursively search the quantum range trees to obtain the $O(\log_B^{d-1} N)$ $1$-dimensional quantum range trees which contain answers. It costs $O(\log_B^{d-1} N)$ time. The second step is to perform a global classical search on the $O(\log_B^{d-1} N)$ quantum B+ tree. It costs $O(\log_B^d N)$ time. The third step is to perform a local quantum search starting from $O(\log_B^{d-1} N)$ candidate nodes returned in the second step. We first initialize the quantum bits in $O(\log_B^{d-1} N)$ time. Then, we do the same Step 2 and Step 3 in Section \ref{local} to obtain the leaves below the candidate nodes in $O(\log_B N)$ time.
Thus, we need to do a constant number of post-selections. Therefore, this step costs $O(\log_B^{d-1} N)$ time.

Therefore, the quantum range tree answers a range query in $O(\log^d_B N)$ time on average.
\vspace{-1mm}
\end{proof}

\end{document}